\newtheorem {theorem} {Theorem}
\newtheorem {proposition} [theorem]{Proposition}
\newtheorem {corollary} [theorem]{Corollary}
\newtheorem {lemma}  [theorem]{Lemma}
\newtheorem {remark} [theorem]{Remark}
\newcommand{\R}{{\mathbb R}}
\newcommand{\V}{{\cal{V}}}
\newcommand{\U}{{\cal{U}}}
\begin{document}
\title{Kahan--Hirota--Kimura maps preserving\\ original cubic Hamiltonians}
\author{ V\'{\i}ctor Ma\~{n}osa$^{(1)}$ and Chara Pantazi$^{(2)}$
    \\*[0.1 truecm]
    \\*[-.1 truecm] {\small \textsl{$^{(1)}$ Departament de Matem\`{a}tiques (MAT),}}
     \\*[-0.1 truecm] {\small \textsl{Institut de Matem\`{a}tiques de la UPC-BarcelonaTech (IMTech),}}
    \\*[-0.1 truecm] {\small \textsl{Universitat Polit\`{e}cnica de Catalunya-BarcelonaTech (UPC)}}
    \\*[-0.1 truecm] {\small \textsl{Colom 11, 08222 Terrassa, Spain}}
    \\*[-0.1 truecm] {\small \textsl{victor.manosa@upc.edu}}
        \\*[-0.1 truecm] {\small \textsl{$^{(2)}$ Departament de Matem\`{a}tiques (MAT),}}
    \\*[-0.1 truecm] {\small \textsl{Universitat Polit\`{e}cnica de Catalunya-BarcelonaTech (UPC)}}
    \\*[-0.1 truecm] {\small \textsl{Doctor Mara\~{n}\'on 44-50,
08028 Barcelona, Spain}}
    \\*[-0.1 truecm] {\small \textsl{chara.pantazi@upc.edu}}}

\maketitle

\abstract{We study the class of cubic Hamiltonian vector fields whose associated Kahan--Hirota--Kimura (KHK) maps preserve the original Hamiltonian function. Our analysis focuses on these fields in $\mathbb{R}^2$ and $\mathbb{R}^4$, extending to a family of fields in $\mathbb{R}^6$. Additionally, we investigate various properties of these fields, including the existence of additional first integrals of a specific type, their role as Lie symmetries of the corresponding 
KHK map, and the symplecticity of these maps.}

\section{Introduction}\label{s:intro}

The Kahan--Hirota--Kimura discretization method (KHK, from now on) is a numerical method proposed, independently, by Kahan in \cite{kahan1993,KL97} and Hirota and Kimura \cite{hirota2000,kimura2000} to integrate quadratic vector fields. It is a one-step method, which is linearly implicit, and whose inverse is also linear implicit, so it defines a birational map $\Phi_h(\mathbf{x})$, where $\mathbf{x}\in\mathbb{R}^n$, which can be written as follows (see \cite{Celledoni2013}, for instance): given the vector field $X=\sum_{i=1}^n X_i\partial/\partial x_i$ with associated differential system
$$
\dot{\mathbf{x}}=X(\mathbf{x}),
$$
its associated KHK map is
\begin{equation}\label{e:phi}
\Phi_h(\mathbf{x})=\mathbf{x}+h\left(I-\frac{1}{2}h {\mathrm D}X({\bf{x}})\right)^{-1} X({\bf{x}}),
\end{equation}
where ${\mathrm D}X$ is the differential matrix of the field.

This method has been shown to be especially effective for the numerical integration of integrable quadratic systems in any finite dimension. This is because, on many occasions, the resulting associate discrete system (the KHK map), in turn, admits a first integral, see  \cite{Celledoni2013,Celledoni2019b} for instance. The
KHK maps are also important in the context of the theory of discrete integrability because, as other families like the QRT maps  introduced by Quispel, Roberts, and Thompson at the end of the 1980s \cite{QRT1,QRT2},  they exhibit an important display of geometric and algebraic-geometric properties related with their dynamics \cite{ASW,Celledoni2013,Celledoni2014,Cel17,
Celledoni2019b,
Kam19,Petrera2011,Petrera19-1,PS10,Petrera19-2,Petrera19-3,PZ17}.

In this paper, we will consider quadratic differential systems (vector fields) in $\R^n$  with $n=2,4,6$ possessing a cubic Hamiltonian function.  In \cite{Celledoni2013} it is proved that if the vector field has a cubic Hamiltonian, then its associated KHK map has a conserved quantity or first integral:

\begin{theorem}[Celledoni, McLachlan, Owren, Quispel, \cite{Celledoni2013}]\label{t:teoham}
For all cubic Hamiltonian systems of the form $X_H=K\nabla H$, where $K$ is a constant anti-symmetric matrix, on symplectic vector spaces and on all Poisson vector spaces with constant Poisson structure,
the KHK method has a first integral given by the modified Hamiltonian
\begin{equation}\label{e:modHam}
 \widetilde{H}({\bf{x}})=H({\bf{x}})+\frac{1}{3}h\nabla H({\bf{x}})^T\left(I-\frac{1}{2}h {\mathrm D}X({\bf{x}})\right)^{-1} X({\bf{x}}).
\end{equation}

\end{theorem}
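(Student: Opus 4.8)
The plan is to show directly that $\widetilde H$ is invariant under $\Phi_h$ by exploiting a discrete symmetry of the KHK construction, rather than attempting to expand the iterated map. The first observation I would record is that the KHK map \eqref{e:phi} has a well-known time-reversal property: the inverse map $\Phi_h^{-1}$ is obtained from $\Phi_h$ simply by replacing $h$ with $-h$, i.e. $\Phi_{-h} = \Phi_h^{-1}$. Equivalently, if $\mathbf{y} = \Phi_h(\mathbf{x})$, then $\mathbf{x} = \Phi_{-h}(\mathbf{y})$, and from \eqref{e:phi} one gets the symmetric implicit form
\begin{equation*}
\mathbf{y} - \mathbf{x} = \tfrac{h}{2}\bigl(X(\mathbf{x}) + X(\mathbf{y})\bigr) + \tfrac{h}{2}\bigl(\mathrm{D}X(\mathbf{x})\,\mathbf{y}\,\text{-terms}\bigr),
\end{equation*}
but more usefully the bilinear identity that is standard for KHK maps of a quadratic field: writing $X$ via its symmetric bilinear form $Q$, so that $X(\mathbf{x}) = Q(\mathbf{x},\mathbf{x}) + \text{lower order}$, one has $\mathbf{y}-\mathbf{x} = h\,Q(\mathbf{x},\mathbf{y}) + (\text{linear part applied symmetrically})$. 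This symmetric characterization is the key structural input.

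Next I would rewrite the correction term in \eqref{e:modHam} in a more symmetric way. Setting $\mathbf{y}=\Phi_h(\mathbf{x})$, note that $h\left(I-\tfrac12 h\,\mathrm{D}X(\mathbf{x})\right)^{-1}X(\mathbf{x}) = \mathbf{y}-\mathbf{x}$, so the modified Hamiltonian reads $\widetilde H(\mathbf{x}) = H(\mathbf{x}) + \tfrac13\,\nabla H(\mathbf{x})^T(\mathbf{y}-\mathbf{x})$. Using $X_H = K\nabla H$ with $K$ antisymmetric and constant, I would express $\nabla H(\mathbf{x})^T(\mathbf{y}-\mathbf{x})$ in terms of the symplectic pairing of $\nabla H$ with $X_H$ evaluated at appropriate points. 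Then $\widetilde H(\Phi_h(\mathbf{x})) = \widetilde H(\mathbf{y}) = H(\mathbf{y}) + \tfrac13\,\nabla H(\mathbf{y})^T(\Phi_h(\mathbf{y}) - \mathbf{y})$, and applying the time-reversal symmetry in the form $\Phi_h^{-1}(\mathbf{y}) = \mathbf{x}$ combined with the $h\to-h$ antisymmetry of the correction, I would aim to rewrite $\nabla H(\mathbf{y})^T(\Phi_h(\mathbf{y})-\mathbf{y})$ against $\nabla H(\mathbf{y})^T(\mathbf{y}-\mathbf{x})$. The goal is to reduce the claim $\widetilde H(\mathbf{y}) = \widetilde H(\mathbf{x})$ to the identity
\begin{equation*}
H(\mathbf{y}) - H(\mathbf{x}) = \tfrac13\,\nabla H(\mathbf{x})^T(\mathbf{y}-\mathbf{x}) + \tfrac13\,\nabla H(\mathbf{y})^T(\mathbf{y}-\mathbf{x}),
\end{equation*}
where I have used that the correction term for $\mathbf{y}$, after reversing time, pairs $\nabla H(\mathbf{y})$ with $-(\mathbf{x}-\mathbf{y}) = \mathbf{y}-\mathbf{x}$ — this needs the antisymmetry argument to be done carefully, using that $\nabla H(\mathbf{y})^T K^{-1}$-type pairings vanish because $K$ is antisymmetric, so that $\nabla H(\mathbf{x})^T(\mathbf{y}-\mathbf{x}) = \nabla H(\mathbf{x})^T K \nabla H(\mathbf{w})$ is symmetric/antisymmetric in the right way.

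The remaining identity is a purely algebraic fact about cubic polynomials: if $H$ is a homogeneous cubic (or, more generally with the affine terms handled separately, since lower-degree parts contribute controllably), then for any two points $\mathbf{x},\mathbf{y}$,
\begin{equation*}
H(\mathbf{y}) - H(\mathbf{x}) = \tfrac13\bigl(\nabla H(\mathbf{x}) + \nabla H(\mathbf{y})\bigr)^T(\mathbf{y}-\mathbf{x}) + (\text{a term that vanishes by the KHK relation}).
\end{equation*}
For a homogeneous cubic the trapezoidal rule applied to $\nabla H$ along the segment from $\mathbf{x}$ to $\mathbf{y}$ has an explicit error expressible through the symmetric trilinear form, and that error is exactly of the form $c\,h\cdot(\text{trilinear form in }\nabla H$'s$)$ which vanishes precisely because $\mathbf{y}-\mathbf{x}$ solves the bilinear KHK equation $\mathbf{y}-\mathbf{x} = h\,Q(\mathbf{x},\mathbf{y})+\cdots$ and $K$ is antisymmetric (a vector paired with itself through an antisymmetric form is zero). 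I expect the main obstacle to be precisely this last bookkeeping step: correctly matching the cubic Taylor remainder of $H$ between $\mathbf{x}$ and $\mathbf{y}$ against the specific bilinear structure of $\mathbf{y}-\mathbf{x}$ coming from \eqref{e:phi}, and verifying that all terms not of the "antisymmetric pairing equals zero" type cancel — including carefully treating the non-homogeneous (quadratic, linear, constant) parts of $H$, for which the trapezoidal identity is either exact or produces terms killed by the same antisymmetry. Once that algebraic identity is in hand, invariance of $\widetilde H$ follows immediately.
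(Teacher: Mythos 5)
The paper does not prove Theorem \ref{t:teoham}: it is quoted from \cite{Celledoni2013} and used as a black box, so there is no internal proof to compare yours against. Judged on its own terms (and against the argument in the cited reference), your outline assembles the right ingredients --- the polarized/bilinear form of the Kahan update, the reversibility $\Phi_h^{-1}=\Phi_{-h}$, exactness of low-order quadrature for cubic $H$, and vanishing of quadratic forms built from odd products of $K$ with the Hessian --- and the identity you reduce to, $H(\mathbf{y})-H(\mathbf{x})=\tfrac13\bigl(\nabla H(\mathbf{x})+\nabla H(\mathbf{y})\bigr)^T(\mathbf{y}-\mathbf{x})$ for $\mathbf{y}=\Phi_h(\mathbf{x})$, is exactly the right target. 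But as written the argument has two genuine gaps, and one of them would, if taken literally, prove a false statement.

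First, you claim the trapezoidal error term (the trilinear remainder $D^3H[\mathbf{v},\mathbf{v},\mathbf{v}]$ with $\mathbf{v}=\mathbf{y}-\mathbf{x}$) \emph{vanishes} by antisymmetry. It does not: if it vanished you would obtain the coefficient $\tfrac12$ instead of $\tfrac13$, contradicting your own target identity. The correct mechanism is that the polarized Kahan relation gives $\mathbf{v}=hK\mathbf{w}$ with $\mathbf{w}=2\nabla H(\tfrac{\mathbf{x}+\mathbf{y}}{2})-\tfrac12(\nabla H(\mathbf{x})+\nabla H(\mathbf{y}))$, so $\mathbf{w}^T\mathbf{v}=h\,\mathbf{w}^TK\mathbf{w}=0$; this forces $D^3H[\mathbf{v},\mathbf{v},\mathbf{v}]=2(\nabla H(\mathbf{x})+\nabla H(\mathbf{y}))^T\mathbf{v}$, i.e.\ the remainder folds back into the trapezoidal sum and converts $\tfrac12$ into $\tfrac12-\tfrac16=\tfrac13$. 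Second, to pass from the sum identity to $\widetilde H(\mathbf{y})=\widetilde H(\mathbf{x})$ you need the paired correction $c(h,\mathbf{y}):=\nabla H(\mathbf{y})^T(\Phi_h(\mathbf{y})-\mathbf{y})$ to be an \emph{even} function of $h$, so that $c(h,\mathbf{y})=c(-h,\mathbf{y})=\nabla H(\mathbf{y})^T(\mathbf{x}-\mathbf{y})$; you instead invoke ``$h\to-h$ antisymmetry,'' which would land you on $\nabla H(\mathbf{x})-\nabla H(\mathbf{y})$ rather than the sum. The evenness is not automatic (the vector $\Phi_h(\mathbf{y})-\mathbf{y}$ itself has no parity) and must be proved by expanding $(I-\tfrac h2 KH'')^{-1}$ and checking that $(KH'')^mK$ is antisymmetric exactly for even $m$, so only even powers of $h$ survive in $c(h,\mathbf{y})$. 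Your appeal to ``$\nabla H^TK^{-1}$-type pairings'' is also unavailable in the stated generality, since a constant Poisson tensor $K$ need not be invertible. With these two points repaired your plan does close and coincides with the proof in \cite{Celledoni2013}.
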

Recall that a map $\Phi: \U\subset\R^n\rightarrow \R^n$ admits a \emph{first integral}  $H:\V\subseteq \U\longrightarrow \mathbb{R}$  defined in an open set $\V$ if $H(\Phi({\bf{x}}))=H({\bf{x}}),$ for all ${\bf{x}}\in \V.$

As a consequence of the above result, for a vector field with a Hamiltonian $H$ satisfying the hypothesis of Theorem \ref{t:teoham},  $H$ is also a first integral of its associated KHK map if and only if it holds

\begin{equation}\label{e:restriction}
  \nabla H({\bf{x}})^T\left(I-\frac{1}{2}h{\mathrm D}X({\bf{x}})\right)^{-1} X({\bf{x}})=0,
\end{equation}
in this case we say that \emph{the KHK map preserves the original Hamiltonian}.

Our objective is to investigate which vector fields with a cubic Hamiltonian have their original first integral preserved under the KHK discretization. Our analysis will be limited to even-dimensional spaces with the canonical Poisson structure, \cite{Celledoni2013}.  As we will see, the set of quadratic Hamiltonian vector fields satisfying \eqref{e:restriction} is non-empty and, for two or more degrees of freedom, non-trivial.
Our initial purpose was to examine the relation between the fact that the KHK maps preserve a given cubic Hamiltonian and the fact that the original Hamiltonian vector field is a Lie symmetry of the map (see definition in Section 2). Although all the fields that we have found with one or two degrees of freedom are, indeed, Lie symmetries, we have been able to detect several quadratic Hamiltonian fields with three degrees of freedom, whose KHK maps preserve the initial Hamiltonian but which are not Lie symmetries
of these maps.

The main results are Theorems \ref{t:Hamiltonians de R2} and \ref{t:Hamiltonians de R4}, where the set of cubic Hamiltonian vector fields  with one or two degrees of freedom and whose associated KHK maps preserve the original Hamiltonian are characterized, and Proposition \ref{p:propor6} which establishes that in the case of three degrees of freedom there are examples whose vector fields are not Lie symmetries of their corresponding KHK maps. These results can be found in Sections \ref{s:r2}, \ref{s:r4} and \ref{s:r6} respectively. In Section \ref{s:add}  we explore the symplecticity of the maps considered in this work. In the final section, we outline key conclusions and suggest potential directions for future research.

\section{Preliminaries and some definitions}\label{s:defi}
In this section we recall some definitions, as well as  result in \cite{Celledoni2013}, and we obtain and a direct consequence of this result.

A map $\Phi$ \emph{preserves a measure} that is absolutely continuous with respect to Lebesgue measure and has a non-vanishing density $\nu$ if $m(\Phi^{-1}(B)) = m(B)$ for any Lebesgue measurable set $B\in \U\subseteq \R^n$, where $ m(B)=\int_{B} \nu(x,y)\,d{\bf{x}}$ with $d{\bf{x}}=dx_1\wedge\cdots\wedge dx_n$, which implies that
\begin{equation}	\label{e:mesura}
\nu(\Phi({\bf{x}}))|{\mathrm D}\Phi({\bf{x}})|=\nu({\bf{x}}),
\end{equation}
where $|{\mathrm D}\Phi({\bf{x}})|$ is the determinant of the differential matrix ${\mathrm D}\Phi$.
In \cite{Celledoni2013} it is proved the following result:

\begin{proposition}[\cite{Celledoni2013}]\label{p:measure}
For all cubic Hamiltonian systems on symplectic vector spaces and on all Poisson vector spaces with constant Poisson structure,
the KHK maps have an invariant measure with density
$$
 \nu({\bf{x}})=\frac{1}{\left|I-\frac{1}{2}h {\mathrm D}X({\bf{x}})\right|}.
$$
\end{proposition}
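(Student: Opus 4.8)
The plan is to verify the invariant-measure identity \eqref{e:mesura} directly: with $\nu(\mathbf x)=1/\left|I-\frac12 h\,{\mathrm D}X(\mathbf x)\right|$ we must check that $\nu(\Phi_h(\mathbf x))\,|{\mathrm D}\Phi_h(\mathbf x)|=\nu(\mathbf x)$. Everything hinges on a closed formula for the Jacobian of a KHK map, so the first step is to obtain it. Write a quadratic field componentwise as $X_i(\mathbf x)=\sum_{j,k}a^i_{jk}x_jx_k+\sum_j b^i_jx_j+c^i$ with $a^i_{jk}=a^i_{kj}$, and introduce the symmetric map $\widetilde X(\mathbf x,\mathbf y)$, affine in each argument, given by $\widetilde X_i(\mathbf x,\mathbf y)=\sum_{j,k}a^i_{jk}x_jy_k+\frac12\sum_j b^i_j(x_j+y_j)+c^i$. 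One checks that \eqref{e:phi} is equivalent to the implicit relation $\mathbf x'-\mathbf x=h\,\widetilde X(\mathbf x,\mathbf x')$ with $\mathbf x'=\Phi_h(\mathbf x)$, and --- crucially --- that ${\mathrm D}_2\widetilde X(\mathbf x,\mathbf y)=\frac12{\mathrm D}X(\mathbf x)$ and ${\mathrm D}_1\widetilde X(\mathbf x,\mathbf y)=\frac12{\mathrm D}X(\mathbf y)$, each partial being independent of the other slot.

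Differentiating $\mathbf x'-\mathbf x=h\,\widetilde X(\mathbf x,\mathbf x')$ with respect to $\mathbf x$ and solving for ${\mathrm D}\Phi_h$ then gives
\[
{\mathrm D}\Phi_h(\mathbf x)=\left(I-\tfrac12 h\,{\mathrm D}X(\mathbf x)\right)^{-1}\left(I+\tfrac12 h\,{\mathrm D}X(\Phi_h(\mathbf x))\right),
\]
so that $|{\mathrm D}\Phi_h(\mathbf x)|=\left|I+\frac12 h\,{\mathrm D}X(\Phi_h(\mathbf x))\right|\big/\left|I-\frac12 h\,{\mathrm D}X(\mathbf x)\right|$. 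Substituting this together with the expression for $\nu$ into \eqref{e:mesura}, the desired identity collapses to
\[
\left|I+\tfrac12 h\,{\mathrm D}X(\mathbf z)\right|=\left|I-\tfrac12 h\,{\mathrm D}X(\mathbf z)\right|,\qquad \mathbf z=\Phi_h(\mathbf x),
\]
which we shall in fact establish for every $\mathbf z$.

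Here the Hamiltonian structure is what makes the computation close. For $X=X_H=K\nabla H$ with $K$ a constant antisymmetric matrix and $H$ cubic, ${\mathrm D}X(\mathbf z)=K\,S(\mathbf z)$ where $S(\mathbf z)={\mathrm{Hess}}\,H(\mathbf z)$ is symmetric. Using $K^{\mathsf T}=-K$, $S^{\mathsf T}=S$, and Sylvester's determinant identity $\det(I+AB)=\det(I+BA)$, for every scalar $t$ one gets
\[
\det(I-tKS)=\det\!\big((I-tKS)^{\mathsf T}\big)=\det(I-tSK^{\mathsf T})=\det(I+tSK)=\det(I+tKS),
\]
with no invertibility hypothesis on $K$ (so the argument covers both the symplectic and the general constant-Poisson case); taking $t=h/2$ yields the identity above for all $\mathbf z$, in particular at $\mathbf z=\Phi_h(\mathbf x)$, and completes the proof. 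I expect the only delicate point to be the Jacobian computation: carefully tracking the two arguments of $\widetilde X$ and verifying that its partials collapse to $\frac12{\mathrm D}X$ of a single argument. Once that formula is in hand (it is also recorded in \cite{Celledoni2013}), the rest is the short determinant manipulation above.
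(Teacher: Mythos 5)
Your proof is correct. The paper itself states Proposition \ref{p:measure} as an imported result from \cite{Celledoni2013} without reproducing a proof, and your argument is essentially the standard one from that reference: the polarization $\widetilde X$ yields the Jacobian factorization ${\mathrm D}\Phi_h(\mathbf x)=\bigl(I-\tfrac12 h\,{\mathrm D}X(\mathbf x)\bigr)^{-1}\bigl(I+\tfrac12 h\,{\mathrm D}X(\Phi_h(\mathbf x))\bigr)$, and the identity $\det(I-tKS)=\det(I+tKS)$ for $K$ antisymmetric and $S$ symmetric (valid without any invertibility assumption on $K$, so it covers the constant-Poisson case) closes the measure identity \eqref{e:mesura}.
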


Given a differentiable map $\Phi$
defined in an open set $\U\in\R^n$, a \emph{Lie symmetry} of $\Phi$ is a vector field $X$, defined in
$\U,$ such that $\Phi$ maps any orbit of the differential system
\begin{equation}\label{e:edo}
\dot{\bf{x}}=X({\bf{x}}),
\end{equation}
 into another orbit of the system. Equivalently, it is a {vector field}
such that the differential equation (\ref{e:edo}) is invariant by the change of variables $\mathbf{u}=\Phi(\mathbf{x})$. Such a vector field is characterized by the compatibility
equation
\begin{equation}\label{e:Lie-sym-char}
X_{|\Phi({\bf{x}})}={\mathrm D}\Phi({\bf{x}})\,X({\bf{x}}),
\end{equation} for $\bf{x}\in \U$,
where
$X_{|\Phi({\bf{x}})}$ means the vector field evaluated at $\Phi({\bf{x}})$. See  \cite{haggar}, for more details.

From a dynamical viewpoint, if $\Phi$ preserves a given  orbit $\gamma$ of $X$, there exists $\tau\geq 0$ such that $\Phi(p)=\varphi(\tau,p)$, for all
$p\in \gamma$, where $\varphi$  is the flow of $X$ (i.e. $\tau$ only depends on $\gamma$).  As a consequence of this fact, the action of $\Phi$ over the points in $\gamma$ is linear \cite[Theorem 1]{CGM08}.

According to Theorem 12 of \cite{CGM08}, we have that if $\Phi$ is \emph{a planar map} with a first integral $H$, then the vector field
$X(\mathbf{x})=\mu(\mathbf{x})\left(-H_y,H_x\right)$ is a Lie symmetry of $\Phi$ if and only if the map has an invariant measure with density $\nu=1/\mu$. As a consequence of this fact and of  Proposition \ref{p:measure}, we obtain that all the \emph{planar}  KHK maps that come from a discretization of cubic Hamiltonians have a Lie symmetry:

\begin{proposition}\label{p:propo3} Let $\Phi_h$ be a KHK map associated to a \emph{planar} Hamilonian vector field $X$ with a cubic Hamiltonian $H$. Then, the vector field
$$
Y(x,y)=\left|I-\frac{1}{2}h {\mathrm D}X(x,y))\right|\,\left(-\widetilde{H}_y,\widetilde{H}_x\right),
$$
where $\widetilde{H}$ is given by \eqref{e:modHam} is a Lie symmetry of $\Phi_h$.
\end{proposition}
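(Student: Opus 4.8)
The plan is to obtain the statement as a direct consequence of three facts already on the table: that $\widetilde H$ is a first integral of $\Phi_h$ (Theorem~\ref{t:teoham}), that $\Phi_h$ carries the explicit invariant measure of Proposition~\ref{p:measure}, and the characterization of planar Lie symmetries of the form $\mu\,(-H_y,H_x)$ given by Theorem~12 of~\cite{CGM08}. The only real observation is that the density produced by Proposition~\ref{p:measure} is precisely the reciprocal of the multiplier demanded by that characterization.

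In detail, I would proceed as follows. Since $X$ is a planar cubic Hamiltonian vector field $X=K\nabla H$ with $K$ the canonical $2\times 2$ anti-symmetric Poisson matrix, Theorem~\ref{t:teoham} applies and gives that the modified Hamiltonian $\widetilde H$ of~\eqref{e:modHam} is a first integral of $\Phi_h$. Next, Proposition~\ref{p:measure} guarantees that $\Phi_h$ preserves the measure with density $\nu(x,y)=1/\bigl|I-\tfrac12 h\,{\mathrm D}X(x,y)\bigr|$, which is smooth and non-vanishing on the open set where $\det\bigl(I-\tfrac12 h\,{\mathrm D}X\bigr)\neq 0$; in particular on the whole domain of $\Phi_h$ for small $h$, up to a proper algebraic subset. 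Setting $\mu(x,y):=\bigl|I-\tfrac12 h\,{\mathrm D}X(x,y)\bigr|=1/\nu(x,y)$, Theorem~12 of~\cite{CGM08} applied to the planar map $\Phi_h$, to its first integral $\widetilde H$, and to this $\mu$ yields at once that $Y=\mu\,(-\widetilde H_y,\widetilde H_x)$ is a Lie symmetry of $\Phi_h$, which is exactly the vector field in the statement.

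No genuinely new computation is needed, so the points requiring care are just the hypotheses of the cited theorem. First, one should confirm that $\widetilde H$ is a bona fide, non-constant first integral (otherwise the conclusion is vacuous); this is immediate since $\widetilde H=H+O(h)$ with $H$ a nontrivial cubic, hence non-constant for small $h$. Second, one must restrict to the open set where $\mu$ (equivalently $\nu$) does not vanish, i.e.\ off the zero set of $\det(I-\tfrac12 h\,{\mathrm D}X)$, which is where $\Phi_h$ is defined and smooth in any case. Third, a small bookkeeping check of sign conventions is needed so that the symplectic gradient $(-\widetilde H_y,\widetilde H_x)$ matches the normalization used in Theorem~12 of~\cite{CGM08}. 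Matching these, the statement follows; the main ``obstacle'' is conceptual recognition rather than technical effort, namely seeing that Proposition~\ref{p:measure} supplies exactly $\nu=1/\mu$.
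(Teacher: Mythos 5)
Your argument is correct and is exactly the paper's own route: the paper derives Proposition~\ref{p:propo3} by combining Theorem~\ref{t:teoham} (so that $\widetilde{H}$ is a first integral of $\Phi_h$), Proposition~\ref{p:measure} (the invariant measure with density $\nu=1/\left|I-\tfrac{1}{2}h\,{\mathrm D}X\right|$), and Theorem~12 of \cite{CGM08} with $\mu=1/\nu$. Your additional remarks on non-vanishing of the density and sign conventions are sensible housekeeping but do not change the argument.
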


\section{One-degree of freedom systems}\label{s:r2}
In this section, we consider the planar (one-degree of freedom) Hamiltonian vector field
$$
X=-\frac{\partial H}{\partial y} \frac{\partial}{\partial x}+\frac{\partial H}{\partial x}\frac{\partial}{\partial y},
$$
whose associated system is
\begin{equation}\label{e:ham2}
 \dot{x}=-\dfrac{\partial H}{\partial y}=-H_y, \qquad
\dot{y}=\dfrac{\partial H}{\partial x}=H_x,
\end{equation}
with  cubic Hamiltonian function
\begin{equation}\label{e:genham}
H(x,y)=\displaystyle{\sum\limits_{0< i+j\leq 3}} a_{ij}x^iy^j.
 \end{equation}
The main result of the section is the following:

\begin{theorem}\label{t:Hamiltonians de R2}
The Hamiltonian vector fields of $\R^2$ with Hamiltonian functions of degree at most three, given by \eqref{e:genham}, for which their associated KHK maps preserve the original Hamiltonian, are:
\begin{enumerate}[(a)]
    \item The vector field $X_1$, with associated system
    $$\dot{x}=-\left(a_{11} x +2 a_{02} y + a_{01}\right),\,\quad
    \dot{y}=\dfrac{a_{11}}{2 a_{02}}\left(a_{11} x +2 a_{02} y + a_{01}\right),
    $$
    and Hamiltonian
    $H_1(x,y)=\frac{1}{4 a_{02}}\left(a_{11} x +2 a_{02} y\right) \left(a_{11} x +2 a_{02} y+2 a_{01}\right),$ whose associated KHK map is $$\Phi_1(x,y)= \left(\begin{array}{c}
\left(-h a_{11}+1\right) x -2 h a_{02} y-h a_{01}
\\
 \frac{h a_{11}^{2} x}{2 a_{02}}+\left(h a_{11}+1\right) y +\frac{h a_{01} a_{11}}{2 a_{02}}
\end{array}\right).
$$
The functions $(H_{1})_{x}$ and $(H_{1})_{y}$ are also first integrals of $X_1$, but functionally dependent on~$H_1$.
        \item The vector field $X_2$, with associated system
            $$\dot{x}=0,\, \quad
    \dot{y}=3  a_{30} x^{2}+2 a_{20}x+a_{10},
    $$
    and Hamiltonian $H_2(x,y)=x \left(x^{2} a_{30}+a_{20} x+a_{10}\right)$ whose associated KHK map is
    $$\Phi_2(x,y)=\left(\begin{array}{c}
x
\\
 3 h \,x^{2} a_{30}+2 h a_{20} x+h a_{10}+y
\end{array}\right).
$$
 The function $(H_{2})_{x}$ is also a first integral of $X_2$, but functionally dependent on~$H_2$.

      \smallskip
    \item The vector field $X_3$, with associated system
    $$
    \dot{x}=-a_{01},\, \quad \dot{y}=a_{10},
    $$
    and Hamiltonian $H_3(x,y)=a_{10} x +a_{01} y$ whose associated KHK map is $
\Phi_3(x,y)=\left(\begin{array}{c}
x-h a_{01}
\\
 y+h a_{10}
\end{array}\right).
$

    \smallskip

        \item The vector field $X_4$, with associated system
    $$
\dot{x}=-3 a_{03} y^{2}-2 a_{02} y-a_{01},\, \quad \dot{y}=0,
    $$
    and Hamiltonian $H_4(x,y)=y \left(a_{03} y^{2}+a_{02} y+a_{01}\right)$ whose associated KHK map is $$
\Phi_4(x,y)=\left(\begin{array}{c}
-3 h \,a_{03} y^{2}-2 h a_{02} y-h a_{01}+x
\\
 y
\end{array}\right).
$$
 The function $(H_{4})_{y}$ is also a first integral of $X_4$, but functionally dependent on~$H_4$.

    \smallskip
\item The vector field $X_5$, with associated system
    $$
\dot{x}=-\frac{1}{3 a_{03} a_{12}}P(x,y),\, \quad
\dot{y}=\frac{1}{9 a_{03}^{2}}P(x,y),
    $$where $P(x,y)=a_{12}^{3}x^{2} +6 a_{03} a_{12}^{2}x y +9 a_{03}  a_{12}y^{2}+3  a_{03} a_{11} a_{12}x+9  a_{03}^{2} a_{11}y+9 a_{03}^{2} a_{10}$,
    and Hamiltonian \begin{align*}
    H_5(x,y)=&\frac{1}{54 a_{03}^{2} a_{12}}\left(a_{12} x +3  a_{03}y\right)\left(2  a_{12}^{3}x^{2}+12  a_{03} a_{12}^{2}x y+18 a_{03}  a_{12}y^{2}\right.\\
&\left.+9  a_{03} a_{11} a_{12}x+27  a_{03}^{2} a_{11}y+54 a_{03}^{2} a_{10}\right)
\end{align*}
whose associated KHK map is
$$
\Phi_5(x,y)=\left(\begin{array}{c}
-\frac{h a_{12}^{2} x^{2}}{3 a_{03}}+\left(-2 h  a_{12}y-h a_{11}+1\right) x -3 h \,a_{03} y^{2}-
\frac{3 h a_{11} a_{03} y}{a_{12}}-\frac{3 h a_{10} a_{03}}{a_{12}}
\\
 \frac{h a_{12}^{3} x^{2}}{9 a_{03}^{2}}+\left(\frac{2 h a_{12}^{2} y}{3 a_{03}}+\frac{h a_{11} a_{12}}{3 a_{03}}\right) x +h a_{12} y^{2}+\left(h a_{11}+1\right) y +
h a_{10}
\end{array}\right).
$$
The functions $(H_{5})_{x}$ and $(H_{5})_{y}$ are also first integrals of $X_5$ but functionally dependent on~$H_5$.
\end{enumerate}
Furthermore, each vector field $X_i$, $i=1,\ldots,5$ is a Lie symmetry of the corresponding map $\Phi_i$.

\end{theorem}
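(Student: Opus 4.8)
The plan is to (i) rewrite the preservation condition \eqref{e:restriction} as an explicit differential condition on $H$, (ii) solve that condition among cubics, and (iii) read off the remaining assertions (the explicit maps, the extra first integrals, the Lie symmetry property) from the structure of the solutions.

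First I would reduce \eqref{e:restriction}. Since the planar field \eqref{e:ham2} is divergence free, $\mathrm{tr}\,\mathrm{D}X\equiv 0$, so $\bigl|I-\tfrac12 h\,\mathrm{D}X\bigr|=1+\tfrac{h^2}{4}(H_{xx}H_{yy}-H_{xy}^2)$ is nonzero for small $h$, and $\bigl(I-\tfrac12 h\,\mathrm{D}X\bigr)^{-1}$ equals $1/\bigl|I-\tfrac12 h\,\mathrm{D}X\bigr|$ times the (explicit) adjugate of the $2\times2$ matrix $I-\tfrac12 h\,\mathrm{D}X$. Substituting $\nabla H=(H_x,H_y)^T$ and $X=(-H_y,H_x)^T$ and expanding, the terms independent of $h$ cancel and one finds
\[
\nabla H^T\Bigl(I-\tfrac12 h\,\mathrm{D}X\Bigr)^{-1}X \;=\; -\frac{h}{2}\,\frac{H_{xx}H_y^2-2H_{xy}H_xH_y+H_{yy}H_x^2}{\bigl|I-\tfrac12 h\,\mathrm{D}X\bigr|}.
\]
Hence, for every $h\neq 0$, the KHK map preserves $H$ precisely when $H_{xx}H_y^2-2H_{xy}H_xH_y+H_{yy}H_x^2\equiv 0$; this is independent of $h$, and --- being the numerator of the curvature of the level curves of $H$ --- it says exactly that every level curve of $H$ is a straight line.

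Next I would impose this identity on the general cubic \eqref{e:genham}. Because $H$ is cubic the left-hand side is a polynomial of degree at most $5$ in $(x,y)$, and vanishing of all its coefficients is a polynomial system in the nine unknowns $a_{ij}$. The hard part will be the exhaustive solution of this system: it is the technical heart of the theorem and the step most prone to missed or duplicated branches, so I would run it through a computer algebra system (a Gröbner basis or primary decomposition of the associated ideal) and organise the by-hand discussion by branching first on the leading monomials of the cubic part (whether $a_{30}$ or $a_{03}$ vanish) and then on the degree in the resulting linear form. I expect this to return exactly the five components (a)--(e), each of the form $H=p(\ell)$ for a linear form $\ell=\alpha x+\beta y$ and a one-variable $p$ of degree at most three --- in agreement with the geometric reading above --- with the stated rational parametrizations (the apparent denominators merely pin down the branch on which a given $a_{ij}\neq0$). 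The converse inclusion, that each listed field satisfies \eqref{e:restriction}, is then a one-line check.

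Finally, on each family the representation $H_i=p_i(\ell_i)$ makes everything transparent: $\nabla H_i=p_i'(\ell_i)\nabla\ell_i$ with $\nabla\ell_i$ constant, so the Hessian of $H_i$ is a constant, trace-free, rank-$\le1$ matrix; hence $\mathrm{D}X_i$ is nilpotent with $\mathrm{D}X_i^2=0$, $\bigl|I-\tfrac12 h\,\mathrm{D}X_i\bigr|=1$, and $\mathrm{D}X_i\,X_i=0$ because $X_i$ is proportional to the constant vector $(-(\ell_i)_y,(\ell_i)_x)\in\ker\mathrm{D}X_i$. Then \eqref{e:phi} collapses to the explicit Euler step $\Phi_i({\bf x})={\bf x}+hX_i({\bf x})$, which yields the tabulated maps by direct substitution. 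For the additional first integrals, $(H_i)_x$ and $(H_i)_y$ are constant multiples of $p_i'(\ell_i)$, hence functions of $\ell_i$ alone and so functionally dependent on $H_i=p_i(\ell_i)$, while any $C^1$ function $G$ with $\nabla G\parallel\nabla H_i$ is a first integral of $X_i$ since $\dot G=\nabla G\cdot X_i=0$ (in families (b), (c), (d) one of the two partials is constant, which is why only the nontrivial one is recorded). For the Lie symmetry I would invoke Proposition \ref{p:propo3}: since \eqref{e:restriction} holds, the modified Hamiltonian coincides with $H_i$, so the Lie symmetry produced there is $\bigl|I-\tfrac12 h\,\mathrm{D}X_i\bigr|\,(-(H_i)_y,(H_i)_x)=X_i$; equivalently, one checks \eqref{e:Lie-sym-char} directly from $\mathrm{D}\Phi_i\,X_i=(I+h\,\mathrm{D}X_i)X_i=X_i$ together with $\ell_i\circ\Phi_i=\ell_i$, which forces $X_i\circ\Phi_i=X_i$.
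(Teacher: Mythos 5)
Your proposal is correct, and on the decisive step it follows the same route as the paper: Lemma \ref{l:invr2} reduces \eqref{e:restriction} to the identity $2H_xH_yH_{xy}-H_{xx}H_y^2-H_{yy}H_x^2=0$ (your displayed formula for $\nabla H^T\bigl(I-\tfrac12 h\,\mathrm{D}X\bigr)^{-1}X$ is exactly the computation in that lemma), and both you and the authors then hand the resulting system of coefficient equations to a computer algebra system, so neither argument is more self-contained on the classification itself. Where you genuinely differ is downstream, and your version is more conceptual. The paper verifies the explicit maps, the extra first integrals and the Lie-symmetry claim case by case with a CAS; you instead exploit the common structure $H=p(\ell)$, $\ell$ a linear form, to get $\mathrm{D}X=p''(\ell)\,v\,(\nabla\ell)^T$ with $v=(-\ell_y,\ell_x)^T$, hence $\mathrm{D}X^2=0$, $\bigl|I-\tfrac12 h\,\mathrm{D}X\bigr|=1$ and $\mathrm{D}X\,X=0$, so that \eqref{e:phi} collapses to $\Phi_h=\mathbf{x}+hX$ --- this proves Remark (a) following the theorem, which the paper only states --- while $H_x,H_y$ are functions of $\ell$ alone (hence functionally dependent first integrals) and the Lie symmetry drops out of Proposition \ref{p:propo3} with $\widetilde H=H$ and unit determinant, or directly from \eqref{e:Lie-sym-char}. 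Two small caveats. First, your parenthetical description of the Hessian as ``constant, trace-free'' is a slip: it equals $p''(\ell)\,\nabla\ell\,(\nabla\ell)^T$, which is neither constant when $p$ is cubic nor trace-free; what is trace-free and nilpotent is $\mathrm{D}X$. The conclusions you draw need only the rank-one structure, so nothing breaks. Second, the curvature interpretation (all level curves are straight lines) is a valuable consistency check, but it does not by itself show the five families are exhaustive --- a priori the lines occurring in different level sets need not be parallel --- so, exactly as in the paper, completeness of the list still rests on the CAS computation, which you acknowledge.
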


We emphasize that, according to Corollary \ref{c:simplecticr2} in Section \ref{s:add}, all the aforementioned maps $\Phi_i$, $i=1,\ldots,5$, are symplectic.

\begin{remark} Additionally, we point out the following:
\begin{itemize}
\item[(a)] For all the families of vector fields $X=(P,Q)$ of  Theorem \ref{t:Hamiltonians de R2} we have $\Phi_h=(x+hP,y+hQ)$.
\item[(b)] Under the change of coordinates $X=y, Y=x$ and $T= -t$ and renaming the constants $a_{30}\rightarrow a_{03}, a_{20}\rightarrow a_{02}, a_{10}\rightarrow a_{01}$ families $X_2$ and $X_4$ are the same.
\item[(c)] For each specific case outlined in Theorem \ref{t:Hamiltonians de R2}, it is possible to apply particular rescalings to eliminate some of the parameters $a_{ij}$, thereby simplifying the expressions. However, in this article, we have chosen not to perform these specific rescalings in order to facilitate comparisons between the different families by maintaining a consistent set of common parameters.

\end{itemize}

\end{remark}

To prove  Theorem \ref{t:Hamiltonians de R2}, we first establish the following characterization of the conditions under which a system of the form \eqref{e:ham2} has an associated KHK map that preserves the original Hamiltonian.

\begin{lemma}\label{l:invr2}
The KHK map \eqref{e:phi} associated to  a planar Hamiltonian system \eqref{e:ham2} preserves the original Hamiltonian function \eqref{e:genham} if and only if
\begin{equation}\label{e:condicioar2}
2H_xH_yH_{xy}-H_{xx}H_y^2-H_{yy}H_x^2=0.
\end{equation}
 \end{lemma}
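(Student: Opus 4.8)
The plan is to unravel the general condition \eqref{e:restriction} for the specific planar case, using the explicit form of $X$, $\mathrm{D}X$, and the $2\times 2$ matrix inverse. For system \eqref{e:ham2} we have $X=(-H_y,H_x)^T$ and $\nabla H=(H_x,H_y)^T$, while
$$
\mathrm{D}X=\begin{pmatrix}-H_{xy} & -H_{yy}\\[1mm] H_{xx} & H_{xy}\end{pmatrix}.
$$
First I would compute $M:=I-\tfrac12 h\,\mathrm{D}X=\begin{pmatrix}1+\tfrac h2 H_{xy} & \tfrac h2 H_{yy}\\[1mm] -\tfrac h2 H_{xx} & 1-\tfrac h2 H_{xy}\end{pmatrix}$, whose determinant is $\Delta:=1-\tfrac{h^2}{4}(H_{xy}^2 - H_{xx}H_{yy})$. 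Since $\det M\neq 0$ for small $h$, condition \eqref{e:restriction} is equivalent to $\nabla H^T\,\mathrm{adj}(M)\,X=0$, clearing the harmless scalar $1/\Delta$. This turns the statement into a polynomial identity that I can expand directly.

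Next I would carry out the expansion. Writing $\mathrm{adj}(M)=\begin{pmatrix}1-\tfrac h2 H_{xy} & -\tfrac h2 H_{yy}\\[1mm] \tfrac h2 H_{xx} & 1+\tfrac h2 H_{xy}\end{pmatrix}$, I compute $\mathrm{adj}(M)\,X=\mathrm{adj}(M)(-H_y,H_x)^T$, namely
$$
\mathrm{adj}(M)\,X=\begin{pmatrix} -H_y\bigl(1-\tfrac h2 H_{xy}\bigr)-\tfrac h2 H_{yy}H_x \\[1mm] -\tfrac h2 H_{xx}H_y+H_x\bigl(1+\tfrac h2 H_{xy}\bigr)\end{pmatrix}.
$$
Then $\nabla H^T(\mathrm{adj}(M)\,X)=H_x\bigl[-H_y(1-\tfrac h2 H_{xy})-\tfrac h2 H_{yy}H_x\bigr]+H_y\bigl[-\tfrac h2 H_{xx}H_y+H_x(1+\tfrac h2 H_{xy})\bigr]$. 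The zeroth-order-in-$h$ terms are $-H_xH_y+H_yH_x=0$, so the whole expression is $h$ times the bracket
$$
\tfrac12\bigl(H_xH_yH_{xy}-H_{yy}H_x^2 - H_{xx}H_y^2 + H_xH_yH_{xy}\bigr)=\tfrac h2\bigl(2H_xH_yH_{xy}-H_{xx}H_y^2-H_{yy}H_x^2\bigr).
$$
Hence \eqref{e:restriction} holds (for all $h$ in a neighborhood of $0$, equivalently as an identity in the variables) precisely when \eqref{e:condicioar2} is satisfied, which is the claim.

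One subtlety I would make explicit is the role of $h$: the condition \eqref{e:restriction} is meant to hold identically in $(x,y)$ for the relevant range of the step parameter $h$, and since the left-hand side is (after clearing $\Delta$) exactly $\tfrac h2$ times the left-hand side of \eqref{e:condicioar2}, vanishing for one $h\neq 0$ already forces \eqref{e:condicioar2}; conversely \eqref{e:condicioar2} makes it vanish for every $h$. I would also note that $\Delta\neq 0$ for $|h|$ small (and more precisely wherever $h^2(H_{xy}^2-H_{xx}H_{yy})\neq 4$), so multiplying/dividing by $\Delta$ is legitimate on the open set where the KHK map is defined. There is essentially no real obstacle here; the only mild bookkeeping point is the cancellation of the $O(1)$ terms and correctly combining the two $H_xH_yH_{xy}$ contributions into the factor $2$, which the computation above already handles.
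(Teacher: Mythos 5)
Your proof is correct and follows essentially the same route as the paper: write out $I-\tfrac12 h\,\mathrm{D}X$ for $X=(-H_y,H_x)^T$, invert the $2\times2$ matrix (clearing the nonzero determinant $\Delta$), and expand $\nabla H^T\,\mathrm{adj}(M)\,X$ to find it equals $\tfrac h2\bigl(2H_xH_yH_{xy}-H_{xx}H_y^2-H_{yy}H_x^2\bigr)$. Your explicit remark on the role of $h$ and the legitimacy of dividing by $\Delta$ is a small but welcome addition of care beyond what the paper records.
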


  Observe that, by using the notation $\dot{g}=\{H,g\}$, where $\{\,\}$ is the usual Poisson bracket and $\dot{}=d/dt$, see \cite{Marsden1999}, the Equation \eqref{e:condicioar2} also writes as
$\dot{H}_xH_y-\dot{H}_yH_x=0.$

 \begin{proof}[Proof of Lemma \ref{l:invr2}]
Set ${\bf{x}}=(x,y)$ and $X({\bf{x}})=(-H_y, H_x)^T$, then we have
 $$
 I-\frac{1}{2}hDX({\bf{x}})=
 \left(\begin{array}{cc}
1+\frac{1}{2}hH_{xy} & \frac{1}{2}hH_{yy}\\
-\frac{1}{2}hH_{xx} & 1-\frac{1}{2}hH_{xy}
\end{array}
 \right).
 $$
Relation \eqref{e:restriction}  is
\begin{align*}
0&=
\nabla H({\bf{x}})^T \left(I-\frac{1}{2}hDX({\bf{x}})\right)^{-1}X({\bf{x}})\\
&= \frac{1}{A(x,y)}(H_x, H_y) \left(\begin{array}{cc}
 1-\frac{1}{2}hH_{xy} & -\frac{1}{2}hH_{xx}\\
 -\frac{1}{2}hH_{yy}&  1+\frac{1}{2}hH_{xy}
\end{array}
 \right)\left(\begin{array}{r}
 -H_y\\
 H_y
 \end{array}\right)=0,
\end{align*}
where $A(x,y)=1+\frac{1}{4}h^2\left(H_{xx} H_{yy}-H_{xy}^2\right)$, which yields to 
$$ 
2H_xH_yH_{xy}-H_{xx}H_y^2-H_{yy}H_x^2=\{H,H_x\}H_y-\{H,H_y\}H_x=\dot{H_x}H_y-\dot{H_y}H_x=0.
$$
\end{proof}

\begin{proof}[Proof of Theorem \ref{t:Hamiltonians de R2}]
By imposing the relation \eqref{e:condicioar2} on a Hamiltonian of the form \eqref{e:ham2}, and equating the coefficients with the aid of a computer algebra system, we obtain the following system of 21 equations that are symmetric with respect to the change $ a_{ij} \leftrightarrow a_{ji}$:  

\begin{align*}
&a_{01}^{2} a_{20}-a_{01} a_{10} a_{11}+a_{02} a_{10}^{2}=0,\\ 
& 2 a_{02} a_{12} a_{30}+3 a_{03} a_{11} a_{30}+2 a_{03} a_{20} a_{21}- a_{11} a_{12} a_{21}
=0,\\ 
& 12 a_{01} a_{02} a_{30}-2 a_{01} a_{11} a_{21}+4 a_{02} a_{11} a_{20}+12 a_{03} a_{10} a_{20}-2 a_{10} a_{11} a_{12}-a_{11}^{3}=0,\\
& 3 a_{01}^{2} a_{30}-2 a_{01} a_{10} a_{21}+4 a_{02} a_{10} a_{20}+a_{10}^{2} a_{12}-a_{10} a_{11}^{2}
=0,\\ 
& 3 a_{03} a_{10}^{2}-2 a_{01} a_{10} a_{12}+4 a_{01} a_{02} a_{20}+a_{01}^{2} a_{21}-a
_{01} a_{11}^{2}=0,\\ 
&  a_{30} \left(3 a_{12} a_{30}-a_{21}^{2}\right)=0,\\ 
& a_{03} \left(3 a_{03} a_{21}-a_{12}^{2}\right)=0,\\ 
& 
4 a_{02} a_{03} a_{21}- a_{02} a_{12}^{2}+3 a_{03}^{2} a_{20}- a_{03} a_{11} a_{12}
=0,\\ 
& 4 a_{12} a_{20} a_{30}- a_{20} a_{21}^{2}+3 a_{02} a_{30}^{2}- a_{11} a_{21} a_{30}
=0,\\ 
& 3 a_{01} a_{11} a_{30}-2 a_{01} a_{20} a_{21}+6 a_{02} a_{10} a_{30}+4 a_{02} a_{20}^{2}-3 a_{10} a_{11} a_{21}+4 a_{10} a_{12} a_{20}-a_{11}^{2} a_{20}
=0,\\ 
&3 a_{03} a_{10} a_{11}-2 a_{02} a_{10} a_{12}+6 a_{01} a_{03} a_{20}+4 a_{02}^{2} a_{20}-3 a_{01} a_{11} a_{12}+ 4 a_{01} a_{02} a_{21}-a_{02} a_{11}^{2}
=0,\\ 
& 6 a_{02} a_{20} a_{30}+3 a_{10} a_{12} a_{30}- a_{10} a_{21}^{2}-2 a_{11} a_{20} a_{21}+2 a_{12} a_{20}^{2}
=0,\\ 
& 6 a_{02} a_{03} a_{20}+3 a_{01} a_{03} a_{21}- a_{01} a_{12}^{2}-2 a_{02} a_{11} a_{12}+2 a_{02}^{2} a_{21}
=0,\\ 
& 9 a_{03} a_{30}^{2}+2 a_{12} a_{21} a_{30}- a_{21}^{3}=0,\\ 
& 9 a_{03}^{2} a_{30}+2 a_{03} a_{12} a_{21}- a_{12}^{3}=0,\\ 
&6 a_{03} a_{21} a_{30}+ a_{12}^{2} a_{30}- a_{12} a_{21}^{2}
=0,\\ 
& 6 a_{03} a_{12} a_{30}+ a_{03} a_{21}^{2}- a_{12}^{2} a_{21}
=0,\\ 
& 6 a_{02} a_{03} a_{30}+ a_{03} a_{11} a_{21}+2 a_{03} a_{12} a_{20}- a_{11} a_{12}^{2}
=0,\\ 
& 6 a_{03} a_{20} a_{30}+ a_{11} a_{12} a_{30}+2 a_{02} a_{21} a_{30}- a_{11} a_{21}^{2}
=0,\\ 
& 3 a_{01} a_{12} a_{30}- a_{01} a_{21}^{2}+6 a_{02} a_{11} a_{30}+2 a_{02} a_{20} a_{21}+9 a_{03} a_{10} a_{30}+6 a_{03} a_{20}^{2}- a_{10} a_{12} a_{21}\\
&-2 a_{11}^{2} a_{21}
=0,\\
&3 a_{03} a_{10} a_{21}- a_{10} a_{12}^{2}+6 a_{03} a_{11} a_{20}
+2 a_{02} a_{12} a_{20}+ 9 a_{01} a_{03} a_{30}+6 a_{02}^{2} a_{30}- a_{01} a_{12} a_{21}\\
&-2 a_{11}^{2} a_{12}
=0.
\end{align*}

Using again the assistance of a symbolic computing software, we obtain the solutions of this system, obtaining that the only solutions are those that give rise to the Hamiltonians $H_i$, $i=1,\ldots,5$, and its associated vector fields $X_i$ and differential systems in the statement.

We can find the KHK maps associated with the corresponding Hamiltonian vector fields by using \eqref{e:phi}.  Finally, by using the compatibility equation \eqref{e:Lie-sym-char}, we find that $X_{i|\Phi_i(\bf{x})}=D\Phi_i({\bf{x}})\,X_i({\bf{x}})$ for $i=1,\ldots,5$, so each vector field $X_i$ is a Lie symmetry of the maps $\Phi_i$.

\end{proof}

The dynamical and the algebraic geometric properties of the KHK maps associated with \emph{generic} planar Hamiltonian vector fields with cubic Hamiltonians  have been studied in  \cite{Petrera19-1,Petrera19-2}. 
Remember that the space \(\mathcal{H}_3\) of planar vector fields with cubic Hamiltonians is homeomorphic to \(\mathbb{R}^9\) in the topology of the coefficients \cite[p. 202]{ADL}. A subset  of this space is \emph{generic} if it is open and dense. As a consequence of Theorem \ref{t:teoham}, the KHK maps associated with Hamiltonian vector fields preserve the cubic first integral $\tilde{H}$ defined in \eqref{e:modHam}. For the generic cases, the energy levels of this Hamiltonian $\tilde{H}$ are, except perhaps for a finite set of levels, elliptic curves. Thus, the KHK maps associated with cubic Hamiltonians are, generically, birational integrable maps preserving genus-$1$ fibrations, and therefore can be described in terms of the linear action on the group structure of the preserved elliptic curves \cite{D,JRV}.
We notice, however, that the cases presented in Theorem \ref{t:Hamiltonians de R2} are, obviously, non-generic within the topological space of vector fields with cubic Hamiltonians, since equation \eqref{e:restriction} must be satisfied. More specifically:

\begin{remark}\label{r:remark7}
The cases obtained in Theorem \ref{t:Hamiltonians de R2} correspond to instances where the pencil of elliptic curves associated with the Hamiltonian is always factorizable. The structure of the singular fibers of the elliptic fibrations associated with the modified Hamiltonians has been studied in \cite[Corollary 2.12]{GMQ24}. The families described in Theorem 5 are singular cases of those characterized in that reference.
\end{remark}

As noticed in the above Remark there is no planar quadratic Hamiltonian system satisfying condition \eqref{e:restriction} (or equivalently, \eqref{e:condicioar2}) with  cubic irreducible Hamiltonian function. Also, no one of the associated vector fields have coprime components. A straightforward analysis indicates that in all the cases in Theorem \ref{t:Hamiltonians de R2}, the energy level sets are given by parallel straight lines, some of them (at most two) full of singular points of the associated differential systems.
From a qualitative point of view, the dynamics of these fields is trivial, because either they are linear or, by reparameterizing the time, they give rise to systems with constant components.
The dynamics of the associated KHK maps is, therefore, not so rich as the ones that are displayed in the general case studied, for instance in \cite{Celledoni2019b,Kam19,Petrera19-1,Petrera19-2}. By construction, the obtained KHK maps will preserve the above mentioned invariant straight lines. This is also a consequence of the fact that, in general, the KHK maps
preserve the affine Darboux polynomials of any quadratic ODE \cite[Theorem 1]{Celledoni2019c} as well as, notably, the Runge-Kutta methods do, \cite[Theorem 3.1]{Tapley} (see also \cite{Celledoni2022} for more information on using Darboux polynomials for the study of KHK maps). Let us give an example:

\medskip

\noindent \textbf{Example A.}  Consider the following Hamiltonian that belongs to the class of systems considered in statement (e) of Theorem \ref{t:Hamiltonians de R2}:  \begin{align}
H_5(x,y)&=\frac{1}{27} x^{3}+\frac{1}{3} x^{2} y +x \,y^{2}+y^{3}-\frac{1}{6} x^{2}-x y -\frac{3}{2} y^{2}-x -3 y\label{e:ex}\\
&=\frac{1}{27}  \left(x+3 y  \right)\left(x +3 y -\frac{9}{4}+\frac{3 \sqrt{57}}{4}\right)\left(x +3 y -\frac{9}{4}-\frac{3 \sqrt{57}}{4}\right).\nonumber
\end{align}
The Hamiltonian vector field has associated differential  system $\{
\dot{x}=-\frac{1}{3} P(x,y),\, \quad \dot{y}=\frac{1}{9} P(x,y)\}$,
with $P(x,y)=x^{2}+6 x y +9 y^{2}-3 x -9 y -9$.
The phase portrait of the system is, therefore, very simple: all the orbits lie in straight lines of the form $y=-x/3+c$. The lines $y=-x/3+(1\pm\sqrt{5})/2$ are filled by singular points.
Any orbit with initial condition $(x_0,y_0)$ such that
$-x_0/3+(1-\sqrt{5})/2<y_0<-x_0/3+(1+\sqrt{5})/2$
evolves to the right and down through a line of the form $y=-x/3+c$. The rest of orbits, not in the singular lines, evolve to the left and up,  see Figure \ref{f:f1}.

The associated KHK map
\begin{align*}
\Phi(x,y)=&\left(-\frac{h \,x^{2}}{3}-2 h x y +\left(h +1\right) x -3 h \,y^{2}+3 h y +3 h,\right.\\
&\left.
\frac{h \,x^{2}}{9}+\frac{2 h x y}{3}-\frac{h x}{3}+h \,y^{2}+\left(-h +1\right) y -h\right),
\end{align*}
captures these behaviors.

\begin{figure}[H]
\centerline{
\includegraphics[scale=0.35]{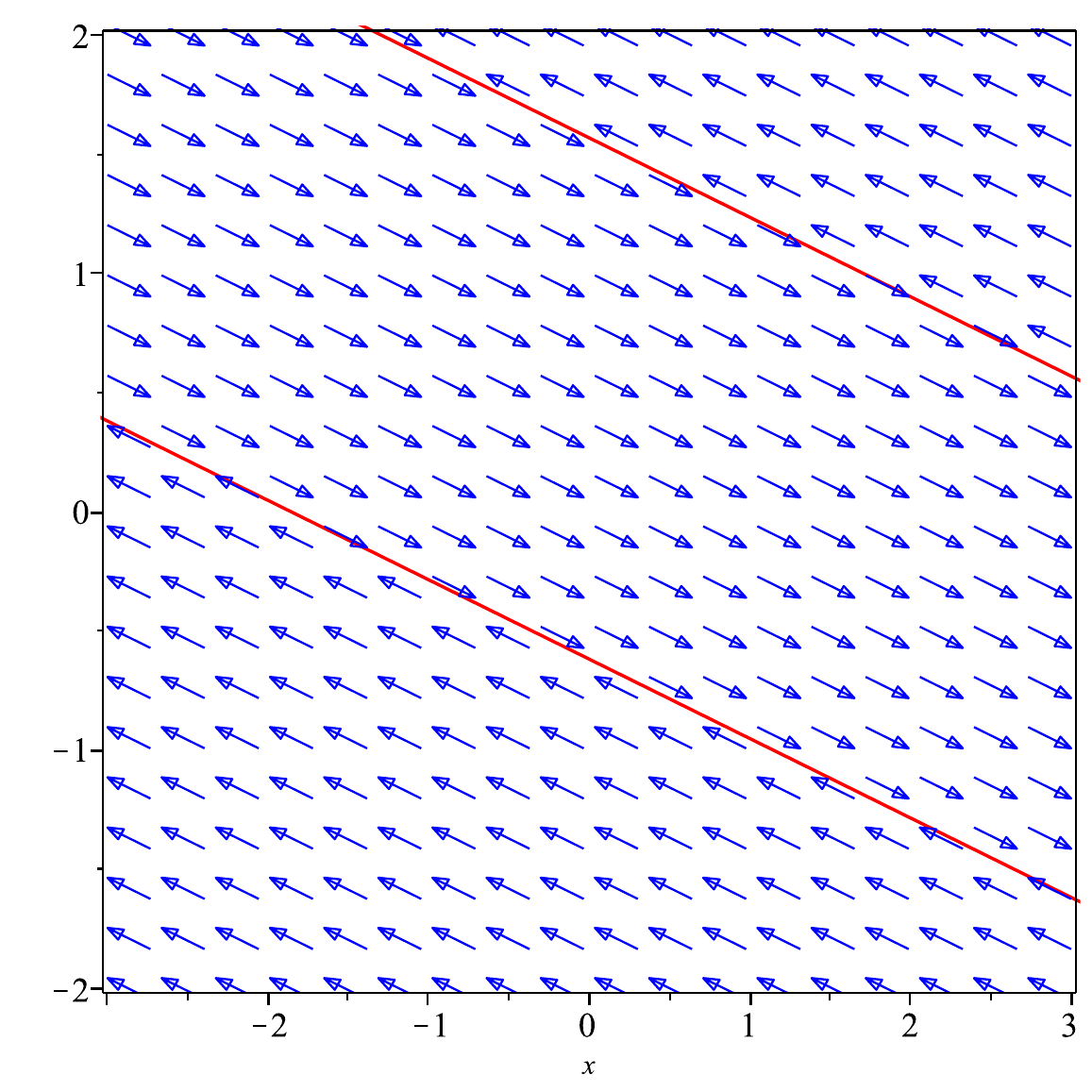}\includegraphics[scale=0.35]{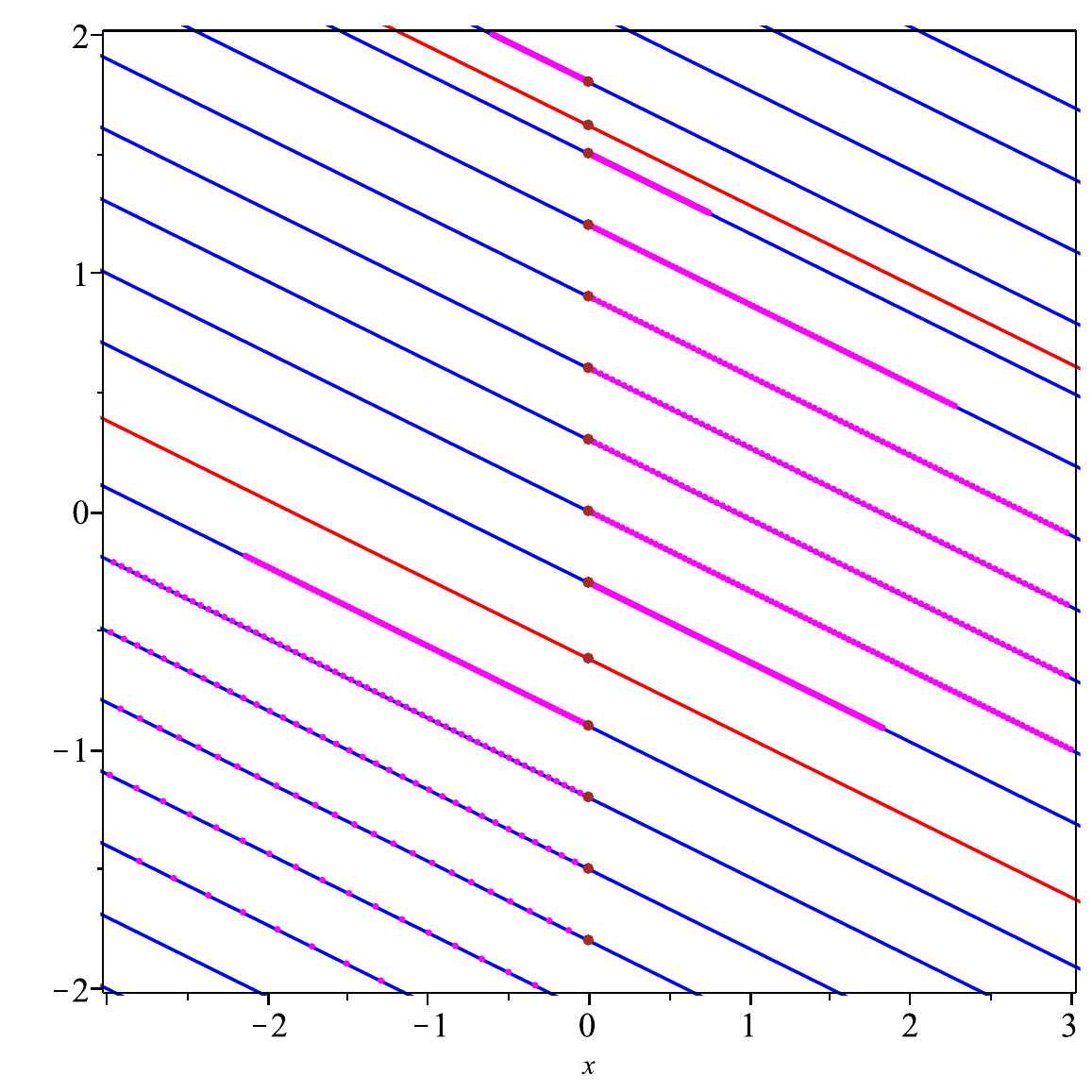}}\caption{
(Left) Vector field associated to the Hamiltonian \eqref{e:ex}. The lines  $y=-x/3+(1\pm\sqrt{5})/2$ are full of singular points, in red. (Right)  Orbits of the Hamiltonian system (in blue). There are also shown 100 iterates of some orbits of the map $\Phi_h$ with $h=0.01$ (in magenta), with initial conditions in brown. The lines filled by singular points of the vector field are also filled by fixed points of the map.}\label{f:f1}
\end{figure}

\section{Two-degrees of freedom Hamiltonian systems}\label{s:r4}

We consider Hamiltonian vector fields with two degrees of freedom
$$
X=-\frac{\partial H}{\partial y} \frac{\partial}{\partial x}+\frac{\partial H}{\partial x}\frac{\partial}{\partial y}
-\frac{\partial H}{\partial w} \frac{\partial}{\partial z}+\frac{\partial H}{\partial z}\frac{\partial}{\partial w},
$$ with cubic Hamiltonian function
 \begin{equation}\label{e:genham4}
H(x,y,z,w)=\displaystyle{\sum\limits_{0< i+j+k+l\leq 3}
} a_{ijkl}x^iy^jz^kw^l.
 \end{equation}
 The associated two degrees of freedom Hamiltonian system is 

\begin{equation}\label{e:ham4}
 \dot{x}=-\dfrac{\partial H}{\partial y}=-H_y, \quad
\dot{y}=\dfrac{\partial H}{\partial x}=H_x,\quad
\dot{z}=-\dfrac{\partial H}{\partial w}=-H_w, \quad
\dot{w}=\dfrac{\partial H}{\partial z}=H_z.
\end{equation}

These fields can also be written in the form $q_i=\partial H/\partial p_i$, $p_i=-\partial H/\partial q_i$, with $q_1=x,q_2=z$ and $p_1=y,p_2=w$. The relation between the associated KHK maps $\Psi_h$ of these vector fields and $\Phi_h$, the ones associated with the fields in Equation \eqref{e:ham4}, is given by
$
{\Psi}_h(q_1,q_2,p_1,p_2)=\left(\Phi _{1,-h},\Phi _{3,-h},\Phi _{2,-h},\Phi _{4,-h}\right)(q_1,p_1,q_2,p_2).
$ See Lemma \ref{l:noulemanou} in  Appendix \ref{s:appendixnou}.

The following result gives a characterization the conditions for which a system of the form \eqref{e:ham4} has an associated KHK map that preserves the original Hamiltonian.

\begin{lemma}\label{l:invr4}
The KHK map \eqref{e:phi} associated with a  Hamiltonian system \eqref{e:ham4} preserves the original Hamiltonian function \eqref{e:genham4} if and only if
\begin{equation}\label{e:condicioar4}
\begin{array}{l}
H_x\{H,H_y\}-H_y\{H,H_x\}+\{H,H_w\}H_z-\{H,H_z\}H_w=0 \mbox{ and}\\
\\
H_x\left(\{H_,H_y\}\{H_z,H_w\}-\{H_,H_z\}\{H_y,H_w\}+\{H_,H_w\}\{H_y,H_z\}\right)\\
-H_y\left(\{H_,H_x\}\{H_z,H_w\}-\{H_,H_z\}\{H_x,H_w\}+\{H_,H_w\}\{H_x,H_z\}\right)\\
+H_z\left(\{H_,H_x\}\{H_y,H_w\}-\{H_,H_y\}\{H_x,H_w\}+\{H_,H_w\}\{H_x,H_y\}\right)\\
-H_w\left(\{H_,H_x\}\{H_y,H_z\}-\{H_,H_y\}\{H_x,H_z\}+\{H_,H_z\}\{H_x,H_y\}\right)=0.
\end{array}
\end{equation}
 \end{lemma}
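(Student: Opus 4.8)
The plan is to follow exactly the strategy used in the proof of Lemma \ref{l:invr2}, but carry the linear algebra out in dimension four. Set $\mathbf{x}=(x,y,z,w)$ and write the vector field as $X(\mathbf{x})=K\nabla H$ with the canonical symplectic matrix $K$ associated to the pairs $(x,y)$ and $(z,w)$, so that $X=(-H_y,H_x,-H_w,H_z)^T$. Then $\mathrm{D}X(\mathbf{x})=K\,\mathrm{Hess}\,H(\mathbf{x})$ is explicitly a $4\times 4$ matrix whose entries are second partials of $H$, and $M(\mathbf{x}):=I-\tfrac12 h\,\mathrm{D}X(\mathbf{x})$ is the matrix appearing in \eqref{e:restriction}. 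The condition \eqref{e:restriction} is $\nabla H^T M^{-1} X=0$, i.e. $\nabla H^T M^{-1} K\nabla H=0$, which after multiplying through by $\det M$ becomes the polynomial identity $\nabla H^T\,\mathrm{adj}(M)\,K\nabla H=0$, where $\mathrm{adj}(M)$ is the adjugate (cofactor) matrix. This is a polynomial in $h$ of degree at most $3$ (since $\mathrm{adj}(M)$ has degree $3$ entries, $K\nabla H$ is $h$-free, and $\nabla H^T$ is $h$-free); so I would expand it as $c_0 + c_1 h + c_2 h^2 + c_3 h^3$ and show that the coefficient condition reduces to the two displayed equations.

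The key computational point is to identify which $h$-powers actually survive. The $h^0$ term is $\nabla H^T\,\mathrm{adj}(I)\,K\nabla H=\nabla H^T K\nabla H=0$ automatically by antisymmetry of $K$, so $c_0\equiv 0$. The $h^3$ term involves $\mathrm{adj}$ at cubic order, which is closely tied to $\det(\mathrm{D}X)$; a short computation — or appeal to the structure $\mathrm{adj}(M)=\det(M)M^{-1}$ together with the Liouville/Jacobi formula — should show that this top term also contributes nothing new, or contributes a term that is a consequence of the lower ones. The substantive content is therefore in $c_1=0$ and $c_2=0$. I expect $c_1$, the linear-in-$h$ coefficient, to be $\nabla H^T\big(\tfrac12 K\,\mathrm{Hess}\,H + \tfrac12(K\,\mathrm{Hess}\,H)^T\big)K\nabla H$ up to sign and scaling, and a direct expansion of this bilinear form in the four partials $H_x,H_y,H_z,H_w$ and the six second partials should reproduce the first displayed equation after rewriting products like $H_xH_{xy}-\cdots$ as Poisson brackets $\{H,H_x\}$ etc.; here one uses $\{H,g\}=H_xg_y-H_yg_x+H_zg_w-H_wg_z$. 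Similarly $c_2$, after expansion and the same bracket-rewriting, should give the second (quartic-looking, $3\times 3$ determinant-structured) equation; the nested structure with the $\{H_a,H_b\}$ brackets strongly suggests it arises as a $3\times3$ minor expansion of $\mathrm{adj}(M)$ at quadratic order, which is the natural place for such "Plücker-like" sums to appear.

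Concretely the steps are: (1) write down $M(\mathbf{x})$ and its adjugate in terms of the second partials of $H$; (2) form the scalar $\nabla H^T\,\mathrm{adj}(M)\,K\nabla H$ and collect it by powers of $h$; (3) observe $c_0=0$ by antisymmetry and dispose of the top-degree term $c_3$ (showing it vanishes identically or is subsumed); (4) expand $c_1$ and recognize it, via the Leibniz rule and the definition of $\{\cdot,\cdot\}$, as the left-hand side of the first equation in \eqref{e:condicioar4}; (5) expand $c_2$ similarly and recognize it as the left-hand side of the second equation; (6) conclude that \eqref{e:restriction} holds for all $h$ (equivalently, since $A(\mathbf{x})=\det M$ is a nonzero polynomial, for $h$ in a neighborhood of $0$) if and only if both polynomial identities hold. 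The main obstacle I anticipate is purely organizational rather than conceptual: the $c_2$ expansion is a sum of many monomials in four first-derivatives and six second-derivatives of $H$, and matching it term-by-term to the given nested-bracket expression requires a careful bookkeeping of signs — this is naturally handled with a computer algebra system, exactly as the authors indicate for the companion result, and I would state it as such rather than reproduce the full expansion. A secondary subtlety is making sure the reduction "polynomial in $h$ vanishes identically $\iff$ \eqref{e:restriction} holds" is stated with the right quantifier (for all sufficiently small $h$, or equivalently for all $h$ avoiding the zero set of $\det M$), which mirrors the phrasing already adopted in Lemma \ref{l:invr2}.
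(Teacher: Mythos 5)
Your overall strategy --- clear denominators, expand $\nabla H^{T}\,\mathrm{adj}(M)\,K\nabla H$ as a polynomial in $h$, and match coefficients against the two displayed conditions --- is exactly the paper's, but you have misassigned which coefficients carry the content, and two of your steps would fail as written. The numerator is an \emph{odd} polynomial in $h$: writing $S=\mathrm{Hess}\,H$ and $v=\nabla H$, one has $\bigl((KS)^{k}K\bigr)^{T}=(-1)^{k+1}(KS)^{k}K$, so $(KS)^{k}K$ is antisymmetric for every even $k$ and hence $v^{T}(KS)^{k}Kv=0$; equivalently, $\det M$ is even in $h$ (the eigenvalues of the Hamiltonian matrix $KS$ come in $\pm$ pairs) while $v^{T}M^{-1}Kv$ is odd. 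Therefore $c_{0}=c_{2}\equiv 0$, which is what the paper's identity \eqref{e:h0h2r4} records by writing the numerator as $h(\Lambda_{0}+\Lambda_{2}h^{2})$. The two substantive conditions are $c_{1}=0$ (giving $\Lambda_{0}=0$, the first equation of \eqref{e:condicioar4}) and $c_{3}=0$ (giving $\Lambda_{2}=0$, the second). Your step (3), which disposes of the top-degree term $c_{3}$ as ``vanishing or subsumed,'' is wrong: $c_{3}$ is precisely the second, nested-bracket condition. Your step (5), which tries to extract that condition from $c_{2}$, is attempting to expand a coefficient that is identically zero.

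Once the parity is corrected the rest of your plan goes through and coincides with the paper's proof: $c_{1}$ expands, after rewriting with $\{H,g\}=H_{x}g_{y}-H_{y}g_{x}+H_{z}g_{w}-H_{w}g_{z}$, to the first equation, and $c_{3}$ to the second; the quantifier point (``for all $h$ avoiding the zero set of $\det M$'') is handled exactly as in Lemma \ref{l:invr2}, and the heavy $\Lambda_{2}$ bookkeeping is indeed delegated to a computer algebra system in the paper as well. A secondary slip: your guessed form of $c_{1}$, namely $v^{T}\bigl(\tfrac12 KS+\tfrac12(KS)^{T}\bigr)Kv$, contains the spurious term $\tfrac12 v^{T}(KS)^{T}Kv=-\tfrac12 v^{T}SK^{2}v=\tfrac12 v^{T}Sv$, which does not vanish; the correct linear coefficient is simply $\tfrac12 v^{T}KSKv$ with $KSK$ symmetric. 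These are fixable errors of bookkeeping rather than of conception, but as stated the proposal does not establish the lemma.
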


The first equation in \eqref{e:condicioar4} also writes as
$ \dot{H}_xH_y-\dot{H}_yHx+\dot{H}_zH_w-\dot{H}_wHz=0$, where  $\dot{g}=\{H,g\}$.

\begin{proof}[Proof of Lemma \ref{l:invr4}]
In this case, the relation \eqref{e:restriction} gives
\begin{equation}\label{e:h0h2r4}
\nabla H({\bf{x}})^T \left(I-\frac{1}{2}h{\mathrm D}X({\bf{x}})\right)^{-1}X({\bf{x}})=\frac{h(\Lambda_0+\Lambda_2 h^2)}{\left|I-\frac{1}{2}h {\mathrm D}X({\bf{x}})\right|}=0,
\end{equation}
with

\begin{align*}
\Lambda_0=&8 \left(H_{w}^{2} H_{zz} -2 H_{w} H_{x} H_{yz} +2 H_{w} H_{xz} H_{y} -2 H_{w} H_{z} H_{zw} + H_{ww} \,H_{z}^{2}+ H_{x}^{2} H_{yy}\right.\\
&\left. -2 H_{x} H_{xy} H_{y} +2 H_{x} H_{yw} H_{z} -2 H_{xw} H_{y} H_{z} + H_{xx} \,H_{y}^{2}\right),
\end{align*}
and
\begin{align*}
\Lambda_2=&
2\left( H_{x}^{2} H_{yy} H_{zz} H_{ww} - H_{x}^{2} H_{yy} \,H_{zw}^{2}- H_{x}^{2} H_{yz}^{2} H_{ww} +2 H_{x}^{2} H_{yz} H_{yw} H_{zw} - H_{x}^{2} H_{yw}^{2} H_{zz}\right.\\
& -2 H_{x} H_{y} H_{xy} H_{zz} H_{ww} +2 H_{x} H_{y} H_{xy} \,H_{zw}^{2}+2 H_{x} H_{y} H_{xz} H_{yz} H_{ww} -2 H_{x} H_{y} H_{xz} H_{yw} H_{zw}\\
& -2 H_{xw} H_{x} H_{y} H_{yz} H_{zw} +2 H_{xw} H_{x} H_{y} H_{yw} H_{zz} +2 H_{x} H_{z} H_{xy} H_{yz} H_{ww} \\
& -2 H_{x} H_{z} H_{xy} H_{yw} H_{zw}-2 H_{x} H_{z} H_{xz} H_{yy} H_{ww} +2 H_{x} H_{z} H_{xz} \,H_{yw}^{2}+2 H_{xw} H_{x} H_{z} H_{yy} H_{zw} \\
&-2 H_{xw} H_{x} H_{z} H_{yz} H_{yw} -2 H_{x} H_{w} H_{xy} H_{yz} H_{zw} +2 H_{x} H_{w} H_{xy} H_{yw} H_{zz}   \\
&+2 H_{x} H_{w} H_{xz} H_{yy} H_{zw}-2 H_{x} H_{w} H_{xz} H_{yz} H_{yw}-2 H_{xw} H_{x} H_{w} H_{yy} H_{zz} +2 H_{xw} H_{x} H_{w} H_{yz}^{2}\\
&+ H_{y}^{2} H_{xx} H_{zz} H_{ww} - H_{y}^{2} H_{xx} H_{zw}^{2}- H_{y}^{2} H_{xz}^{2} H_{ww} +2 H_{xw} \,H_{y}^{2} H_{xz} H_{zw} - H_{xw}^{2} H_{y}^{2} H_{zz} \\
&-2 H_{y} H_{z} H_{xx} H_{yz} H_{ww} +2 H_{y} H_{z} H_{xx} H_{yw} H_{zw} +2 H_{y} H_{z} H_{xy} H_{xz} H_{ww}  \\
&-2 H_{xw} H_{y} H_{z} H_{xy} H_{zw} -2 H_{xw} H_{y} H_{z} H_{xz} H_{yw}+2 H_{xw}^{2} H_{y} H_{z} H_{yz} +2 H_{y} H_{w} H_{xx} H_{yz} H_{zw}  \\
&-2 H_{y} H_{w} H_{xx} H_{yw} H_{zz}-2 H_{y} H_{w} H_{xy} H_{xz} H_{zw}+2 H_{xw} H_{y} H_{w} H_{xy} H_{zz} +2 H_{y} H_{w} \,H_{xz}^{2} H_{yw} \\
& -2 H_{xw} H_{y} H_{w} H_{xz} H_{yz} + H_{z}^{2} H_{xx} H_{yy} H_{ww}- H_{z}^{2} H_{xx} \,H_{yw}^{2}- H_{z}^{2} H_{xy}^{2} H_{ww} \\
&+2 H_{xw} \,H_{z}^{2} H_{xy} H_{yw} - H_{xw}^{2} H_{z}^{2} H_{yy}-2 H_{z} H_{w} H_{xx} H_{yy} H_{zw} +2 H_{z} H_{w} H_{xx} H_{yz} H_{yw} \\
 &+2 H_{z} H_{w} H_{xy}^{2} H_{zw} -2 H_{z} H_{w} H_{xy} H_{xz} H_{yw}-2 H_{xw} H_{z} H_{w} H_{xy} H_{yz} +2 H_{xw} H_{z} H_{w} H_{xz} H_{yy} \\
 &\left.+ H_{w}^{2} H_{xx} H_{yy} H_{zz} - H_{w}^{2} H_{xx} H_{yz}^{2}- H_{w}^{2} H_{xy}^{2} H_{zz} +2 H_{w}^{2} H_{xy} H_{xz} H_{yz} - H_{w}^{2} H_{xz}^{2} H_{yy}\right).
\end{align*}

A computation shows that
\begin{align*}
  \frac{\Lambda_0}{8}&=\{H,H_x\}H_y-\{H,H_y\}H_x+\{H,H_z\}H_w-\{H,H_w\}H_z\\
  &=\dot{H}_xH_y-\dot{H}_yH_x+\dot{H}_zH_w-\dot{H}_wH_z,
\end{align*}
where $\dot{}=d/dt$, so we obtain the first equation in \eqref{e:condicioar4}.
Analogously, an involved but straightforward computation shows that
  $\Lambda_2/2$ is the left hand side of the second equation.~\end{proof}
  
The following result summarizes the information we have obtained by solving equation \eqref{e:condicioar4} for a Hamiltonian of the form \eqref{e:genham4}.
Our calculations rely on the use of computer algebra software (Maple, in our case). In Appendix \ref{s:appendixa} we provide 54 non-trivial families of Hamiltonians preserved by the associated KHK maps. Unfortunately, we cannot guarantee that these solutions are the only ones, nor that they correspond to functionally independent cases.

  \begin{theorem}\label{t:Hamiltonians de R4} The following statements hold:
  \begin{enumerate}[(a)]
  \item  There are 54 families of Hamiltonian vector fields $X_i$ for $i = 1, \dots, 54 $ in $\mathbb{R}^4$, with Hamiltonian functions of degree at most three, as given by \eqref{e:genham4}, for which their associated KHK maps $\Phi_{i,h}$ preserve the original Hamiltonian. The corresponding Hamiltonians $H_i$ are listed in Appendix \ref{s:appendixa}.
  \item All vector fields corresponding to these Hamiltonians are Lie symmetries of the associated KHK maps.
\item     Some of the vector fields in the list, with Hamiltonian $H$, admit at least one additional first integral from the set $\{ H_x, H_y, H_z, H_w \}$, with only one being functionally independent of $H$. These vector fields are presented in Tables \ref{t:table1} and \ref{t:table2} below.
In all cases, the corresponding KHK maps  also preserve the same additional first integrals.
  \item All Hamiltonian vector fields in the list, with Hamiltonian $H$, that possess another first integral from the set $ \{H_x, H_y, H_z, H_w\}$, which is functionally independent of $H$, commute with the Hamiltonian vector field associated with the additional first integral.
  \end{enumerate}
  \end{theorem}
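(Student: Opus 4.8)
The plan is to treat the four statements in turn, reusing the strategy of the proof of Theorem~\ref{t:Hamiltonians de R2}. For part~(a) I would impose the two identities of \eqref{e:condicioar4} on a general cubic Hamiltonian \eqref{e:genham4}, expand both sides as polynomials in $(x,y,z,w)$, and equate every coefficient to zero; this produces a large polynomial system in the $34$ coefficients $a_{ijkl}$ with $0<i+j+k+l\le 3$, enjoying symmetries analogous to the $a_{ij}\leftrightarrow a_{ji}$ symmetry of the planar case (now coming from the two conjugate pairs $(x,y)$ and $(z,w)$). I would solve it with a computer algebra system (Maple), splitting into branches according to which $a_{ijkl}$ vanish and computing Gr\"obner bases of the resulting ideals; each consistent branch yields one of the families $X_i$ of Appendix~\ref{s:appendixa}, together with its Hamiltonian $H_i$ and, via \eqref{e:phi}, its KHK map $\Phi_{i,h}$. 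I expect this to be the main obstacle: the system is far larger than in $\R^2$, the case analysis branches heavily, and — as the statement itself concedes — I would not expect to be able to certify that the $54$ families exhaust all solutions or that they are pairwise functionally independent.

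For part~(b) the direct route is to check the compatibility equation \eqref{e:Lie-sym-char}, i.e.\ $X_{i\,|\Phi_{i,h}({\bf{x}})}={\mathrm D}\Phi_{i,h}({\bf{x}})\,X_i({\bf{x}})$, family by family, which is a finite computation once the $\Phi_{i,h}$ are known. I expect, however, that in every case appearing here the stronger fact ${\mathrm D}X_i({\bf{x}})\,X_i({\bf{x}})\equiv 0$ holds (as it does for all the planar families, cf.\ the Remark after Theorem~\ref{t:Hamiltonians de R2}, and as one can verify in Example~A). When this is so, the acceleration along orbits vanishes, the flow of $X_i$ is $\varphi_i(t,{\bf{x}})={\bf{x}}+tX_i({\bf{x}})$, and $(I-\frac{1}{2}h\,{\mathrm D}X_i)^{-1}X_i=X_i$, so that $\Phi_{i,h}({\bf{x}})={\bf{x}}+hX_i({\bf{x}})=\varphi_i(h,{\bf{x}})$ is literally the time-$h$ flow of $X_i$; then $X_i$ is a Lie symmetry of $\Phi_{i,h}$ by the standard fact that a vector field is invariant under its own flow, $X_{i\,|\varphi_i(h,{\bf{x}})}={\mathrm D}\varphi_i(h,{\bf{x}})\,X_i({\bf{x}})$.

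For part~(c) I would, for each family with Hamiltonian $H_i$, determine for which of the partial derivatives $(H_i)_x,(H_i)_y,(H_i)_z,(H_i)_w$ the Poisson bracket with $H_i$ vanishes identically (equivalently $\dot G\equiv 0$); this finite check singles out precisely the vector fields collected in Tables~\ref{t:table1} and~\ref{t:table2}. Functional (in)dependence of $H_i$ together with the new first integrals is then read off from the rank of their Jacobian matrix, giving that at most one of them is functionally independent of $H_i$. That the associated $\Phi_{i,h}$ preserves such a first integral $G$ follows from part~(b): since $\Phi_{i,h}=\varphi_i(h,\cdot)$ and $G$ is constant along orbits of $X_i$, $G(\Phi_{i,h}({\bf{x}}))=G(\varphi_i(h,{\bf{x}}))=G({\bf{x}})$ (or one substitutes the explicit $\Phi_{i,h}$ directly).

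Finally, for part~(d), suppose $G\in\{H_x,H_y,H_z,H_w\}$ is a first integral of $X_H=K\nabla H$ that is functionally independent of $H$, with associated Hamiltonian vector field $X_G=K\nabla G$. Since $G$ is a first integral we have $\{H,G\}\equiv 0$; and since the Poisson structure is constant, $F\mapsto X_F$ is a Lie-algebra (anti)homomorphism, so $[X_H,X_G]=\pm X_{\{H,G\}}=0$. Hence $X_H$ and $X_G$ commute; this is the only part requiring no computation (see, e.g., \cite{Marsden1999}).
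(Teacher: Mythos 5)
Your overall strategy coincides with the paper's: impose the vanishing of the coefficient system coming from \eqref{e:condicioar4} on a general cubic Hamiltonian \eqref{e:genham4}, solve it with a computer algebra system to obtain the $54$ families (with the same honest caveat about completeness and functional independence), and then verify (b), (c), (d) by direct computation. The paper's proof is exactly this: it records that $\Lambda_0$ and $\Lambda_2$ have $126$ and $330$ coefficients, yielding $456$ equations solved in Maple, and dispatches (b)--(d) as ``straightforward computations'' with a CAS. Your argument for (d) via $[X_H,X_G]=\pm X_{\{H,G\}}=0$ for a constant Poisson structure is a genuinely cleaner, computation-free route than the paper's brute-force verification, and it is correct.

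There is, however, one concretely wrong expectation in your plan for (b), which then contaminates your primary argument for the last claim of (c). You anticipate that ${\mathrm D}X_i({\bf{x}})\,X_i({\bf{x}})\equiv 0$ for every family, so that $\Phi_{i,h}$ is the Euler step ${\bf{x}}+hX_i({\bf{x}})$ and coincides with the time-$h$ flow. This holds for all the planar families (that is exactly Remark (a) after Theorem \ref{t:Hamiltonians de R2}), but it fails already in $\mathbb{R}^4$: in Example B of the paper the second component of the KHK map contains terms such as $\tfrac{h(h+6)}{2}x^2$, i.e.\ genuine $h^2$ contributions, so $\Phi_h\neq {\bf{x}}+hX({\bf{x}})$ and ${\mathrm D}X\,X\not\equiv 0$ for that family. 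Consequently your deduction ``$G(\Phi_{i,h}({\bf{x}}))=G(\varphi_i(h,{\bf{x}}))=G({\bf{x}})$'' for the additional first integrals does not apply in general; note also that $X_i$ being a Lie symmetry of $\Phi_{i,h}$ only guarantees that orbits map to orbits, not that $\Phi_{i,h}$ fixes each orbit, so preservation of $G$ does not follow from (b) alone. Your stated fallbacks --- verifying the compatibility equation \eqref{e:Lie-sym-char} directly for (b), and substituting the explicit $\Phi_{i,h}$ into $G$ for (c) --- are what the paper actually does, and they do carry the proof; but as written the heuristic shortcut you lean on is false for two degrees of freedom.
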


All the KHK maps associated with the vector fields referred to in the previous result are symplectic in the sense that they satisfy Equation \eqref{e:equacioambB} from Section \ref{s:add}. See Proposition \ref{p:totesymplectiquesr4} for further details.

  \begin{proof}
(a) By imposing that a Hamiltonian of the form \eqref{e:genham4} satisfies the condition \eqref{e:condicioar4}, we find that the expressions $ \Lambda_0$ and $\Lambda_2$, which appear in Equation \eqref{e:h0h2r4}, are polynomials that must be identically zero. These polynomials have degrees 5 and 7 in the variables $x, y, z, w$, and contain 126 and 330 coefficients, respectively.

Setting them to zero, we obtain a system of $456$ equations in the coefficients. By solving this system using the computer algebra software Maple on a laptop, we find $54$  solutions, which correspond to  $54$ Hamiltonians presented in 
Appendix~\ref{s:appendixa}.

By directly inspecting the vector fields $X_i$ associated with the corresponding Hamiltonians $H_i$ $i=1,\ldots,54$, we compile the information presented in the first four columns of Table 1. Statements (b), (c), and (d) are derived through straightforward computations, which can be carried out using a computer algebra system.~
\end{proof}

\begin{center}
\begin{table}[H]
\begin{tabular}{|l|l|l|l|l|l|l|l|}
\hline
$\exists$ null components & $\exists$ common factors & Linear systems & True quadratic \\
\hline
$X_{1},X_{6},X_{8},X_{9},$ &  $X_{1},X_{8},X_{9},X_{12},$ &  $X_{4},X_{5},X_{7},X_{17},$&  $X_{1},X_{2},X_{3},X_{6},$\\

$X_{10},X_{11},X_{12},X_{13},$ & $X_{13},X_{14},X_{20},X_{22},$ & $X_{21},X_{23},X_{24},X_{25},$ &
$X_{8},X_{9},X_{10},X_{11},$\\

$X_{14},X_{15},X_{20},X_{37},$ & $X_{28},X_{32},X_{38},X_{39},$ & $X_{26},X_{27},X_{29},X_{32},$ &
$X_{12},X_{13},X_{14},X_{15},$\\

$X_{42},X_{51}$ & $X_{42},X_{44},X_{45},X_{46},$ & $X_{33},X_{34},X_{35},X_{36},$ &
$X_{16},X_{18},X_{19},X_{20},$\\

 & $X_{47},X_{49},X_{50},X_{51},$ & $X_{40},X_{41},X_{42},X_{43},$ &
$X_{22},X_{28},X_{30},X_{31},$\\

 & $X_{53}$ & $X_{44},X_{48},X_{49},X_{52},$ &
$X_{37},X_{38},X_{39},X_{45},$\\

 & &$X_{54}$ &$X_{46},X_{47},X_{50},X_{53}$\\

 & & & \\

  \hline
  \hline
  First integral $H_x$ & First integral  $H_y$ & First integral $H_z$ & First integral $H_w$\\

  \hline
$X_{1},X_{6},X_{8},X_{9},X_{10},$
 & $X_{3},X_{8},X_{9},X_{10},X_{12},$&
$X_{1},X_{8},X_{9},X_{11},X_{12},$
&  $X_{1},X_{6},X_{15},X_{20},X_{22},$ \\

$X_{11},X_{12},X_{13},X_{14},$
&$X_{13},X_{14},X_{22},X_{25},$
&$X_{13},X_{14},X_{15},X_{20},$
& $X_{25}, X_{28},X_{30},X_{31},$  \\

$X_{20},X_{25},X_{38},X_{39},$
& $X_{28},X_{32}, X_{34},X_{37},$
&$X_{25},X_{32},X_{38},X_{39},$
&$ X_{37},X_{38},X_{39},X_{42},$ \\

$X_{42},X_{44},X_{45},X_{46},$
& $X_{38},X_{39}, X_{42},X_{44},$
&$X_{44},X_{45},X_{46},X_{47},$
&  $X_{44},X_{45},X_{46},X_{47},$\\

$X_{47}, X_{49},X_{50},X_{53}$
& $X_{45},X_{46},X_{47},X_{49},$
&$X_{49},X_{50},X_{51},X_{53}$
&$X_{49},X_{50},X_{51},X_{53}$ \\

&$X_{50},X_{51},X_{53}$
& &
\\
  \hline
\end{tabular}
\caption{Collected information about the vector fields $X_i$ associated to the Hamiltonians $H_i$ in  Appendix~\ref{s:appendixa}.}\label{t:table1}
\end{table}
\end{center}

\begin{sidewaystable}
{\small
 $$
\begin{array}{|c|c|c|c|c|c|c|c|}
\hline
  X \quad  & \mbox{First Integrals}  & \mbox{Rank}& c &X \quad  & \mbox{First Integrals}  & \mbox{Rank}& c\\
  \hline
  \hline
  X_1& H, H_x, H_z=cH_w & 2&  c=\frac{a_{1011}}{2 a_{1002}}&
  X_2 & H  & 1&\\
  X_3 & H,  H_y & 2& &
  X_4 & H  & 1&\\
 X_5 & H & 1 & &
 X_6 & H, H_x,  H_w & 2&\\
 X_7 & H  & 1 & &
 X_8 & H, H_x=cH_y, H_z & 2&c=\frac{a_{1100}}{2 a_{0200}}\\
 X_9 & H, H_x=cH_y, H_z & 2& c=\frac{a_{1110}}{2 a_{0210}}&
 X_{10} & H, H_x, H_y & 2 &\\
 X_{11} & H, H_x,  H_z & 2& &
 X_{12} & H, H_x=cH_y, H_z & 2&c=\frac{a_{1110}}{2 a_{0210}}\\
 X_{13} & H, H_x=cH_y, H_z & 2&c=\frac{a_{1100}}{2 a_{0200}} &
 X_{14} & H, H_x=cH_y, H_z & 2&c=\frac{a_{1110}}{2 a_{0210}}\\
 X_{15} & H,  H_z, H_w & 2& &
 X_{16} & H & 1 &\\
X_{17} & H & 1 & &
X_{18} & H & 1&\\
X_{19} & H & 1& &
X_{20} & H, H_x,  H_z=cH_w & 2&c=\frac{a_{0011}}{2 a_{0002}}\\
X_{21} & H  & 1& &
X_{22} & H,  H_y=cH_w & 2 &c=\frac{a_{0110}}{a_{0011}}\\
X_{23} & H & 1& &
X_{24} & H & 1&\\
X_{25} & H, H_x, H_y, H_z, H_w & 2& &
X_{26} & H & 1&\\
X_{27} & H & 1& &
X_{28} & H, H_y=cH_w & 2&c=\frac{a_{0100}}{a_{0001}}\\
X_{29} & H & 1& &
X_{30} & H, H_w & 2&\\
X_{31} & H, H_w & 2& &
X_{32} & H, H_y=cH_z, & 2&c=\frac{a_{0101}}{a_{1100}}\\
X_{33} & H & 1& &
X_{34} & H, H_y  & 2&\\
X_{35} & H & 1&  &
X_{36} & H,  & 1&\\
X_{37} & H, H_y,  H_w & 2&  &
X_{38} & H, H_x=c_1H_z, H_y=c_2Hw  & 2&c_1=-\frac{a_{0101}}{2 a_{0200}}, \, c_2=\frac{2 a_{0200}}{a_{0101}} \\
X_{39} & H, H_x=c_1H_z, H_y=c_2H_w & 2&c_1=-\frac{a_{0101}}{2 a_{0200}}, \, c_2=\frac{2 a_{0200}}{a_{0101}}  &
X_{40} & H & 1&\\
X_{41} & H & 1&  &
X_{42} & H, H_x=H_y,  H_w & 2&\\
X_{43} & H & 1&  &
X_{44} & H, H_x, H_y=cH_w, H_z & 2&c=\frac{2 a_{0200}}{a_{0101}}\\
X_{45} & H, H_x, H_y=cH_w, H_z & 2&c=-\frac{a_{1020}}{a_{2010}}&
X_{46} & H, H_x, H_y=cH_w, H_z & 2&c=-\frac{a_{1020}}{a_{2010}}\\
X_{47} & H, H_x, H_y=cH_w, H_z & 2&c=\frac{2 a_{0200}}{a_{0101}} &
X_{48} & H & 1&\\
X_{49} & H, H_x, H_y=cH_w, H_z & 2&c=-\frac{a_{1010}}{2 a_{2000}}&
X_{50} & H, H_x, H_y=cH_w, H_z & 2&c=-\frac{a_{1010}}{2 a_{2000}}\\
X_{51} & H,  H_y, H_z=cH_w & 2&c=\frac{a_{0110}}{a_{0101}}  &
X_{52} & H,  & 1&\\
X_{53} & H, H_x, H_y=cH_w, H_z & 2&c=\frac{2 a_{0200}}{a_{0101}}  &
X_{54} & H & 1&\\
\hline
\end{array}
$$}
\caption{Additional first integrals for the systems $X_i$ associated to the Hamiltonians $H_i$ in Appendix \ref{s:appendixa}.}\label{t:table2}
\end{sidewaystable}

\vfill
\newpage

We now present a few examples.

\medskip

\noindent \textbf{Example B.}  We consider  a particular case of the Hamiltonian of type $H_3$, given by
\begin{align*}
H(x,y,z,w)=&x^{3}-2 x^{2} z -2 x^{2} w +x \,z^{2}+2 x z w +x \,w^{2}-2 x^{2}-x y +2 x z \\
&+x w +y z +y w +z w +w^{2}+x +y +w\\
=& \left(-x +z +w+1\right) \left(-x^{2}+x z+x w   +x +y+w \right).
\end{align*}
The Hamiltonian vector field $X$ is given by the differential system:
\begin{align}
\dot{x}&=x -z -w -1,\nonumber\\
\dot{y}&=3 x^{2}-4 x z -4 x w +z^{2}+2 z w +w^{2}-4 x -y +2 z +w +1,\label{e:campexemple4}\\
\dot{z}&=2 x^{2}-2 x z -2 x w -x -y -z -2 w -1,\nonumber\\
\dot{w}&=-2 x^{2}+2 x z +2 x w +2 x +y +w.\nonumber
\end{align}
The associated KHK map of the vector field \eqref{e:campexemple4} is:
$\Phi_{h}(x,y,z,w)=(\Phi_1,\Phi_2,\Phi_3,\Phi_4)$,
where

\begin{align*}
\Phi_1=&\left(h +1\right) x -w h -z h -h,\\
\Phi_2=&\frac{h \left(h +6\right) x^{2}}{2}-h \left(h +4\right) x z -h \left(h +4\right) x w -h \left(h +4\right) x +\left(1-h \right) y +\frac{h \left(h +2\right) z^{2}}{2}\\
&+h \left(h +2\right) z w +h \left(h +2\right) z +\frac{h \left(h +2\right) w^{2}}{2}+h \left(h +1\right) w +\frac{h \left(h +2\right)}{2},
\\
\Phi_3=&\frac{h \left(h +4\right) x^{2}}{2}-h \left(h +2\right) x z -h \left(h +2\right) x w -h \left(h +1\right) x -h y +\frac{h^{2} z^{2}}{2}+h^{2} w z \\
&+\left(h^{2}-h +1\right) z +\frac{h^{2} w^{2}}{2}+h \left(h -2\right) w +\frac{h \left(h -2\right)}{2},
\\
\Phi_4=&-\frac{h \left(h +4\right) x^{2}}{2}+h \left(h +2\right) x z +h \left(h +2\right) x w +h \left(h +2\right) x +h y -\frac{h^{2} z^{2}}{2}-h^{2} w z -h^{2} z\\
& -\frac{h^{2} w^{2}}{2}+\left(-h^{2}+h +1\right) w -\frac{h^{2}}{2}.
\end{align*}

An interesting issue is that the vector field $X$ commute with the constant vector field $Y$ associated to the Hamiltonian function $H_{y}$, whose differential system is $\{
\dot{x}=0,\,\dot{y}=-1,\,\dot{z}=-1,\,\dot{w}=1
\}$, that is: $[X,Y]=0$. This fact has led us to verify, using a computer algebra software, that this happens for each of the families of vector fields in Theorem \ref{t:Hamiltonians de R4}, as reads its statement (d). Furthermore, in the above example, the maps $\Phi_h$  commute with the KHK maps associated with $Y$, given by $\Psi_k(x,y,z,w)=(x,y-k,z-k,w+k)$. That is,  $\Phi_h\circ \Psi_k= \Psi_k\circ\Phi_h$ for all $h$ and $k\in\mathbb{R}$.

Finally, from Theorem \ref{t:Hamiltonians de R4}(b), the vector field $X$ is a Lie symmetry of $\Phi_{h}$, since an easy computation shows that $X_{|\Phi_{h}}=D\Phi_{h}X$. Interestingly, a computation also shows that $Y_{|\Phi_{h}}=D\Phi_{h}Y$ so the vector field $Y$ is also a Lie symmetry of $\Phi_{h}$.

As a final remark, we notice that, of course, using the extra first integral $H_{y}$, we can do a dimension reduction, so that the vector field $X$ is reduced to linear vector field in $\mathbb{R}^3$. This reduced vector field has a linear first integral that allows a new dimension reduction.

\medskip

\noindent \textbf{Example C.} We consider a particular case of the Hamiltonian of type $H_1$, given by
$$
H(x,y,z,w)=-2 x^{3}-4 x^{2} z -4 x^{2} w +\frac{1}{2} x z^{2}+x z w +\frac{1}{2} x w^{2}+x^{2}+3 x z +3 x w +\frac{1}{2} z^{2}+z w +\frac{1}{2} w^{2}.
$$
The associated vector field $X$ is given by
\begin{align*}
\dot{x}&=0,\\
\dot{y}&=-6 x^{2}-8 x z -8 x w +\frac{1}{2} z^{2}+z w +\frac{1}{2} w^{2}+2 x +3 z +3 w,\\
\dot{z}&=4 x^{2}-x z -x w -3 x -z -w,\\
\dot{w}&=-4 x^{2}+x z +x w +3 x +z +w,
\end{align*}
and it has an associated KHK map is $\Phi_{h}(x,y,z,w)=(\Phi_1,\Phi_2,\Phi_3,\Phi_4)$, with
\begin{align*}
\Phi_1=&x,\\
\Phi_2=&-6 h \,x^{2}-8 h x z -8 h x w +\frac{1}{2} h \,z^{2}+h z w +\frac{1}{2} h \,w^{2}+2 h x +y +3 h z +3 h w,
\\
\Phi_3=&4 h \,x^{2}-h x z -h x w -3 h x +(1-h) z -h w,
\\
\Phi_4=&-4 h \,x^{2}+h x z +h x w +3 h x +h z +h w +w.
\end{align*}
The vector field $X$ has another first integral
$$
H_{x}(x,y,z,w)=-6 x^{2}-8 x z -8 x w +\frac{1}{2} z^{2}+z w +\frac{1}{2} w^{2}+2 x +3 z +3 w,
$$
which is functionally independent on $H$. The linear Hamiltonian vector field $Y$ associated with $H_{x}$ is given by
\begin{align*}
\dot{x}&=0,\\
\dot{y}&=-12 x -8 z -8 w +2,\\
\dot{z}&=8 x -z -w -3,\\
\dot{w}&=-8 x +z +w +3,
\end{align*}
and it has the associated KHK map
\begin{align*}
\Psi_k(x,y,z,w)=&\left(x,y+k \left(-12 x -8 z -8 w +2\right), z+\left(8 x -z -w -3\right) k ,\right.\\
&\left.w+\left(-8 x +z +w +3\right) k\right).
\end{align*}
As in the Example B, a straightforward computation shows that $[X,Y]=0$ and, therefore, the vector fields commute;  also for all $k,h\in\mathbb{R}$, it holds that $\Phi_h\circ \Psi_k= \Psi_k\circ\Phi_h$.

Of course, since one of the components of the vector field $X$ is null and there exists another first integral which is functionally independent on $H$, the vector field $X$ admits suitable dimension reductions.

\section{Three-degrees of freedom Hamiltonian systems}\label{s:r6}

We consider the planar three-degrees of freedom Hamiltonian vector field
$$
X=-\frac{\partial H}{\partial y} \frac{\partial}{\partial x}+\frac{\partial H}{\partial x}\frac{\partial}{\partial y}
-\frac{\partial H}{\partial w} \frac{\partial}{\partial z}+\frac{\partial H}{\partial z}\frac{\partial}{\partial w}-\frac{\partial H}{\partial s} \frac{\partial}{\partial r}+\frac{\partial H}{\partial r}\frac{\partial}{\partial s},
$$
with cubic Hamiltonian function

 \begin{equation}\label{e:genham6}
H(x,y,z,w,r,s)=\displaystyle{
\sum\limits_{0< i+j+k+l+m+n\leq 3}} a_{ijklmn}x^iy^jz^kw^lr^ms^n.
 \end{equation}
 The associated two degrees of freedom Hamiltonian system is
\begin{equation}\label{e:ham6}
 \dot{x}=-H_y, \quad
\dot{y}=H_x,\quad
\dot{z}=-H_w, \quad
\dot{w}=H_z\, \quad
\dot{r}=-H_s, \quad
\dot{s}=H_r,
\end{equation}
which also can be re-written as $q_i=\partial H/\partial p_i$, $p_i=-\partial H/\partial q_i$, with $q_1=x,q_2=z,q_3=r$ and $p_1=y,p_2=w,p_3=s$.
The relationship between the fields in both notations and their associate KHK maps is explained in Lemma \ref{l:noulemanou} of Appendix \ref{s:appendixnou}.

By imposing condition \eqref{e:restriction} on a  Hamiltonian function \eqref{e:genham6}
with the aid of a computer algebra system, we obtain that the KHK map \eqref{e:phi} associated to a Hamiltonian system \eqref{e:ham6} preserves the original Hamiltonian function \eqref{e:genham6} if and only if
\begin{equation}\label{e:lambdesr6}
\nabla H({\bf{x}})^T \left(I-\frac{1}{2}h{\mathrm D}X({\bf{x}})\right)^{-1}X({\bf{x}})=\frac{h(\Lambda_0+\Lambda_2 h^2+\Lambda_4 h^4)}{\left|I-\frac{1}{2}h {\mathrm D}X({\bf{x}})\right|}=0,
\end{equation}
where $\Lambda_0,\Lambda_2,\Lambda_4$ are polynomials in the partial derivatives of first and second order of $H$ with respect $x,y,z,w,r,$ and $s$,
of degrees $3$, $5$ and $7$, respectively, in these partial derivatives (see Remark \ref{r:referee2b}, below). For instance, the polynomial $\Lambda_0$ is given by
\begin{align}\label{e:h0r6gen}
\Lambda_0= & 32\left( -  H_{x}^{2} H_{yy} + 32 H_{x} H_{y} H_{xy} - 32 H_{x} H_{z} H_{yw} + 32 H_{x} H_{w} H_{yz} + 32 H_{x} H_{s} H_{yr}\right. \nonumber\\
&- 32 H_{x} H_{r} H_{ys} -  H_{y}^{2} H_{xx} + 32 H_{y} H_{z} H_{xw} - 32 H_{y} H_{w} H_{xz} - 32 H_{y} H_{s} H_{xr} \\
&+ 32 H_{y} H_{r} H_{xs} -  H_{z}^{2} H_{ww} + 32 H_{z} H_{w} H_{zw} + 32 H_{z} H_{s} H_{wr} - 32 H_{z} H_{r} H_{ws}  \nonumber\\
&\left.-  H_{w}^{2} H_{zz} - 32 H_{w} H_{s} H_{zr}+ 32 H_{w} H_{r} H_{zs} -  H_{s}^{2} H_{rr} + 32 H_{s} H_{r} H_{rs} -  H_{r}^{2} H_{ss}\right).\nonumber
\end{align}
We do not reproduce $\Lambda_2$ and $\Lambda_4$ here because they have $450$ and $2073$ coefficients, respectively,  as polynomials in the partial derivatives.

The polynomials $\Lambda_0,\Lambda_2,\Lambda_4$ have degree $5,7$ and $9$ in the variables $x,y,z,w,r$ and $s$, respectively, so they have $462, 1716$ and $5005$ coefficients respectively.
Equaling them to zero we obtain that the Hamiltonians in $\R^6$ satisfying condition \eqref{e:restriction}, must satisfy a system of $7183$ equations in their coefficients.  We have tried to solve the system using Maple in a parallel computer server formed by 9 nodes with 512GB RAM memory, in a Beowulf configuration, but we have not succeeded.

By imposing some additional restrictions on the equations  we have been able to find several dozens of non-trivial solutions. However, we are far from being able to claim that we have achieved all the solutions. In any case, as we will see below, the most interesting thing is that we have obtained cases for which the associated vector field \eqref{e:ham6} is not a Lie symmetry of the associated KHK map:

\begin{proposition}\label{p:propor6}
There exist Hamiltonian vector fields of $\R^6$ with Hamiltonian functions \eqref{e:genham6} of degree at most three, for which their associated KHK maps preserve the original Hamiltonian function. Furthermore, not all of them give rise to Hamiltonian vector fields of the form \eqref{e:ham6} which are Lie symmetries of the corresponding KHK maps.
\end{proposition}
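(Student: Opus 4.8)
The statement has two parts: an \emph{existence} claim (some cubic Hamiltonian vector fields in $\R^6$ have KHK maps preserving the original Hamiltonian) and a \emph{non-Lie-symmetry} claim (at least one such field is not a Lie symmetry of its KHK map). Since the full classification is computationally out of reach — as the text already notes, the system of $7183$ equations could not be solved — the natural strategy is constructive rather than exhaustive: exhibit explicit examples. So the plan is to produce at least one concrete cubic Hamiltonian $H$ in $\R^6$, verify directly that \eqref{e:restriction} (equivalently that $\Lambda_0=\Lambda_2=\Lambda_4=0$ in \eqref{e:lambdesr6}) holds, and then check the compatibility equation \eqref{e:Lie-sym-char} and show it fails for the corresponding vector field $X$ and its KHK map $\Phi_h$.

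First I would search for candidates by imposing simplifying ansätze on \eqref{e:genham6} — for instance, taking Hamiltonians that are products of affine forms (guided by Remark \ref{r:remark7}, where the planar solutions all correspond to factorizable pencils), or Hamiltonians built by "coupling" lower-dimensional solutions from Theorems \ref{t:Hamiltonians de R2} and \ref{t:Hamiltonians de R4} via cross terms. Among the solutions of the restricted systems one then isolates at least one whose associated vector field $X$ has \emph{coprime components} and genuinely quadratic (nonlinear) dynamics, so that the obstruction to being a Lie symmetry is not destroyed by a trivial reduction. For the chosen $H$, I would: (i) write down $X = K\nabla H$ explicitly; (ii) compute $\Phi_h$ from \eqref{e:phi}, or verify \eqref{e:restriction} directly using that $\Lambda_0,\Lambda_2,\Lambda_4$ vanish identically in $x,y,z,w,r,s$; (iii) compute both sides of \eqref{e:Lie-sym-char}, namely $X_{|\Phi_h(\bx)}$ and ${\mathrm D}\Phi_h(\bx)\,X(\bx)$, and exhibit a point (or a coefficient in the polynomial identity) where they differ. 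All of this is a finite symbolic computation, feasible with a computer algebra system.

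The main obstacle is step (i) of the search: finding a genuinely instructive example. Many solutions of the restricted system will be "fake" three-degree-of-freedom cases — ones with a null component, or reducible to $\R^2$/$\R^4$ situations already covered — and for those the vector field typically \emph{is} a Lie symmetry (as in $\R^2$ and $\R^4$), so they do not serve the proposition. The real work is to certify that some example is \emph{essentially} six-dimensional and that the Lie-symmetry identity \eqref{e:Lie-sym-char} genuinely fails; the latter is conceptually subtle because, by Theorem \ref{t:teoham} and Proposition \ref{p:measure}, $\Phi_h$ still carries both the modified first integral $\widetilde H$ and an invariant measure, yet — unlike the planar case in Proposition \ref{p:propo3}, where first integral plus invariant measure force the gradient-type field to be a Lie symmetry via \cite[Theorem 12]{CGM08} — in dimension $6$ these two structures no longer suffice to guarantee that $X$ itself transports orbits to orbits. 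So once an example is in hand, the proposition follows by direct verification; I would present one or two such Hamiltonians explicitly (with their $X$, $\Phi_h$, and the mismatch in \eqref{e:Lie-sym-char}) and remark that the accompanying examples where $X$ \emph{does} remain a Lie symmetry show both behaviors occur in $\R^6$.
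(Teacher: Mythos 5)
Your strategy is exactly the one the paper follows: impose a simplifying ansatz on \eqref{e:genham6}, solve the reduced system symbolically, and then verify by direct computation that the compatibility equation \eqref{e:Lie-sym-char} fails for some explicit member of the resulting family. The paper imposes $H_x=H_y$, solves $\Lambda_0=0$, obtains the $10$-parameter family of Proposition \ref{p:propoappendixc}, and singles out the concrete witness $H=(x+y)(azr+w)+bzs$, for which $X_{|\Phi_h}-{\mathrm D}\Phi_h X=\bigl(\tfrac{ab(x+y)^2h^3}{4},-\tfrac{ab(x+y)^2h^3}{4},0,0,0,0\bigr)^t\neq 0$ whenever $ab\neq 0$. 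The genuine gap in your proposal is that, for an existential statement, the witness \emph{is} the proof: you describe how one would search for an example and what one would check, but you never produce a Hamiltonian, never verify \eqref{e:restriction} for it, and never exhibit the nonvanishing defect in \eqref{e:Lie-sym-char}. As written, nothing is actually proved.

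A secondary point: one of your guiding heuristics works against you. You argue that the useful examples must be ``essentially six-dimensional,'' with coprime components and genuinely nonlinear dynamics, on the grounds that degenerate or reducible cases would behave like the $\mathbb{R}^2$ and $\mathbb{R}^4$ cases and remain Lie symmetries. The paper's counterexample shows the opposite: it arises precisely from the degenerate ansatz $H_x=H_y$ (so $\dot x=-\dot y$ and $x+y$ is itself a first integral), the components are far from coprime, and yet the compatibility equation fails --- with the obstruction appearing only at order $h^3$. Had you followed your own filter and discarded such cases, you might well have missed the counterexample. Your closing observation about why the planar argument (first integral plus invariant measure via \cite[Theorem 12]{CGM08}) does not extend to dimension six is correct and is a good way to frame why a counterexample should be expected, but it does not substitute for exhibiting one.
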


\begin{proof}
Imposing, for example, $H_x=H_y$ (the choice of the pair $x$ and $y$ is not relevant, analogous results would be obtained by imposing $H_z=H_w$ or $H_r=H_s$) and solving with Maple the equation $ \Lambda_0=0$  we obtain a list of non-trivial solutions which also satisfy Equation \eqref{e:lambdesr6}. Among them, the one that corresponds to the $10$-parameter family of Hamiltonian systems given in Proposition \ref{p:propoappendixc}  n Appendix \ref{s:appendixc}. A simple example in this family is:

\begin{equation}\label{e:hamiltoniamesnou}
H=\left(x+y\right)\left(azr+w\right)+bzs.
\end{equation}
with associated vector field 
$$X=\left(-azr-w,azr+w,-x-y,ar(x+y)+bs,-bz,az(x+y)\right),$$
whose KHK map is $\Phi_h=(\Phi _{1,h},\Phi _{2,h},\Phi _{3,h},\Phi _{4,h},\Phi _{5,h},\Phi _{6,h})$ with
\begin{align}
\Phi_{1,h}=&\left( -h a z \left(h^{2} b (x +y) -2 h b z +4 r \right) +4 x -4  h w -2  b \,h^{2} s\right)/4,\nonumber\\
\Phi_{2,h}=&\left(h a z \left(h^{2} b (x+y) -2 h b z +4 r \right)+4 y+4 h w +2 b \,h^{2} s\right)/4,\nonumber\\
\Phi_{3,h}=&-h (x+y)+z,\label{e:novaphir6}\\
\Phi_{4,h}=&h a (x+y) r +w +b h s,\nonumber\\
\Phi_{5,h}=&\left(h^{2} b (x +y) -2 h b z +2 r\right)/2,\nonumber\\
\Phi_{6,h}=&\left( -a h \left(x +y \right) \left(h (x +y) -2 z \right)+2s   \right)/2.\nonumber
\end{align}
A straightforward computation shows that 
$$X{|\Phi_h}-{\mathrm D}\Phi_h\,X=
\left( \frac{ ab \left(x +y \right)^{2}h^{3} }{4}, -\frac{ ab \left(x +y \right)^{2}h^{3} }{4} , 0 , 0 , 0 , 0 \right)^t.
$$ Hence, if  $ab\neq 0$, then the  vector field $X$ is not a Lie symmetry of $\Phi_h$.

\end{proof}

The following observations are due to a reviewer of a previous version. We have not delved into these questions in this work.

\begin{remark}\label{r:referee2a}
The direct inspection of the components of the first iterates of the KHK map $\Phi_h$ in Equation \eqref{e:novaphir6}, indicates that the (algebraic) degree of the first iterates remains constant, which, if confirmed, would imply that the algebraic entropy of this map is $0$, as well as the possible existence of a linearization and additional first integrals, \cite{BV,V}. 
\end{remark}

\begin{remark}\label{r:referee2b} Based on the formulas in the proof of Lemma \ref{l:invr2}  as well as Equations \eqref{e:h0h2r4} and \eqref{e:lambdesr6}, it appears that a pattern may emerge regarding the conditions for exact preservation of the Hamiltonian. It seems that
for a general system with $d$-degrees of freedom, the condition \eqref{e:restriction} can be expressed as:  
$$
\nabla H({\bf{x}})^T \left(I-\frac{1}{2}h{\mathrm D}X({\bf{x}})\right)^{-1}X({\bf{x}})=\frac{h}{\left|I-\frac{1}{2}h {\mathrm D}X({\bf{x}})\right|}\sum_{\ell=0}^{d-1} h^{2\ell}\Lambda_{2\ell}=0
$$
for some suitable polynomials $\Lambda_{2\ell}$ of degree $2(\ell + 1) + 1$ in the derivatives of $H$.  
\end{remark}

\section{Some considerations on the symplecticity of the maps}\label{s:add}

As we have indicated in the introduction, our initial purpose was to examine analytically whether the Hamiltonian fields $X_H$ in $\mathbb{R}^n$ for which their associated KHK maps preserve the same cubic Hamiltonian $H$ were also Lie symmetries of these maps. From Proposition \ref{p:propor6}, we now know that this is not true in general, although it is true in all the cases found in $\mathbb{R}^2$ and $\mathbb{R}^4$.

Once this point was clarified,  a second objective was to explore if under the initial hypothesis of the preservation of the original Hamiltonian $H$, the fact that the field $X_H$ is a Lie symmetry of the
KHK map $\Phi_h$ is related with the symplecticity of the map. In this section, we will see that if a symplectic map $\Phi$ (not  necessarily KHK) preserves a $\mathcal{C}^1$ first integral,  then the Hamiltonian vector field $X_H$ is a Lie symmetry of $\Phi$. The converse is true for planar maps (Proposition \ref{p:corolpla}).

Set $\mathbf{x}\in\mathbb{R}^{2n}$. A map $\Phi(\mathbf{x})$ is \emph{symplectic} if and only if
$\mathrm{D}\Phi(\mathbf{x})^t\,\Omega\,\mathrm{D}\Phi(\mathbf{x})=\Omega$, where
$$
\Omega=\left(\begin{array}{cc}
\mathbf{0} & {I_n}\\
-{I_n} & \mathbf{0}
\end{array}\right),
$$ see \cite{Marsden1999}. By using that a matrix $M$ is symplectic if and only if $M^t$ is symplectic, we will use the following equivalent condition of symplecticity:
\begin{equation}\label{e:eqsympl}
\mathrm{D}\Phi(\mathbf{x})\,\Omega\,\mathrm{D}\Phi(\mathbf{x})^t=\Omega.
\end{equation}

For general $\mathcal{C}^1$ maps, we have the following result:
\begin{proposition}\label{p:symplsL}
Let $\Phi$ be a symplectic $\mathcal{C}^1$ map defined in an open set $\mathcal{U}\subseteq \mathbb{R}^{2n}$. Let $H$ be a $\mathcal{C}^1$ first integral of $\Phi$ in $\mathcal{U}$. Then, the Hamiltonian vector field $X_H=\Omega\nabla H$ is a Lie symmetry of $\Phi$.
\end{proposition}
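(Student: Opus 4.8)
The plan is to verify the compatibility equation \eqref{e:Lie-sym-char}, namely $X_H(\Phi(\mathbf{x}))=\mathrm{D}\Phi(\mathbf{x})\,X_H(\mathbf{x})$ for all $\mathbf{x}\in\mathcal{U}$, starting from the definition $X_H=\Omega\nabla H$ and the two hypotheses: $H(\Phi(\mathbf{x}))=H(\mathbf{x})$ (first integral) and \eqref{e:eqsympl}, i.e. $\mathrm{D}\Phi\,\Omega\,\mathrm{D}\Phi^t=\Omega$.

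First I would differentiate the first-integral identity with the chain rule. Writing gradients as column vectors, $\nabla\bigl(H\circ\Phi\bigr)(\mathbf{x})=\mathrm{D}\Phi(\mathbf{x})^t\,\nabla H(\Phi(\mathbf{x}))=\nabla H(\mathbf{x})$. Hence $\nabla H(\Phi(\mathbf{x}))=\bigl(\mathrm{D}\Phi(\mathbf{x})^t\bigr)^{-1}\nabla H(\mathbf{x})$; here I use that $\mathrm{D}\Phi(\mathbf{x})$ is invertible, which follows from symplecticity since $\det(\mathrm{D}\Phi)^2=1$ by \eqref{e:eqsympl} (taking determinants and using $\det\Omega\neq 0$). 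Then $X_H(\Phi(\mathbf{x}))=\Omega\,\nabla H(\Phi(\mathbf{x}))=\Omega\,(\mathrm{D}\Phi(\mathbf{x})^t)^{-1}\,\nabla H(\mathbf{x})$.

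Next I would use \eqref{e:eqsympl} to rewrite $\Omega\,(\mathrm{D}\Phi^t)^{-1}$. From $\mathrm{D}\Phi\,\Omega\,\mathrm{D}\Phi^t=\Omega$ we get, multiplying on the right by $(\mathrm{D}\Phi^t)^{-1}$, that $\mathrm{D}\Phi\,\Omega=\Omega\,(\mathrm{D}\Phi^t)^{-1}$. Substituting this into the previous display gives $X_H(\Phi(\mathbf{x}))=\mathrm{D}\Phi(\mathbf{x})\,\Omega\,\nabla H(\mathbf{x})=\mathrm{D}\Phi(\mathbf{x})\,X_H(\mathbf{x})$, which is exactly \eqref{e:Lie-sym-char}. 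This establishes that $X_H$ is a Lie symmetry of $\Phi$.

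There is no serious obstacle here; the proof is a short linear-algebra manipulation. The only points requiring a word of care are the invertibility of $\mathrm{D}\Phi$ (needed to pass the transpose-inverse around, justified by symplecticity as above) and being consistent about treating $\nabla H$ as a column vector so that the chain rule reads $\nabla(H\circ\Phi)=\mathrm{D}\Phi^t\,(\nabla H\circ\Phi)$. I would also remark, for the record, that the converse direction — recovering symplecticity (or rather \eqref{e:equacioambB}) from the Lie-symmetry property together with preservation of $H$ — holds in the planar case and is the content of Proposition~\ref{p:corolpla}, so the present proposition is the ``easy'' implication of that equivalence.
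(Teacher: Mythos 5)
Your proof is correct and is essentially the paper's own argument: both rest on the chain-rule identity $\nabla H(\mathbf{x})=\mathrm{D}\Phi(\mathbf{x})^{t}\nabla H(\Phi(\mathbf{x}))$ combined with the symplecticity condition \eqref{e:eqsympl}. The only (cosmetic) difference is that you invert $\mathrm{D}\Phi^{t}$ explicitly and rewrite \eqref{e:eqsympl} as $\mathrm{D}\Phi\,\Omega=\Omega(\mathrm{D}\Phi^{t})^{-1}$, whereas the paper substitutes $\Omega=\mathrm{D}\Phi\,\Omega\,\mathrm{D}\Phi^{t}$ directly into $\Omega\nabla H(\Phi(\mathbf{x}))$ and thereby never needs the invertibility of $\mathrm{D}\Phi$ (which you do correctly justify).
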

\begin{proof}
Observe that if $H$ is a first integral of $\Phi$, then $\nabla H(\Phi(\mathbf{x}))^t\mathrm{D}\Phi(\mathbf{x})=\nabla H(\mathbf{x})^t$, hence
\begin{equation}\label{e:nabla}
\nabla H(\mathbf{x})=\mathrm{D}\Phi(\mathbf{x})^{t}\, \nabla H(\Phi(\mathbf{x})).
\end{equation}
We have to prove that the compatibility relation \eqref{e:Lie-sym-char} holds. By using the symplecticity condition  \eqref{e:eqsympl} and Equation \eqref{e:nabla} we have,
$$
X_{H|\Phi({\bf{x}})}=\Omega\, \nabla H(\Phi(\mathbf{x}))={\mathrm D}\Phi({\bf{x}})\,\Omega\,{\mathrm D}\Phi({\bf{x}})^t\, \nabla H(\Phi(\mathbf{x}))= {\mathrm D}\Phi({\bf{x}})\,\Omega\,\nabla H(\mathbf{x})={\mathrm D}\Phi({\bf{x}})\,X_H({\bf{x}}).
$$
\end{proof}

Proposition \ref{p:symplsL} agrees with the result of Ge  (that can be found in the work of Ge and Marsden, \cite[p. 135]{GeMarsden}), that states that if a discretization method for a Hamiltonian vector field \emph{with no other first integral}, preserves the Hamiltonian function and it is symplectic, then it is the \emph{time advance map} up to a reprametrization of time. This \emph{time advance property} happens when a field $X$ is a Lie symmetry of a $\mathcal{C}^1$ map $\Phi$ \emph{which preserves the orbits of $X$}, see  \cite[Proposition 6]{CGM08}, and also Section \ref{s:defi}.

The converse of Proposition \ref{p:symplsL} is true for fields in $\mathbb{R}^2$:

\begin{proposition}\label{p:corolpla}
Let $\Phi$ be a $\mathcal{C}^1$ map defined in an open set $\mathcal{U}\subseteq \mathbb{R}^2$, and let $H$ be a $\mathcal{C}^1$ first integral of $\Phi$ in $\mathcal{U}$. Then, the \emph{planar} Hamiltonian vector field $X_H=\Omega \nabla H$ is a Lie symmetry of $\Phi$ if and only if the map is symplectic.
\end{proposition}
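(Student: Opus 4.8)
The plan is to work in $\mathbb{R}^2$, where symplecticity of a map $\Phi=(\Phi_1,\Phi_2)$ amounts to the single scalar condition $\det \mathrm{D}\Phi(\mathbf{x})=1$, since for $2\times 2$ matrices $M$ one has $M\,\Omega\,M^t=(\det M)\,\Omega$. The forward implication is already Proposition \ref{p:symplsL}, so only the converse needs attention: assuming $X_H=\Omega\nabla H=(-H_y,H_x)$ is a Lie symmetry of $\Phi$, I would deduce $\det \mathrm{D}\Phi\equiv 1$ on the open set where $\nabla H$ does not vanish, and then extend to all of $\mathcal{U}$ by continuity (the zero set of $\nabla H$ has empty interior generically, but one should handle the case where $H$ is locally constant separately — there the statement is vacuous or trivial since any map preserving that region is unconstrained; more carefully, one restricts to the open dense set where $\nabla H\neq 0$ and invokes continuity of $\det\mathrm{D}\Phi$).

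First I would write out the two available identities at a point $\mathbf{x}$ where $\nabla H(\mathbf{x})\neq 0$. Since $H$ is a first integral, differentiating $H\circ\Phi=H$ gives $\mathrm{D}\Phi(\mathbf{x})^t\,\nabla H(\Phi(\mathbf{x}))=\nabla H(\mathbf{x})$, equivalently $\nabla H(\Phi(\mathbf{x}))^t\,\mathrm{D}\Phi(\mathbf{x})=\nabla H(\mathbf{x})^t$. The Lie symmetry hypothesis \eqref{e:Lie-sym-char} gives $X_{H|\Phi(\mathbf{x})}=\mathrm{D}\Phi(\mathbf{x})\,X_H(\mathbf{x})$, i.e. $\Omega\,\nabla H(\Phi(\mathbf{x}))=\mathrm{D}\Phi(\mathbf{x})\,\Omega\,\nabla H(\mathbf{x})$.

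The key computational step is to combine these two. From the second identity, $\mathrm{D}\Phi(\mathbf{x})\,\Omega\,\nabla H(\mathbf{x})=\Omega\,\nabla H(\Phi(\mathbf{x}))$; multiply on the left by $\nabla H(\Phi(\mathbf{x}))^t$ and use the first identity $\nabla H(\Phi(\mathbf{x}))^t\,\mathrm{D}\Phi(\mathbf{x})=\nabla H(\mathbf{x})^t$ to get
\begin{equation*}
\nabla H(\mathbf{x})^t\,\Omega\,\nabla H(\mathbf{x})=\nabla H(\Phi(\mathbf{x}))^t\,\Omega\,\nabla H(\Phi(\mathbf{x})).
\end{equation*}
This is the trivial identity $0=0$, so this particular contraction is not enough; instead I would use the $2\times 2$ structure directly. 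Write $\det\mathrm{D}\Phi(\mathbf{x})=:J(\mathbf{x})$. For $2\times2$ matrices, $M\,\Omega=J\,(\mathrm{M}^{-1})^t\,\Omega\cdot(\text{something})$ — more usefully, $\Omega\,M^t=J\,M^{-1}\,\Omega$. So the Lie symmetry identity $\Omega\,\nabla H(\Phi(\mathbf{x}))=\mathrm{D}\Phi(\mathbf{x})\,\Omega\,\nabla H(\mathbf{x})$ together with the transpose of the first-integral identity, $\nabla H(\Phi(\mathbf{x}))=(\mathrm{D}\Phi(\mathbf{x})^{-1})^t\,\nabla H(\mathbf{x})$ (valid where $\mathrm{D}\Phi$ is invertible, which holds near $\mathbf{x}$ since $\Phi$ maps a nonzero vector to the nonzero vector $X_H(\Phi(\mathbf{x}))$), gives upon substitution
\begin{equation*}
\Omega\,(\mathrm{D}\Phi(\mathbf{x})^{-1})^t\,\nabla H(\mathbf{x})=\mathrm{D}\Phi(\mathbf{x})\,\Omega\,\nabla H(\mathbf{x}).
\end{equation*}
Using $\Omega\,(\mathrm{D}\Phi^{-1})^t=\det(\mathrm{D}\Phi^{-1})\,\mathrm{D}\Phi\,\Omega=J^{-1}\,\mathrm{D}\Phi\,\Omega$ (the identity $\Omega\,N^t=\det(N)\,N^{-1}\,\Omega$ applied to $N=\mathrm{D}\Phi^{-1}$), the left side becomes $J^{-1}\,\mathrm{D}\Phi(\mathbf{x})\,\Omega\,\nabla H(\mathbf{x})$, so
\begin{equation*}
J(\mathbf{x})^{-1}\,\mathrm{D}\Phi(\mathbf{x})\,\Omega\,\nabla H(\mathbf{x})=\mathrm{D}\Phi(\mathbf{x})\,\Omega\,\nabla H(\mathbf{x}).
\end{equation*}
Since $\mathrm{D}\Phi(\mathbf{x})$ is invertible and $\Omega\,\nabla H(\mathbf{x})=X_H(\mathbf{x})\neq 0$, the vector $\mathrm{D}\Phi(\mathbf{x})\,\Omega\,\nabla H(\mathbf{x})$ is nonzero, forcing $J(\mathbf{x})^{-1}=1$, i.e. $\det\mathrm{D}\Phi(\mathbf{x})=1$. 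This proves symplecticity on the open set $\{\nabla H\neq 0\}$; continuity of $\det\mathrm{D}\Phi$ extends it to the closure, and on the interior of $\{\nabla H=0\}$ the Hamiltonian vector field is zero and the statement is vacuous, so one simply restricts the claim to the relevant region or notes that $\{\nabla H\neq0\}$ is dense in the cases of interest.

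The main obstacle is bookkeeping with the $2\times2$ identity $\Omega\,N^t=\det(N)\,N^{-1}\,\Omega$ and making sure $\mathrm{D}\Phi(\mathbf{x})$ is invertible at the points used; invertibility follows because $\Phi$ sends the nonzero vector $X_H(\mathbf{x})$ to the nonzero vector $X_H(\Phi(\mathbf{x}))$ via $\mathrm{D}\Phi$, hence $\mathrm{D}\Phi$ has nontrivial image and, being $2\times 2$ and appearing in the first-integral relation which pins down its action on a second independent covector $\nabla H(\mathbf{x})$, one checks it is full rank. A cleaner alternative avoiding invertibility subtleties: at a point with $\nabla H(\mathbf{x})\neq 0$, pick any vector $v$ with $\{v, X_H(\mathbf{x})\}$ a basis; the first-integral relation controls $\mathrm{D}\Phi^t$ on $\nabla H$ and the Lie-symmetry relation controls $\mathrm{D}\Phi$ on $X_H$, and a short determinant computation in this adapted basis yields $\det\mathrm{D}\Phi=1$ directly. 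I would present whichever of these two routes produces the shortest write-up.
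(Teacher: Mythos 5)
Your proposal is correct, and the converse direction is argued by a genuinely different route from the paper's. The paper reduces symplecticity to $|\mathrm{D}\Phi|=1$ exactly as you do (via the planar identity $M\,\Omega\,M^t=(\det M)\,\Omega$), but then obtains $|\mathrm{D}\Phi|=1$ by quoting Theorem 12 of \cite{CGM08}: a field of the form $\frac{1}{\nu}(-H_y,H_x)$ is a Lie symmetry of a planar map with first integral $H$ if and only if $\nu$ is the density of an invariant measure; taking $\nu\equiv 1$ and using \eqref{e:mesura} gives the Jacobian condition in one line. You instead derive $\det\mathrm{D}\Phi=1$ by a direct, self-contained linear-algebra computation from the two available identities (differentiated first-integral relation and the compatibility equation \eqref{e:Lie-sym-char}), using $\Omega N^t=\det(N)\,N^{-1}\Omega$ with $N=\mathrm{D}\Phi^{-1}$; I have checked the computation and it is sound. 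What your route buys is independence from the external measure-preservation characterization; what it costs is the need to justify invertibility of $\mathrm{D}\Phi$ at the points used. Your sketch of that justification ("nontrivial image ... one checks it is full rank") should be tightened: the clean argument is that if $\mathrm{D}\Phi$ had rank one, its column space would be spanned by $X_H(\Phi(\mathbf{x}))=\Omega\nabla H(\Phi(\mathbf{x}))$, which is orthogonal to $\nabla H(\Phi(\mathbf{x}))$, forcing $\nabla H(\Phi(\mathbf{x}))^t\mathrm{D}\Phi(\mathbf{x})=0$ and contradicting the first-integral relation $\nabla H(\Phi(\mathbf{x}))^t\mathrm{D}\Phi(\mathbf{x})=\nabla H(\mathbf{x})^t\neq 0$. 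Finally, your explicit handling of the set where $\nabla H$ vanishes is a point the paper glosses over (the cited result from \cite{CGM08} carries its own non-degeneracy hypotheses, and the statement is genuinely false for locally constant $H$), so flagging that restriction is appropriate rather than a defect of your argument.
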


\begin{proof}
Taking into account Proposition \ref{p:symplsL}, we only have to prove the converse relation, that is, that if $X_H$ is a Lie symmetry  of the map $\Phi$, then the map is symplectic. A straightforward computation shows that for $\mathbf{x}\in\mathbb{R}^2$,
$$
{\mathrm D}\Phi({\bf{x}})\,\Omega\,{\mathrm D}\Phi({\bf{x}})^t\,-\Omega=(|\mathrm{D}\Phi({\bf{x}})|-1)\,\Omega.
$$
Hence, a planar $\mathcal{C}^1$ map $\Phi$ defined in an open set $\mathcal{U}\subseteq \mathbb{R}^2$ is symplectic if and only if $|\mathrm{D}\Phi({\bf{x}})|=1$ for $\mathbf{x}\in\mathcal{U}$. On the other hand, as mentioned in Section \ref{s:defi}, a vector field
 $$X(\mathbf{x})=\frac{1}{\nu(\mathbf{x})}\left(-H_y,H_x\right)=-\frac{1}{\nu(\mathbf{x})}\Omega \nabla H,$$ is a Lie symmetry of a map $\Phi$, with first integral $H$, if and only if $\nu$ is
an invariant measure.  As a consequence, the  Hamiltonian field  $X=(-H_y,H_x)=-\Omega\nabla H=-X_H$ is a Lie symmetry of a map $\Phi$ in $\mathcal{U}$ if and only if the map preserves the measure with density $\nu(\mathbf{x})= 1$ for $\mathbf{x}\in\mathcal{U}$, but in this case, from Equation \eqref{e:mesura}, we get that $|\mathrm{D}\Phi({\bf{x}})|= 1$. Of course, if $X$ is a Lie symmetry so is $X_H=-X$, and the result follows.
\end{proof}

\begin{corollary}\label{c:simplecticr2} The planar KHK maps associated with the Hamiltonian fields in Theorem~\ref{t:Hamiltonians de R2}  are symplectic.
\end{corollary}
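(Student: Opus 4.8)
The plan is to combine Proposition~\ref{p:corolpla} with the structural facts already established about the planar maps $\Phi_i$. By Proposition~\ref{p:corolpla}, a planar $\mathcal{C}^1$ map with a $\mathcal{C}^1$ first integral $H$ is symplectic if and only if the Hamiltonian vector field $X_H=\Omega\nabla H$ is a Lie symmetry of the map; and, as recorded in the proof of that proposition, for a planar map symplecticity is simply equivalent to $|\mathrm{D}\Phi(\mathbf{x})|=1$. So I would argue in either of two equivalent ways.

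\textbf{Route 1 (via Lie symmetry).} Each $\Phi_i$, $i=1,\ldots,5$, is a KHK map associated with a planar Hamiltonian vector field $X_i$ with cubic Hamiltonian $H_i$, and by the last assertion of Theorem~\ref{t:Hamiltonians de R2} the field $X_i$ is a Lie symmetry of $\Phi_i$. Since $X_i = \Omega\nabla H_i$ (up to sign, which is irrelevant for the Lie symmetry property), Proposition~\ref{p:corolpla} applies directly with $H=H_i$ and gives that $\Phi_i$ is symplectic. This is essentially a one-line deduction.

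\textbf{Route 2 (via the Jacobian determinant), which I would actually spell out as a double check.} From Remark~(a) after Theorem~\ref{t:Hamiltonians de R2}, every field in the list has the form $X_i=(P_i,Q_i)$ with $\Phi_{i,h}(x,y)=(x+hP_i,\,y+hQ_i)$. Hence
$$
\mathrm{D}\Phi_{i,h}(x,y)=\begin{pmatrix} 1+h\,(P_i)_x & h\,(P_i)_y\\ h\,(Q_i)_x & 1+h\,(Q_i)_y\end{pmatrix},
$$
so $|\mathrm{D}\Phi_{i,h}|=1+h\big((P_i)_x+(Q_i)_y\big)+h^2\big((P_i)_x(Q_i)_y-(P_i)_y(Q_i)_x\big)$. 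Because $X_i$ is Hamiltonian, $(P_i)_x+(Q_i)_y=-H_{i,xy}+H_{i,xx}=0$ is not automatic; rather $P_i=-H_{i,y}$, $Q_i=H_{i,x}$, giving $(P_i)_x+(Q_i)_y=-H_{i,xy}+H_{i,xy}=0$ and $(P_i)_x(Q_i)_y-(P_i)_y(Q_i)_x=H_{i,xy}^2-H_{i,yy}H_{i,xx}$. For a \emph{cubic} Hamiltonian this Hessian determinant is generically nonconstant, so $|\mathrm{D}\Phi_{i,h}|=1$ would fail in general; but one checks case by case that for each of the five families the Hessian determinant $H_{i,xx}H_{i,yy}-H_{i,xy}^2$ vanishes identically (each $H_i$ being, up to the degenerate linear cases, a product of parallel affine forms, so its Hessian is singular with a repeated null direction). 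Therefore $|\mathrm{D}\Phi_{i,h}(x,y)|\equiv 1$ for all five maps, and by the criterion in the proof of Proposition~\ref{p:corolpla} they are symplectic.

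I do not expect a real obstacle here: the statement is a corollary and both routes are short. The only thing requiring minor care is Route 2's case check that $H_{i,xx}H_{i,yy}-H_{i,xy}^2\equiv 0$ for $i=1,\ldots,5$ — but this is immediate from the explicit forms of the $H_i$ in the theorem (and is in any case guaranteed a posteriori by Route 1 together with Proposition~\ref{p:measure}, since the invariant-measure density there is $1/|I-\tfrac12 h\,\mathrm{D}X|$ and $|I-\tfrac12 h\,\mathrm{D}X|=A(x,y)=1+\tfrac14 h^2(H_{xx}H_{yy}-H_{xy}^2)\equiv 1$ exactly in these cases). I would present Route 1 as the proof and mention Route 2 as a remark or parenthetical verification.

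\begin{proof}[Proof of Corollary~\ref{c:simplecticr2}]
Each map $\Phi_i$, $i=1,\ldots,5$, is the KHK map of a planar Hamiltonian vector field $X_i=\Omega\nabla H_i$ with cubic Hamiltonian $H_i$, and by the last statement of Theorem~\ref{t:Hamiltonians de R2}, $X_i$ is a Lie symmetry of $\Phi_i$. Since $H_i$ is a $\mathcal{C}^1$ first integral of $\Phi_i$ (indeed the KHK map preserves the original Hamiltonian), Proposition~\ref{p:corolpla} applies and yields that $\Phi_i$ is symplectic. Equivalently, one can verify directly from the explicit formulas: by Remark~(a) following Theorem~\ref{t:Hamiltonians de R2}, $\Phi_i(x,y)=(x+hP_i,\,y+hQ_i)$ with $(P_i,Q_i)=(-H_{i,y},H_{i,x})$, so
$$
|\mathrm{D}\Phi_i(x,y)|=1+h\big(H_{i,xy}-H_{i,xy}\big)+h^2\big(H_{i,xy}^2-H_{i,xx}H_{i,yy}\big)=1,
$$
because $H_{i,xx}H_{i,yy}-H_{i,xy}^2\equiv 0$ for each of the five families (each $H_i$ having, in the nontrivial cases, a singular Hessian). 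As noted in the proof of Proposition~\ref{p:corolpla}, a planar $\mathcal{C}^1$ map with unit Jacobian determinant is symplectic, which completes the proof.
\end{proof}
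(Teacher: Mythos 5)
Your Route 1 is exactly the derivation the paper intends: the corollary follows immediately from the last assertion of Theorem~\ref{t:Hamiltonians de R2} (each $X_i$ is a Lie symmetry of $\Phi_i$, and $H_i$ is a preserved first integral) combined with Proposition~\ref{p:corolpla}, with the sign discrepancy between $X_i$ and $\Omega\nabla H_i$ handled just as in the paper's own proof of that proposition. Your Route 2 is a valid independent check as well (the $h^2$ coefficient in your displayed determinant should read $H_{i,xx}H_{i,yy}-H_{i,xy}^2$ rather than its negative, but this is harmless since that quantity vanishes identically for all five families).
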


\medskip

Note that, the Hamiltonian vector fields in Sections \ref{s:r4} and \ref{s:r6} have the form $X(\mathbf{x})=B\nabla H(\mathbf{x})$, where $B$ is the $(2n)\times(2n)$ matrix given  by the anti-symmetric matrix
\begin{equation}\label{e:B}
B=\left(\begin{array}{crccr}
0 & -1 & & 0 & 0\\
1 & 0 & &0 & 0\\
& & \ddots & &\\
0 & 0 & &0 & -1\\
0 & 0 & & 1 & 0
\end{array}\right).
\end{equation}

The Lemma \ref{l:noulemanou} in Appendix \ref{s:appendixnou} relates this kind of Hamiltonian vector fields and their associated KHK maps with the Hamiltonian fields written in canonical form and their associated KHK map. In summary, a field of the form $X(\mathbf{x})=B\nabla H(\mathbf{x})$ with associated KHK maps $\Phi_h(\mathbf{x})$, where $\mathbf{x}=(x_1,\ldots,x_{2n})\in\mathbb{R}^{2n}$,
conjugates through the change of variables
$x_1=q_1$, $x_2=p_1$, $x_3=q_2$, $x_4=p_2,\ldots,x_{2n-1}=q_n$, $x_{2n}=p_n$, and a time parametrization $t\to -t$, with the field  $$Y_{\tilde{H}}(\mathbf{y})=\Omega\,\nabla \tilde{H}(\mathbf{y}),$$ where $\mathbf{y}=(q_1,\ldots,q_{n},$ $p_1,\ldots,p_{n})\in\mathbb{R}^{2n}$, and
$\tilde{H}(\mathbf{y})=H(P\mathbf{y})$, being $P$ the permutation matrix such that $\mathbf{x}=P\mathbf{y}$. Furthermore, the KHK map associated with $Y_{\tilde{H}}(\mathbf{y})$ is given by $$\Psi_h(\mathbf{y})=P^t\Phi_{-h}(P\mathbf{y}).$$

Using Lemma \ref{l:noulemanou} and by a verification carried out by using Maple, we obtain the statement of the following result concerning the KHK maps associated with the vector fields  in Theorem~\ref{t:Hamiltonians de R4} for vector fields in $\mathbb{R}^4$.

\begin{proposition}\label{p:totesymplectiquesr4}
Let $\Phi_h=\left(\Phi _{1,h},\Phi _{2,h},\Phi _{3,h},\Phi _{4,h}\right)$ be any of the KHK maps associated with the Hamiltonian fields $X$ in Theorem~\ref{t:Hamiltonians de R4}. Then, the map $${\Psi}_{h}(q_1,q_2,p_1,p_2)=\left(\Phi _{1,-h},\Phi _{3,-h},\Phi _{2,-h},\Phi _{4,-h}\right)(q_1,p_1,q_2,p_2)$$ is symplectic.
\end{proposition}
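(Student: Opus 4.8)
The plan is to reduce the statement to a single algebraic criterion valid for \emph{any} KHK map of a canonically written cubic Hamiltonian system, and then to check that criterion for the $54$ families.

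First I would invoke Lemma~\ref{l:noulemanou} of Appendix~\ref{s:appendixnou}: the map $\Psi_h$ in the statement is exactly the KHK map associated with the canonically written field $Y_{\tilde H}=\Omega\nabla\tilde H$, where $\tilde H(\mathbf{y})=H(P\mathbf{y})$ is again a cubic polynomial and $P$ is the permutation matrix with $\mathbf{x}=P\mathbf{y}$. Since $P$ is orthogonal and conjugation by $P$ carries the block form $B$ of \eqref{e:B} to $\pm\Omega$, the $B$-symplecticity of $\Phi_{-h}$ and the condition $\mathrm{D}\Psi_h\,\Omega\,\mathrm{D}\Psi_h^{t}=\Omega$ (that is, \eqref{e:eqsympl}) are equivalent; so it suffices to prove \eqref{e:eqsympl} for $\Psi_h(\mathbf{y})=\mathbf{y}+h\bigl(I-\tfrac12 h\,\mathrm{D}Y_{\tilde H}(\mathbf{y})\bigr)^{-1}Y_{\tilde H}(\mathbf{y})$, running over the cubic $\tilde H=H_i\circ P$ with $H_i$ as in Appendix~\ref{s:appendixa}.

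The key algebraic step is to compute $\mathrm{D}\Psi_h$ in closed form and simplify \eqref{e:eqsympl}. Writing $\mathbf{y}'=\Psi_h(\mathbf{y})$ and using the symmetric polarization $\bar Y$ of the quadratic field $Y_{\tilde H}$, one has $\mathbf{y}'-\mathbf{y}=h\,\bar Y(\mathbf{y},\mathbf{y}')$; differentiating this identity gives $\mathrm{D}\Psi_h(\mathbf{y})=M(\mathbf{y})^{-1}N(\mathbf{y}')$ with $M=I-\tfrac12 h\,\mathrm{D}Y_{\tilde H}$ and $N=I+\tfrac12 h\,\mathrm{D}Y_{\tilde H}$. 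Since $\mathrm{D}Y_{\tilde H}(\mathbf{y})=\Omega S(\mathbf{y})$ with $S(\mathbf{y})=\mathrm{Hess}\,\tilde H(\mathbf{y})$ symmetric, a one-line computation using $S^{t}=S$ and $\Omega^{t}=-\Omega$ shows that for every symmetric $S$
\[
\bigl(I-\tfrac12 h\,\Omega S\bigr)\Omega\bigl(I-\tfrac12 h\,\Omega S\bigr)^{t}=\bigl(I+\tfrac12 h\,\Omega S\bigr)\Omega\bigl(I+\tfrac12 h\,\Omega S\bigr)^{t}=\Omega-\tfrac{h^{2}}{4}\,\Omega S\Omega S\Omega .
\]
Substituting $\mathrm{D}\Psi_h=M(\mathbf{y})^{-1}N(\mathbf{y}')$ into \eqref{e:eqsympl} and cancelling $M(\mathbf{y})^{\pm1}$, symplecticity of $\Psi_h$ is equivalent to $N(\mathbf{y}')\Omega N(\mathbf{y}')^{t}=M(\mathbf{y})\Omega M(\mathbf{y})^{t}$, which by the identity above is equivalent to
\[
\mathrm{Hess}\,\tilde H(\Psi_h(\mathbf{y}))\,\Omega\,\mathrm{Hess}\,\tilde H(\Psi_h(\mathbf{y}))=\mathrm{Hess}\,\tilde H(\mathbf{y})\,\Omega\,\mathrm{Hess}\,\tilde H(\mathbf{y}),
\]
i.e.\ the matrix $\bigl(\mathrm{D}Y_{\tilde H}\bigr)^{2}=\mathrm{Hess}\,\tilde H\,\Omega\,\mathrm{Hess}\,\tilde H$ must be a matrix-valued first integral of $\Psi_h$. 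As $\tilde H$ is cubic, $S$ is affine, so this matrix has entries that are polynomials of degree at most two; the condition is therefore a finite system of polynomial identities, and for each of the $54$ families one checks with a computer algebra system that $S(\Psi_h(\mathbf{y}))\,\Omega\,S(\Psi_h(\mathbf{y}))-S(\mathbf{y})\,\Omega\,S(\mathbf{y})\equiv 0$ on using the explicit Hamiltonians and KHK maps.

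The conceptual part (Steps 1–2: reduction to invariance of $\mathrm{Hess}\,\tilde H\,\Omega\,\mathrm{Hess}\,\tilde H$ under $\Psi_h$) is short; the real obstacle is the last step, a sizeable but routine symbolic verification over all $54$ cases. One is tempted to bypass the case analysis by deducing the invariance of $\mathrm{Hess}\,\tilde H\,\Omega\,\mathrm{Hess}\,\tilde H$ directly from the fact that $\Psi_h$ preserves $\tilde H$ together with the Lie-symmetry property of Theorem~\ref{t:Hamiltonians de R4}(b); but differentiating the first-integral relation $\nabla\tilde H(\Psi_h)^{t}\mathrm{D}\Psi_h=\nabla\tilde H^{t}$ and the Lie-symmetry relation brings in the second derivative of $\Psi_h$ and does not obviously close, so the family-by-family check appears to be the reliable route.
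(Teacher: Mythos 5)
Your proposal is correct, and its skeleton matches the paper's: both reduce the statement, via Lemma~\ref{l:noulemanou}(c), to an algebraic identity for the maps $\Phi_h$ of the $54$ families, and both then delegate that identity to a computer algebra system. The difference lies in \emph{what} gets checked. The paper verifies the raw condition \eqref{e:equacioambB}, ${\mathrm D}\Phi_h({\bf{x}})\,B\,{\mathrm D}\Phi_h({\bf{x}})^t=B$, directly on each family. You instead use the polarized (implicit) form of the Kahan map of the quadratic field $Y_{\tilde H}=\Omega\nabla\tilde H$ to factor ${\mathrm D}\Psi_h(\mathbf{y})=M(\mathbf{y})^{-1}N(\Psi_h(\mathbf{y}))$ with $M=I-\tfrac{h}{2}{\mathrm D}Y_{\tilde H}$ and $N=I+\tfrac{h}{2}{\mathrm D}Y_{\tilde H}$; your identity
\[
\left(I\pm\tfrac{h}{2}\Omega S\right)\Omega\left(I\pm\tfrac{h}{2}\Omega S\right)^{t}=\Omega-\tfrac{h^{2}}{4}\,\Omega S\Omega S\Omega,\qquad S^{t}=S,
\]
is correct and converts \eqref{e:eqsympl} into the invariance of the degree-two matrix $\mathrm{Hess}\,\tilde H\,\Omega\,\mathrm{Hess}\,\tilde H$ under $\Psi_h$ (the cancellation of $M(\mathbf{y})^{\pm1}$ being legitimate wherever the KHK map is defined). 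This buys a cleaner, lower-degree polynomial criterion and a structural characterization of when a Kahan map of a cubic Hamiltonian is symplectic, whereas the paper's route is a one-line reduction followed by brute force on the rational identity; both end in the same unavoidable $54$-case symbolic verification, and your closing remark that the Lie-symmetry property of Theorem~\ref{t:Hamiltonians de R4}(b) does not by itself imply symplecticity is consistent with the paper (Proposition~\ref{p:corolpla} gives that implication only in the planar case).
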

Notice that, from Lemma \ref{l:noulemanou}(c), to prove the above result it is only necessary to check that the KHK maps $\Phi_{h}$ associate with the Hamiltonian vector field in Theorem \ref{t:Hamiltonians de R4} satisfy
\begin{equation}\label{e:equacioambB}
{\mathrm D}\Phi_h({\bf{x}})\,B\,{\mathrm D}\Phi_h({\bf{x}})^t=B.
\end{equation}

\smallskip

As a consequence of  Propositions \ref{p:propor6} and \ref{p:symplsL}, and Lemma \ref{l:noulemanou}(d), for vector fields in $\mathbb{R}^6$, we obtain
\begin{corollary}\label{c:corolarinosimplr6}
The map $${\Psi}_h(q_1,q_2,q_3,p_1,p_2,p_3)=\left(\Phi _{1,-h},\Phi _{3,-h},\Phi _{5,-h},\Phi _{2,-h},\Phi _{4,-h},\Phi _{6,-h}\right)(q_1,p_1,q_2,p_2,q_3,p_3),$$ whe\-re $\Phi_h=\left(\Phi _{1,h},\Phi _{2,h},\Phi _{3,h},\Phi _{4,h},\Phi _{5,h},\Phi _{6,h}\right)$ is any of the  KHK maps associated with the Hamiltonians of $\mathbb{R}^6$ for which the field $X$ is not a Lie symmetry of $\Phi_h$, is not symplectic.
\end{corollary}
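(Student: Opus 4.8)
The plan is to argue by contradiction using the characterization of symplecticity together with Proposition \ref{p:symplsL}. Suppose $X$ is a Hamiltonian vector field of $\mathbb{R}^6$ of the form \eqref{e:ham6} whose associated KHK map $\Phi_h$ preserves the original Hamiltonian $H$ but for which $X$ is \emph{not} a Lie symmetry of $\Phi_h$ (such fields exist by Proposition \ref{p:propor6}). Consider the conjugated map $\Psi_h(q_1,q_2,q_3,p_1,p_2,p_3)=\left(\Phi_{1,-h},\Phi_{3,-h},\Phi_{5,-h},\Phi_{2,-h},\Phi_{4,-h},\Phi_{6,-h}\right)(q_1,p_1,q_2,p_2,q_3,p_3)$. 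By Lemma \ref{l:noulemanou}(d), $\Psi_h$ is precisely the KHK map associated with the canonically-written Hamiltonian field $Y_{\tilde{H}}=\Omega\nabla\tilde{H}$, where $\tilde{H}(\mathbf{y})=H(P\mathbf{y})$.

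First I would observe that $\tilde{H}$ is a first integral of $\Psi_h$: since $\Phi_h$ preserves $H$, the conjugacy relation $\Psi_h=P^t\Phi_{-h}P$ (with $\Phi_{-h}$ still preserving $H$, because the preservation condition \eqref{e:restriction} is not affected by the sign of $h$ — indeed condition \eqref{e:condicioar4}, or its $\mathbb{R}^6$ analogue, is $h$-independent) immediately gives $\tilde{H}(\Psi_h(\mathbf{y}))=H(P\Psi_h(\mathbf{y}))=H(\Phi_{-h}(P\mathbf{y}))=H(P\mathbf{y})=\tilde{H}(\mathbf{y})$. So $\Psi_h$ is a $\mathcal{C}^1$ map (it is birational, hence smooth on a dense open set) admitting the $\mathcal{C}^1$ first integral $\tilde{H}$. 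Next I would invoke the contrapositive of Proposition \ref{p:symplsL}: if $\Psi_h$ were symplectic, then the Hamiltonian vector field $X_{\tilde{H}}=\Omega\nabla\tilde{H}=Y_{\tilde{H}}$ would be a Lie symmetry of $\Psi_h$. But by Lemma \ref{l:noulemanou} the Lie-symmetry relation for the pair $(Y_{\tilde{H}},\Psi_h)$ is equivalent, under the linear change $\mathbf{x}=P\mathbf{y}$ and time reversal $t\to -t$, to the Lie-symmetry relation for the pair $(X,\Phi_h)$; since $X$ is \emph{not} a Lie symmetry of $\Phi_h$ by hypothesis, $Y_{\tilde{H}}$ is not a Lie symmetry of $\Psi_h$, a contradiction. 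Hence $\Psi_h$ is not symplectic.

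The one point that needs a little care — and which I expect to be the main obstacle — is verifying that the Lie-symmetry property genuinely transports across the conjugacy and time reversal, i.e. that $X_{|\Phi_h}=\mathrm{D}\Phi_h\,X$ fails if and only if $Y_{\tilde{H}|\Psi_h}=\mathrm{D}\Psi_h\,Y_{\tilde{H}}$ fails. This is because the change of variables is linear (so $\mathrm{D}\Psi_h(\mathbf{y})=P^t\,\mathrm{D}\Phi_{-h}(P\mathbf{y})\,P$) and the time reversal multiplies both sides of the compatibility equation by $-1$; one must check that $Y_{\tilde{H}}(\mathbf{y})=-P^t X(P\mathbf{y})$, which is exactly the content of Lemma \ref{l:noulemanou} relating the fields $X(\mathbf{x})=B\nabla H(\mathbf{x})$ and $Y_{\tilde{H}}(\mathbf{y})=\Omega\nabla\tilde{H}(\mathbf{y})$. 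Granting that lemma, the equivalence of the two compatibility equations is a direct substitution. Concretely, for the explicit family \eqref{e:hamiltoniamesnou}–\eqref{e:novaphir6} with $ab\neq 0$, the computation in the proof of Proposition \ref{p:propor6} already exhibits $X_{|\Phi_h}-\mathrm{D}\Phi_h\,X\neq 0$, so the corollary applies to that concrete family as well. Thus the result follows directly from Propositions \ref{p:propor6} and \ref{p:symplsL} together with parts (c) and (d) of Lemma \ref{l:noulemanou}, with no further computation needed beyond what is already in the excerpt.
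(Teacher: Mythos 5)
Your argument is correct and is exactly the route the paper intends: the corollary is stated there as a direct consequence of Propositions \ref{p:propor6} and \ref{p:symplsL} together with Lemma \ref{l:noulemanou}(d), i.e.\ the contrapositive of Proposition \ref{p:symplsL} applied to $\Psi_h$ with first integral $\tilde{H}$, combined with the transport of the Lie-symmetry property across the conjugacy. Your extra checks (that $\tilde{H}$ is indeed a first integral of $\Psi_h$, and that the preservation condition is insensitive to $h\to -h$) are the right details to verify and are consistent with the paper.
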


The following observation is due to a reviewer of an earlier version of this paper:

\begin{remark}\label{r:referee2}  
The preceding result demonstrates that the KHK map $\Phi_h$ in the counterexample given in the proof of Proposition \ref{p:propor6}, with associate Hamiltonian \eqref{e:hamiltoniamesnou}, is not symplectic. However, it 
is worth noticing that it still preserves the deformation of the symplectic structure \eqref{e:equacioambB} given by
$ B_h= B + h^2 B_2,$  
where  
$$ 
B_2 = \left(\begin{array}{cccccc}
0 &0&0 &-\alpha &{b}/{4} & 0\\
0 &0&0 &\alpha &-{b}/{4} & 0\\
0 &0&0 & 0 & 0 & 0\\
\alpha &-\alpha&0 & 0 & 0 & 0\\
-b/4 &b/4&0 & 0 & 0 & 0\\
0 &0&0 & 0 & 0 & 0
\end{array} \right),
$$being $\alpha$ any arbitrary constant.

\end{remark}

\section{Conclusions}

We have investigated the set of cubic Hamiltonian vector fields for which their associated Kahan-Hirota-Kimura  maps preserve the original Hamiltonian function. For fields in $\mathbb{R}^2$ and $\mathbb{R}^4$, we have identified nontrivial examples and in all the cases we found, the associated fields are Lie symmetries of the corresponding 
KHK maps, which preserve a symplectic structure. In the planar case, all the fields we identified correspond to factorizable Hamiltonians, representing singular cases of those characterized in \cite{GMQ24}, with associated trivial dynamics.  
In contrast, for $\mathbb{R}^6$, we have discovered nontrivial cases where the field is not a Lie symmetry of the KHK map.  

Our study has been limited to even-dimensional cases where a Poisson structure is present. Additional work could potentially be conducted in odd dimensions,  particularly in the three-dimensional case, where the KHK method and its extensions have proven to be fruitful \cite{ASW,GG23,PS10}.  

Similarly, it is natural to explore what happens to the KHK maps when considering vector fields with an integrating factor, a Jacobi multiplier \cite{BeGia}, or those of the form $X(\mathbf{x}) = B(\mathbf{x})\nabla H(\mathbf{x}) $, that is, when the anti-symmetric matrix $ B(\mathbf{x}) $ is no longer constant. For some of these systems, many interesting results have been obtained, which could be analyzed within the framework of exact preservation; see, for instance, \cite{Cel17,GJ20,Kam19,PZ17}.

\bigskip

 \centerline{\textbf{Acknowledgments}}

\medskip

We would like to express our sincere gratitude to the anonymous reviewers of an earlier version of this work for their careful reading and for the insightful and generous suggestions, which are reflected in Remarks \ref{r:remark7}, \ref{r:referee2a}, \ref{r:referee2b} and \ref{r:referee2}, as well as in the conclusions section. We also would like to thank Dr.~Benjamin K. Tapley for his kind comments and references that helped us to improve a preliminary version of the paper.
\medskip

The authors are supported by the Ministry of Science and Innovation--State Research Agency of the
Spanish Government through grant PID2022-136613NB-I00.
 They also acknowledge the 2021 SGR 01039 consolidated research groups recognition from Ag\`{e}ncia de Gesti\'{o} d'Ajuts Universitaris i de Recerca, Generalitat de Catalunya.

\vfill
\newpage
\appendix

\section{Appendix}\label{s:appendixnou}

We consider the real vectors
$\mathbf{x}=(x_1,\ldots,x_{2n})$ and $\mathbf{y}=(q_1,\ldots,q_{n},$ $p_1,\ldots,p_{n})$ related through the change of variables
$x_1=q_1$, $x_2=p_1$, $x_3=q_2$, $x_4=p_2,\ldots,x_{2n-1}=q_n$, $x_{2n}=p_n$ or, in other words,   via $\mathbf{x}=P\mathbf{y}$ where $P$ is a permutation matrix.
The permutation matrices in $\mathbb{R}^4$ and $\mathbb{R}^6$ are given by
$$
P=\left(\begin{array}{cccc}
1&0&0&0\\
0&0&1&0\\
0&1&0&0\\
0&0&0&1
\end{array}\right)\mbox{ and } P=\left(\begin{array}{cccccc}
1&0&0&0&0&0\\
0&0&0&1&0&0\\
0&1&0&0&0&0\\
0&0&0&0&1&0\\
0&0&1&0&0&0\\
0&0&0&0&0&1
\end{array}\right)
$$
respectively.

The following result relates the fields of the form $X(\mathbf{x})=B\nabla H(\mathbf{x})$, where $B$ is given by expression \eqref{e:B}, and their associated KHK maps with the Hamiltonian fields written in canonical form and their associated KHK maps.

\begin{lemma}\label{l:noulemanou}
Let $H$ be a $\mathcal{C}^1$ function in an open set of $\mathbb{R}^{2n}$. The following statements hold:
\begin{enumerate}
\item[(a)] Through the change of variables $\mathbf{x}=P\mathbf{y}$ and the time reparametrization $t\to-t$, the Hamiltonian vector field $X(\mathbf{x})=B\,\nabla H(\mathbf{x})$ is conjugate with
$Y_{\tilde{H}}(\mathbf{y})=\Omega\,\nabla \tilde{H}(\mathbf{y})$, with $\tilde{H}(\mathbf{y})=H(P\mathbf{y})$.
\item[(b)] Let $\Psi_h$ and $\phi_h$ the KHK maps associated with the vector fields $Y_{\tilde{H}}$ and $X_{H}$, respectively, then
$$
\Psi_h(\mathbf{y})=P^t\Phi_{-h}(P\mathbf{y}).
$$
\item[(c)] The map $\Psi_h$ is symplectic if and only if
$$
{\mathrm D}\Phi_h({\bf{x}})\,B\,{\mathrm D}\Phi_h({\bf{x}})^t=B.
$$
\item[(d)] The vector field $X$ is a Lie symmetry of $\Phi_h$ if and only if the vector field $Y_{\tilde{H}}$ is a Lie symmetry of $\Psi_h$.
\end{enumerate}
\end{lemma}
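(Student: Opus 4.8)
\emph{Proof proposal.} My plan is to reduce all four statements to a single linear-algebraic identity relating the two antisymmetric matrices through the interleaving permutation,
$$
P^{t}\,B\,P=-\Omega,\qquad\text{equivalently}\qquad P\,\Omega\,P^{t}=-B ,
$$
which I would verify directly from the block form of $B$ in \eqref{e:B}, the block form of $\Omega$, and the explicit permutation $x_{2i-1}=q_i,\ x_{2i}=p_i$ (i.e.\ $\mathbf{x}=P\mathbf{y}$); the one thing to watch is the minus sign, which is exactly what the reparametrization $t\to-t$ in part (a) absorbs. Apart from this identity I would only use that $P$ is orthogonal, $P^{-1}=P^{t}$. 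For \emph{(a)}: substituting $\mathbf{x}=P\mathbf{y}$ in $\dot{\mathbf{x}}=B\,\nabla H(\mathbf{x})$ gives $P\dot{\mathbf{y}}=B\,\nabla H(P\mathbf{y})$; since $\tilde H=H\circ P$ the chain rule gives $\nabla\tilde H(\mathbf{y})=P^{t}\nabla H(P\mathbf{y})$, hence $\nabla H(P\mathbf{y})=P\,\nabla\tilde H(\mathbf{y})$, so $\dot{\mathbf{y}}=P^{t}BP\,\nabla\tilde H(\mathbf{y})=-\Omega\,\nabla\tilde H(\mathbf{y})$, and reparametrizing $t\to-t$ turns this into $\dot{\mathbf{y}}=\Omega\,\nabla\tilde H(\mathbf{y})=Y_{\tilde H}(\mathbf{y})$.

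For \emph{(b)} I would record two elementary properties of the recipe \eqref{e:phi}. First, it is equivariant under linear changes of variables: if $W(\mathbf{y})=A^{-1}X(A\mathbf{y})$ with $A$ invertible, then $\mathrm{D}W(\mathbf{y})=A^{-1}\mathrm{D}X(A\mathbf{y})A$, whence $(I-\tfrac12 h\,\mathrm{D}W(\mathbf{y}))^{-1}=A^{-1}(I-\tfrac12 h\,\mathrm{D}X(A\mathbf{y}))^{-1}A$, and feeding this into \eqref{e:phi} makes the KHK map of $W$ equal to $\mathbf{y}\mapsto A^{-1}\Phi_{h}(A\mathbf{y})$. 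Second, replacing $X$ by $-X$ replaces its KHK map at step $h$ by $\Phi_{-h}$, directly from \eqref{e:phi}. Before the time reversal the field produced in (a) is $W(\mathbf{y})=P^{t}X(P\mathbf{y})=P^{-1}X(P\mathbf{y})$ and $Y_{\tilde H}=-W$; applying the first property with $A=P$ and then the second yields $\Psi_h(\mathbf{y})=P^{t}\Phi_{-h}(P\mathbf{y})$, which is the claim.

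For \emph{(c)} and \emph{(d)} I would differentiate $\Psi_h=P^{t}\Phi_{-h}(P\,\cdot)$ to get $\mathrm{D}\Psi_h(\mathbf{y})=P^{t}\mathrm{D}\Phi_{-h}(P\mathbf{y})\,P$ and $\mathrm{D}\Psi_h(\mathbf{y})^{t}=P^{t}\mathrm{D}\Phi_{-h}(P\mathbf{y})^{t}\,P$. Inserting these into \eqref{e:eqsympl} and using $P\Omega P^{t}=-B$ and $\Omega=-P^{t}BP$, the symplecticity equation $\mathrm{D}\Psi_h\,\Omega\,\mathrm{D}\Psi_h^{t}=\Omega$ collapses, after conjugating by $P$ and cancelling a sign, to $\mathrm{D}\Phi_{-h}(\mathbf{x})\,B\,\mathrm{D}\Phi_{-h}(\mathbf{x})^{t}=B$ at $\mathbf{x}=P\mathbf{y}$; since this is to hold for every admissible step value it is the same requirement as \eqref{e:equacioambB}, giving (c). For (d) I would write out the compatibility equation \eqref{e:Lie-sym-char} for the pair $(Y_{\tilde H},\Psi_h)$: its left side is $-P^{t}X\big(P\Psi_h(\mathbf{y})\big)=-P^{t}X\big(\Phi_{-h}(P\mathbf{y})\big)$ (using $PP^{t}=I$) and its right side is $P^{t}\mathrm{D}\Phi_{-h}(P\mathbf{y})P\cdot(-P^{t}X(P\mathbf{y}))=-P^{t}\mathrm{D}\Phi_{-h}(P\mathbf{y})X(P\mathbf{y})$, and these agree iff $X_{|\Phi_{-h}(\mathbf{x})}=\mathrm{D}\Phi_{-h}(\mathbf{x})X(\mathbf{x})$ at $\mathbf{x}=P\mathbf{y}$, i.e.\ iff $X$ is a Lie symmetry of $\Phi_{-h}$ — equivalently, quantifying over the admissible range of the step, of $\Phi_h$.

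None of these steps is deep: checking $P^{t}BP=-\Omega$ is a direct computation with the explicit matrices, and the rest is functorial bookkeeping for the KHK recipe. The one place that demands care is carrying the sign coming from $P^{t}BP=-\Omega$, together with the ensuing switch $h\leftrightarrow-h$, consistently through parts (a)--(d), so that no spurious sign or reparametrization survives in the final statements.
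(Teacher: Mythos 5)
Your proposal is correct and follows essentially the same route as the paper: the key identity $P^{t}BP=-\Omega$, the equivariance of the KHK recipe under linear conjugation (which the paper isolates as Lemma \ref{l:noulema} via the identity $(I+UV)^{-1}U=U(I+VU)^{-1}$, and you obtain directly from $\mathrm{D}W=A^{-1}\mathrm{D}X(A\cdot)A$), the swap $h\to-h$ for time reversal, and direct substitution for (c) and (d) (the latter being the paper's Lemma \ref{l:ultim}). Your explicit tracking of the $h\leftrightarrow-h$ bookkeeping in (c) and (d) is if anything slightly more careful than the paper's write-up.
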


To prove the above result, we need the following technical Lemma, that uses the matrix identity
\begin{equation}\label{e:identitat}
(I+UV)^{-1}U=U(I+VU)^{-1},
\end{equation}
where $U$ and $V$ are conformable matrices, which is obtained directly from the identity $U(I+VU)=(I+UV)U$. We also recall that since $P$ is a permutation matrix, then $P^{-1}=P^t$.

\begin{lemma}\label{l:noulema}
Let $\Phi_h(\mathbf{x})$ be the KHK map associated with a vector field $X(\mathbf{x})$. Let $\mathbf{x}=P\mathbf{y}$  and $\Psi_h(\mathbf{y})$ be the KHK map associated with the vector field $Y(\mathbf{y})=P^t\,X(P\mathbf{y})$. Then
$$
\Psi_h(\mathbf{y})=P^t\Phi_h(P\mathbf{y}).
$$
\end{lemma}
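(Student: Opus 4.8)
The plan is to write down the KHK map of $Y(\mathbf{y})=P^tX(P\mathbf{y})$ directly from formula \eqref{e:phi} and then push the permutation $P$ through every factor until it collects into $P^t(\cdots)(P\mathbf{y})$. First I would record the two elementary facts I need: since $P$ is a permutation matrix, $P^{-1}=P^t$, and by the chain rule ${\mathrm D}Y(\mathbf{y})={\mathrm D}\bigl(P^tX(P\mathbf{y})\bigr)=P^t\,{\mathrm D}X(P\mathbf{y})\,P$. Substituting $X(\mathbf{y}):=Y$ into \eqref{e:phi} gives
\begin{equation*}
\Psi_h(\mathbf{y})=\mathbf{y}+h\left(I-\tfrac12 h\,P^t{\mathrm D}X(P\mathbf{y})P\right)^{-1}P^tX(P\mathbf{y}).
\end{equation*}
Now I would write $\mathbf{y}=P^tP\mathbf{y}$ and, inside the inverse, $I=P^tIP$, so that
\begin{equation*}
I-\tfrac12 h\,P^t{\mathrm D}X(P\mathbf{y})P=P^t\left(I-\tfrac12 h\,{\mathrm D}X(P\mathbf{y})\right)P,
\end{equation*}
and hence $\left(I-\tfrac12 h\,P^t{\mathrm D}X(P\mathbf{y})P\right)^{-1}=P^t\left(I-\tfrac12 h\,{\mathrm D}X(P\mathbf{y})\right)^{-1}P$, using $(AMB)^{-1}=B^{-1}M^{-1}A^{-1}$ together with $P^{-1}=P^t$.

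Plugging this back in, the middle $P$ and the leading $P^t$ in front of $X(P\mathbf{y})$ cancel, and I obtain
\begin{equation*}
\Psi_h(\mathbf{y})=P^tP\mathbf{y}+h\,P^t\left(I-\tfrac12 h\,{\mathrm D}X(P\mathbf{y})\right)^{-1}X(P\mathbf{y})
=P^t\left[P\mathbf{y}+h\left(I-\tfrac12 h\,{\mathrm D}X(P\mathbf{y})\right)^{-1}X(P\mathbf{y})\right]
=P^t\Phi_h(P\mathbf{y}),
\end{equation*}
where the bracket is exactly $\Phi_h$ evaluated at $P\mathbf{y}$ by definition \eqref{e:phi}. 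This completes the argument.

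There is essentially no serious obstacle here: the whole lemma is the observation that the KHK construction \eqref{e:phi} is natural with respect to linear changes of variables, and the only thing that makes $P$ special is $P^{-1}=P^t$ (although the argument works verbatim for any invertible linear conjugacy, replacing $P^t$ by $P^{-1}$). The matrix identity \eqref{e:identitat} quoted just before the lemma is not actually needed for this statement; it is reserved for part (b) of Lemma \ref{l:noulemanou}, where the extra time reparametrization $t\to-t$ (equivalently $X\to-X$, i.e. $h\to-h$) must be threaded through, and where one must check that replacing $X$ by $-X$ in \eqref{e:phi} is the same as replacing $h$ by $-h$. The one point to state carefully is that $P$ (and hence $P\mathbf{y}$) is independent of $\mathbf{y}$-position, so the chain-rule computation of ${\mathrm D}Y$ produces constant conjugating factors and no extra terms; everything else is bookkeeping with associativity of matrix products.
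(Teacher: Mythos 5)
Your proof is correct and follows essentially the same route as the paper: compute ${\mathrm D}Y(\mathbf{y})=P^t{\mathrm D}X(P\mathbf{y})P$, substitute into \eqref{e:phi}, and move the permutation through the resolvent using $P^{-1}=P^t$. The only cosmetic difference is that the paper carries out the resolvent step via the push-through identity \eqref{e:identitat} with $U=P^t$ (so your aside that \eqref{e:identitat} is ``not needed'' is true of your argument but not of the paper's, which does invoke it here), while you conjugate directly as $P^t\bigl(I-\tfrac12 h\,{\mathrm D}X(P\mathbf{y})\bigr)P$ and invert; for invertible $U$ these are the same manipulation.
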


\begin{proof}
Setting $U=P^t$, $V=-\frac{1}{2}h{\mathrm D}X({P\bf{y}})P$, using the identity \eqref{e:identitat} and $P^{-1}=P^t$, we have that the KHK map associated with $Y(\mathbf{y})$ satisfies

\begin{align*}
\Psi_h(\mathbf{y})&=\mathbf{y}+h\left(I-\frac{1}{2}h {\mathrm D}Y({\bf{y}})\right)^{-1} Y({\bf{y}})\\
&= \mathbf{y}+h\left(I-\frac{1}{2}h P^t{\mathrm D}X({P\bf{y}})P\right)^{-1} P^tX({P\bf{y}})\\
&= P^t\mathbf{x}+hP^t\left(I-\frac{1}{2}h {\mathrm D}X({P\bf{y}})PP^t\right)^{-1} X({P\bf{y}})\\
\end{align*}

\begin{align*}
&= P^t\left(\mathbf{x}+h\left(I-\frac{1}{2}h {\mathrm D}X({\mathbf{x}})\right)^{-1} X({\mathbf{x}})\right)\\
&=P^t\phi_h(\mathbf{x})=P^t\phi_h(P\mathbf{y}).
\end{align*}
\end{proof}

\begin{lemma}\label{l:ultim}
Set $\mathbf{x},\mathbf{y}\in \mathbb{R}^m$,
and let $\mathbf{x}=P\mathbf{y}$ where $P$ is a permutation matrix. Then, the vector field $X(\mathbf{x})$ is a Lie symmetry of a $\mathcal{C}^1$ map $\Phi(\mathbf{x})$ if and only if the vector field $Y(\mathbf{y})=P^t\,X(P\mathbf{y})$ is a Lie symmetry of the map $\Psi(\mathbf{y})=P^t\Phi_h(P\mathbf{y})$.
\end{lemma}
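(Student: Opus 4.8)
The plan is to verify the compatibility equation \eqref{e:Lie-sym-char} directly, translating it through the permutation change of variables. Recall that $X$ being a Lie symmetry of $\Phi$ means $X_{|\Phi(\mathbf{x})} = \mathrm{D}\Phi(\mathbf{x})\,X(\mathbf{x})$ for all $\mathbf{x}$, and we want the analogous identity $Y_{|\Psi(\mathbf{y})} = \mathrm{D}\Psi(\mathbf{y})\,Y(\mathbf{y})$ for all $\mathbf{y}$, where $Y(\mathbf{y}) = P^t X(P\mathbf{y})$ and $\Psi(\mathbf{y}) = P^t\Phi(P\mathbf{y})$. Since $P$ is invertible with $P^{-1}=P^t$, the map $\mathbf{y}\mapsto \mathbf{x}=P\mathbf{y}$ is a diffeomorphism, so ranging over all $\mathbf{x}$ is the same as ranging over all $\mathbf{y}$; this is what makes the equivalence (not just one implication) hold.

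First I would compute $\mathrm{D}\Psi(\mathbf{y})$ by the chain rule: since $\Psi(\mathbf{y}) = P^t\Phi(P\mathbf{y})$ and $P$ is a constant matrix, $\mathrm{D}\Psi(\mathbf{y}) = P^t\,\mathrm{D}\Phi(P\mathbf{y})\,P$. Next I would expand the desired right-hand side: $\mathrm{D}\Psi(\mathbf{y})\,Y(\mathbf{y}) = P^t\,\mathrm{D}\Phi(P\mathbf{y})\,P\,P^t X(P\mathbf{y}) = P^t\,\mathrm{D}\Phi(P\mathbf{y})\,X(P\mathbf{y})$, using $PP^t = I$. On the other hand, the left-hand side $Y_{|\Psi(\mathbf{y})}$ means $Y$ evaluated at the point $\Psi(\mathbf{y})$, i.e. $Y(\Psi(\mathbf{y})) = P^t X\big(P\,\Psi(\mathbf{y})\big) = P^t X\big(P P^t \Phi(P\mathbf{y})\big) = P^t X\big(\Phi(P\mathbf{y})\big) = P^t\, X_{|\Phi(P\mathbf{y})}$. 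Setting $\mathbf{x} = P\mathbf{y}$, the identity $Y_{|\Psi(\mathbf{y})} = \mathrm{D}\Psi(\mathbf{y})\,Y(\mathbf{y})$ becomes $P^t X_{|\Phi(\mathbf{x})} = P^t\,\mathrm{D}\Phi(\mathbf{x})\,X(\mathbf{x})$, which, after multiplying on the left by $P$, is precisely $X_{|\Phi(\mathbf{x})} = \mathrm{D}\Phi(\mathbf{x})\,X(\mathbf{x})$, the hypothesis. Running the same chain of equalities backwards gives the converse implication.

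There is essentially no obstacle here: the proof is a bookkeeping exercise with the constant orthogonal matrix $P$, the only points to be careful about being (i) keeping straight that $X_{|\Phi(\mathbf{x})}$ denotes composition $X\circ\Phi$ rather than a product, and (ii) noting that the quantifier ``for all $\mathbf{y}$'' is equivalent to ``for all $\mathbf{x}$'' precisely because $P$ is a bijection, which is what upgrades the argument from a one-way implication to an equivalence. One could alternatively phrase the whole thing geometrically: a linear change of variables conjugates the flow of $X$ to the flow of $Y$ and conjugates $\Phi$ to $\Psi$, so $\Phi$ sends orbits of $X$ to orbits of $X$ if and only if $\Psi$ sends orbits of $Y$ to orbits of $Y$; but the computational version above is shorter and self-contained.
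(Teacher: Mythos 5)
Your proof is correct and follows essentially the same route as the paper: both verify the compatibility equation \eqref{e:Lie-sym-char} through the change of variables $\mathbf{x}=P\mathbf{y}$, using the chain rule identity $\mathrm{D}\Psi(\mathbf{y})=P^t\,\mathrm{D}\Phi(P\mathbf{y})\,P$ and $P^{-1}=P^t$ to reduce one compatibility equation to the other by a chain of equivalences. Your explicit remark that the bijectivity of $P$ is what makes the quantifiers match (and hence gives a genuine equivalence) is a point the paper leaves implicit, but the argument is the same.
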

\begin{proof}
Using the compatibility equation \eqref{e:Lie-sym-char},
$
X_{|\Phi(\mathbf{x})}=\mathrm{D}\Phi(\mathbf{x})\,X(\mathbf{x}),
$ and the fact that $P^{-1}=P^t$ we obtain
$$
X_{|\Phi(P\mathbf{y})}=\mathrm{D}\Phi(P\mathbf{y})\,X(P\mathbf{y})
\,\Leftrightarrow\,
X_{|\Phi(P\mathbf{y})}=P\,\mathrm{D}\Psi(\mathbf{y})P^t\,X(P\mathbf{y})\,\Leftrightarrow
$$
$$
X_{|P\Psi(\mathbf{y})}=P\,\mathrm{D}\Psi(\mathbf{y})P^t\,X(P\mathbf{y})
\,\Leftrightarrow\,
P^t\,X_{|P\Psi(\mathbf{y})}=\mathrm{D}\Psi(\mathbf{y})P^t\,P\,Y(\mathbf{y})\Leftrightarrow
$$
$$
Y_{|\Psi(\mathbf{y})}=\mathrm{D}\Psi(\mathbf{y})\,Y(\mathbf{y}).
$$
\end{proof}

\begin{proof}[Proof of Lemma \ref{l:noulemanou}]
(a) If $\tilde{H}(\mathbf{y})=H(P\mathbf{y})$, then $\nabla \tilde{H}(\mathbf{y})=P^t\nabla H(P\mathbf{y})$.  A computation shows that
\begin{equation}\label{e:menysomega}
P^t\,B\,P=-\Omega.
\end{equation} Using this equation and
 that $P^{-1}=P^t$, we have
$$
\dot{\mathbf{y}}=P^t\dot{\mathbf{x}}=P^t\,B\,\nabla H(\mathbf{x})=P^t\,B\,\nabla H(P\mathbf{y})=
P^t\,B\,P\, \nabla \tilde{H}(\mathbf{y})=-\Omega \nabla \tilde{H}(\mathbf{y}).
$$ By using the time reparametrization $t\to -t$ we obtain the result.

Statement (b) is a direct consequence of the Lemma \ref{l:noulema} and the well know fact that  time reparametrization $t\to -t$ affects KHK maps only in the fact that the discretization step must be changed so that $h\to -h$.

(c) Assume that $\Psi_h$ is symplectic. From equation
$${\mathrm D}\Psi_h({\bf{y}})\,\Omega\,{\mathrm D}\Psi_h({\bf{y}})^t=\Omega,$$
and by using that $\Psi_h(\mathbf{y})=P^t\Phi_{-h}(P\mathbf{y})$, we get
$$
P^t{\mathrm D}\Phi_{-h}({\bf{x}})\,P\,\Omega\,P^t\,{\mathrm D}\Phi_{-h}({\bf{x}})^tP=-P^t\,B\,P.
$$
By Equation \eqref{e:menysomega}, we have $P\Omega P^t=-B$, so we finally obtain
$$
{\mathrm D}\Phi_h({\bf{x}})\,B\,{\mathrm D}\Phi_h({\bf{x}})^t=B.
$$The converse statement is obtained, by reversing the computations.

Statement (d) is a direct consequence of Lemma \ref{l:ultim}.
\end{proof}

\newpage

\section{Appendix}\label{s:appendixa}

  Here, we provide the list of
$54$ Hamiltonian functions \eqref{e:genham4} of degree at most three, referenced in Theorem \ref{t:Hamiltonians de R4},
giving rise to Hamiltonian vector fields for which their associated KHK maps preserve the original Hamiltonian.

{\small
\begin{flalign*}
H_{1} &=
a_{1000} x +a_{2000} x^{2}+a_{3000} x^{3}+\frac{a_{0001} a_{1011} z}{2 a_{1002}}+\frac{a_{1001} a_{1011} x z}{2 a_{1002}}&&\\
&+\frac{a_{1011} a_{2001} x^{2} z}{2 a_{1002}}
+\frac{a_{0002} a_{1011}^{2} z^{2}}{4 a_{1002}^{2}}+\frac{a_{1011}^{2} x \,z^{2}}{4 a_{1002}}+a_{0001} w +a_{1001} x w
&&\\
&+a_{2001} x^{2} w +\frac{a_{0002} a_{1011} z w}{a_{1002}}+a_{1011} x z w +a_{0002} w^{2}+a_{1002} x \,w^{2}.\\
\end{flalign*}
}

\vspace{-1.6cm}

{\small
\begin{flalign*}
H_{2} &=
a_{1000} x +a_{0001} w +a_{0010} z +a_{0100} y +\frac{a_{1001} a_{1011} x z}{2 a_{1002}}+\frac{a_{1011}^{2} x \,z^{2}}{4 a_{1002}}+a_{1001} x w&&\\
&
-\frac{a_{1011} \left(a_{0001} a_{1011}-2 a_{0010} a_{1002}\right) x^{2} z}{2 a_{0100} a_{1002}} -\frac{\left(a_{0001} a_{1011}-2 a_{0010} a_{1002}\right) x^{2} w}{a_{0100}}&&\\
&+\frac{a_{0002} a_{1011} z w}{a_{1002}}+a_{1002} x \,w^{2}+a_{1011} x z w
+a_{0002} w^{2}
&&\\
&+\frac{\left(a_{0001} a_{1011}
-2 a_{0010} a_{1002}\right)^{2} x^{3}}{4 a_{0100}^{2} a_{1002}}-\frac{x^{2}\left(a_{0001} a_{1011}-2 a_{0010} a_{1002}\right) A}{4 a_{0100}^{2} a_{1002}^{2}}&&\\
&
+\frac{a_{0002} a_{1011}^{2} z^{2}}{4 a_{1002}^{2}}.
\end{flalign*}
}
where $A=a_{0001} a_{0002} a_{1011}-2 a_{0002} a_{0010} a_{1002}+2 a_{0100} a_{1001} a_{1002}$.

\vspace{-0.2cm}

{\small
\begin{flalign*}
H_{3}&=
a_{1000} x +a_{0001} w +\frac{A_1 z}{2 a_{0101} a_{1002}}+a_{0100} y\frac{A_2x z}{2 a_{1002}^{2} a_{0101}}+a_{0002} w^{2}&&\\
 &-\frac{a_{1011} \left(a_{0002} a_{1011}-a_{0011} a_{1002}\right) x^{2} z}{a_{0101} a_{1002}}
 +\frac{a_{1011}^{2} x \,z^{2}}{4 a_{1002}}+a_{1001} x w+a_{0011} z w  &&\\
 &-\frac{2 \left(a_{0002} a_{1011}-a_{0011} a_{1002}\right) x^{2} w}{a_{0101}}+a_{1002} x \,w^{2}+a_{1011} x z w &&\\
 & -\frac{\left(a_{0002} a_{1011}-a_{0011} a_{1002}\right) x y}{a_{1002}}-\frac{\left(a_{0002} a_{1011}-a_{0011} a_{1002}\right) A_3 x^{2}}{a_{0101}^{2} a_{1002}^{2}}&&\\
 &+\frac{a_{1011} a_{0101} y z}{2 a_{1002}}+a_{0101} y w-\frac{a_{1011} \left(a_{0002} a_{1011}-2 a_{0011} a_{1002}\right) z^{2}}{4 a_{1002}^{2}}&&\\
 &+\frac{\left(a_{0002} a_{1011}-a_{0011} a_{1002}\right)^{2} x^{3}}{a_{0101}^{2} a_{1002}},
 \end{flalign*}
}where
{\small \begin{align*}
A_1&=a_{0001} a_{0101} a_{1011}-2 a_{0002} a_{0100} a_{1011}+2 a_{0011} a_{0100} a_{1002},\\
A_2&=2 a_{0002}^{2} a_{1011}^{2}-4 a_{0002} a_{0011} a_{1002} a_{1011}+2 a_{0011}^{2} a_{1002}^{2}\\
&+a_{0101} a_{1001} a_{1002} a_{1011},\\
A_3&=a_{0002}^{2} a_{1011}-a_{0002} a_{0011} a_{1002}+a_{0101} a_{1001} a_{1002}.
\end{align*}}

\vspace{-1cm}

{\small
\begin{flalign*}H_{4,5}& =
a_{0001} w +\frac{\left(a_{0100} a_{0110} a_{1001}^{2}\pm\sqrt{A}\right) z}{2 a_{0110} a_{1001} a_{0002}} +\frac{a_{0002} a_{0110} y^{2}}{a_{1001}}+a_{1010} x z&&\\
& +a_{1001} x w +\frac{2 a_{0002} a_{1010} x y}{a_{1001}}+a_{0110} y z +a_{0002} w^{2}\\
&+\frac{\left(4 a_{0002}^{2} a_{1010}^{2}+a_{0110} a_{1001}^{3}\right) x^{2}}{4 a_{0002} a_{0110} a_{1001}}+a_{1000} x +a_{0100} y+\frac{a_{1001} a_{0110} z^{2}}{4 a_{0002}},
 \end{flalign*}
}
where $$A=-a_{0110} a_{1001} \left(a_{0001} a_{0110} a_{1001}+2 a_{0002} a_{0100} a_{1010}-2 a_{0002} a_{0110} a_{1000}\right)^{2}.$$

\vspace{-0.4cm}

{\small
\begin{flalign*}
H_{6} &=
a_{2001} x^{2} w +a_{3000} x^{3}+a_{0002} w^{2}+a_{1001} x w +a_{2000} x^{2}+a_{0001} w +a_{1000} x.&&\\
\end{flalign*}
}

\vspace{-1.6cm}

 {\small
 \begin{flalign*}
 H_{7} &=
a_{1000} x -\frac{a_{0010} \left(a_{0002} a_{0010}-a_{0100} a_{1001}\right) x^{2}}{a_{0100}^{2}}+a_{0100} y +a_{0010} z&&\\
& +a_{0001} w +a_{1001} x w +a_{0002} w^{2}.
\end{flalign*}}

\vspace{-1cm}

{\small
\begin{flalign*}
H_{8}&=
\frac{a_{0100} a_{1100} x}{2 a_{0200}}+\frac{a_{1100}^{2} x^{2}}{4 a_{0200}}+a_{0100} y +a_{1100} x y +a_{0200} y^{2}+a_{0010} z &&\\
&+\frac{a_{0110} a_{1100} x z}{2 a_{0200}}+\frac{a_{1100}^{2} a_{0210} x^{2} z}{4 a_{0200}^{2}}+a_{0110} y z+\frac{a_{1100} a_{0210} x y z}{a_{0200}}+a_{0210} y^{2} z &&\\
&+a_{0020} z^{2}+\frac{a_{0120} a_{1100} x z^{2}}{2 a_{0200}}+a_{0120} y z^{2}+a_{0,0,3,0} z^{3}.
\end{flalign*}
}

\vspace{-1cm}

{\small
\begin{flalign*}H_{9} &=
\frac{a_{0100} a_{1110} x}{2 a_{0210}}+a_{0100} y +a_{0010} z +\frac{a_{0110} a_{1110} x z}{2 a_{0210}}+\frac{a_{1110}^{2} x^{2} z}{4 a_{0210}}+a_{0110} y z &&\\
&+a_{1110} x y z +a_{0210} y^{2} z +a_{0020} z^{2}+\frac{a_{0120} a_{1110} x z^{2}}{2 a_{0210}}+a_{0120} y z^{2}+a_{0,0,3,0} z^{3}.
\end{flalign*}
}

\vspace{-1cm}

{\small
\begin{flalign*}
H_{10} &=
a_{1020} x z^{2}+a_{0120} y z^{2}+a_{0,0,3,0} z^{3}+a_{1010} x z +a_{0110} y z +a_{0020} z^{2}+a_{1000} x &&\\& +a_{0100} y +a_{0010} z.&&\\
\end{flalign*}
}

\vspace{-1.6cm}

{\small
\begin{flalign*}
H_{11} &=
a_{3000} x^{3}+a_{2010} x^{2} z +a_{1020} x z^{2}+a_{0,0,3,0} z^{3}+a_{2000} x^{2}+a_{1010} x z +a_{0020} z^{2}&&\\&+a_{1000} x +a_{0010} z.
\end{flalign*}
}

\vspace{-1cm}

{\small
\begin{flalign*}H_{12} &=
\frac{a_{1110}^{2} a_{0200} x^{2}}{4 a_{0210}^{2}}+\frac{2 a_{0210} a_{1000} y}{a_{1110}}+\frac{a_{0200} a_{1110} x y}{a_{0210}}+a_{0200} y^{2}+a_{0010} z &&\\
&+a_{1010} x z +\frac{a_{1110}^{2} x^{2} z}{4 a_{0210}}+\frac{2 a_{0210} a_{1010} y z}{a_{1110}}+a_{1000} x +a_{1110} x y z +a_{0210} y^{2} z &&\\
&+\frac{a_{0120} a_{1110} x  z^{2}}{2 a_{0210}}+a_{0120} y  z^{2}+a_{0,0,3,0} z^{3}.
\end{flalign*}
}

\vspace{-1.1cm}

{\small
\begin{flalign*}
H_{13}& =+a_{1020} x z^{2}+
a_{1000} x +\frac{a_{1100}^{2} x^{2}}{4 a_{0200}}+\frac{2 a_{0200} a_{1000} y}{a_{1100}}+a_{1100} x y +a_{0200} y^{2}&&\\
&+a_{0010} z +a_{1010} x z +\frac{2 a_{0200} a_{1010} y z}{a_{1100}}+\frac{2 a_{0200} a_{1020} y z^{2}}{a_{1100}}+a_{0,0,3,0} z^{3}.
\end{flalign*}
}

\vspace{-0.8cm}

 {\small
\begin{flalign*}H_{14}&=
a_{1000} x +\frac{2 a_{0210} a_{1000} y}{a_{1110}}+a_{0010} z +a_{1010} x z +\frac{a_{1110}^{2} x^{2} z}{4 a_{0210}}+\frac{2 a_{0210} a_{1010} y z}{a_{1110}}&&\\
& +a_{1110} x y z+a_{0210} y^{2} z +a_{1020} x z^{2}+\frac{2 a_{0210} a_{1020} y z^{2}}{a_{1110}}+a_{0,0,3,0} z^{3}.
\end{flalign*}
}

\vspace{-1cm}

 {\small
\begin{flalign*}H_{15} &=
a_{2001} x^{2} w +a_{3000} x^{3}+a_{2010} x^{2} z +a_{1001} x w +a_{2000} x^{2}+a_{1010} x z +a_{0001} w  &&\\
&+a_{1000} x+a_{0010} z.
\end{flalign*}
}

\vspace{-1.2cm}{\small
\begin{flalign*}H_{16}&=
a_{1000} x +a_{1110} x y z +a_{0001} w +a_{0010} z +a_{0100} y +\frac{A^{2} z^{3}}{4 a_{0001}^{2} a_{2010}}+a_{1010} x z &&\\
&+\frac{a_{1010} a_{1110} y z}{2 a_{2010}}-\frac{a_{1010} A z^{2}}{2 a_{0001} a_{2010}}+a_{2010} x^{2} z +\frac{a_{1110}^{2} y^{2} z}{4 a_{2010}}-\frac{A x z^{2}}{a_{0001}}&&\\
&-\frac{a_{1110} A y z^{2}}{2 a_{0001} a_{2010}},
\end{flalign*}
}where $A=2 a_{0100} a_{2010}-a_{1000} a_{1110}$.

\vspace{-0.5cm}

{\small
\begin{flalign*}
H_{17} &=
a_{1000} x +a_{0100} y +a_{0010} z +a_{0001} w+a_{1010} x z +a_{0110} y z &&\\
&-\frac{\left(a_{0100} a_{1010}-a_{0110} a_{1000}\right) z^{2}}{a_{0001}}.
\end{flalign*}
}

\vspace{-1cm}
{\small
\begin{flalign*}H_{18}&=
a_{1000} x +a_{0100} y +a_{0200} y^{2}+a_{0010} z +a_{0001} w+a_{0110} y z +a_{0210} y^{2} z&&\\
& +\frac{2 a_{0210} a_{1000} y z^{2}}{a_{0001}}+\frac{a_{1000} \left(a_{0001} a_{0110}-a_{0200} a_{1000}\right) z^{2}}{a_{0001}^{2}}+\frac{a_{0210} a_{1000}^{2} z^{3}}{a_{0001}^{2}}.
\end{flalign*}
}

\vspace{-1cm}

{\small
\begin{flalign*}H_{19}&=
a_{1000} x +a_{1110} x y z +a_{0001} w +a_{0010} z +a_{0100} y +a_{0200} y^{2}+a_{1010} x z &&\\
&+a_{1100} x y+\frac{a_{1110} A_1^{2} z^{3}}{4 a_{0001}^{2} a_{0200} a_{1100}}+\frac{2 a_{0200} a_{1010} y z}{a_{1100}}+\frac{a_{1100}^{2} x^{2}}{4 a_{0200}}+\frac{a_{1110} a_{1100} x^{2} z}{4 a_{0200}}&&\\
&-\frac{A_1\, A_2 z^{2}}{4 a_{0001}^{2} a_{0200} a_{1100}}+\frac{a_{0200} a_{1110} y^{2} z}{a_{1100}}-\frac{a_{1110} A_1 x z^{2}}{2 a_{0001} a_{0200}}-\frac{a_{1110} A_1 y z^{2}}{a_{0001} a_{1100}}.
\end{flalign*}
}where
{ \begin{align*}
A_1&=a_{0100} a_{1100}-2 a_{0200} a_{1000},\\
A_2&=4 a_{0001} a_{0200} a_{1010}+a_{0100} a_{1100}^{2}-2 a_{0200} a_{1000} a_{1100}.
\end{align*}}

\vspace{-1cm}

{\small
\begin{flalign*}
H_{20}&=
a_{1000} x +a_{2000} x^{2}+a_{3000} x^{3}+a_{0010} z +a_{1010} x z +\frac{a_{0011}^{2} z^{2}}{4 a_{0002}}+\frac{2 a_{0010} a_{0002} w}{a_{0011}}&&\\
&+\frac{2 a_{1010} a_{0002} x w}{a_{0011}}+\frac{2 a_{0002} a_{2010} x^{2} w}{a_{0011}}+a_{0011} z w +a_{0002} w^{2}+a_{2010} x^{2} z.
\end{flalign*}
}

\vspace{-1cm}
{\small
\begin{flalign*} H_{21}&=
a_{1000} x +a_{0001} w +a_{0010} z +a_{0100} y +a_{1010} x z +\frac{2 a_{1010} a_{0002} x w}{a_{0011}}+a_{0002} w^{2}&&\\
&-\frac{\left(a_{0001} a_{0011}-2 a_{0002} a_{0010}\right)  A x^{2}}{4 a_{0002} a_{0011} a_{0100}^{2}}+\frac{a_{0011}^{2} z^{2}}{4 a_{0002}}+a_{0011} z w,
\end{flalign*}
} where $A=\left(a_{0001} a_{0011}^{2}-2 a_{0002} a_{0010} a_{0011}+4 a_{0002} a_{0100} a_{1010}\right)$.

\vspace{-0.2cm}
{\small
\begin{flalign*} H_{22} &=
a_{1000} x +a_{2000} x^{2}+\frac{a_{0011}^{2} \left(2 a_{0011} a_{0,0,3,0}+a_{0110} a_{1020}\right) x^{3}}{a_{0110}^{3}}-a_{0011} x y&&\\
& +a_{0010} z  -\frac{a_{0011} \left(3 a_{0011} a_{0,0,3,0}+2 a_{0110} a_{1020}\right) x^{2} z}{a_{0110}^{2}}+a_{0110} y z &&\\
&-\frac{a_{0110} \left(a_{0011} a_{1010}+a_{0110} a_{2000}\right) z^{2}}{a_{0011}^{2}}+a_{1020} x \,z^{2}+a_{0,0,3,0} z^{3}-\frac{a_{0011}^{2} x w}{a_{0110}}\\
&+a_{0011} z w+a_{1010} x z.
\end{flalign*}
}

\vspace{-1cm}
{\small
\begin{flalign*}H_{23}&=
a_{1000} x +a_{2000} x^{2} -a_{0011} x y -\frac{a_{0110} \left(a_{0011} a_{1000}+a_{0100} a_{2000}\right) z}{a_{0011}^{2}}&&\\
&+a_{1010} x z  -\frac{a_{0110} \left(a_{0011} a_{1010}+a_{0110} a_{2000}\right)z^{2}}{a_{0011}^{2}}-\frac{a_{0011}^{2} x w}{a_{0110}}+a_{0011} z w&&\\
&+a_{0100} y +a_{0110} y z.
\end{flalign*}
}

\vspace{-1cm}
{\small
\begin{flalign*}H_{24}&=
a_{1000} x +a_{0010} z +a_{0100} y +a_{0200} y^{2}+\frac{a_{0110} \left(a_{0011}+a_{1100}\right) x z}{2 a_{0200}}&&\\
&-\frac{\left(a_{0100} a_{1100}-2 a_{0200} a_{1000}\right)^{2} a_{0110} w x}{\left(2 a_{0010} a_{0200}-a_{0100} a_{0110}\right)^{2}}+a_{1100} x y+a_{0110} y z &&\\
&-\frac{a_{0200} \left(a_{0100} a_{1100}-2 a_{0200} a_{1000}\right)^{2} w^{2}}{\left(2 a_{0010} a_{0200}-a_{0100} a_{0110}\right)^{2}}+\frac{\left(a_{0010} a_{1100}-a_{0110} a_{1000}\right) A_1 x^{2}}{\left(2 a_{0010} a_{0200}-a_{0100} a_{0110}\right)^{2}}\\
&-\frac{A_2 \,A_3 z^{2}}{4 a_{0200} \left(a_{0100} a_{1100}-2 a_{0200} a_{1000}\right)^{2}}+a_{0011} z w,
\end{flalign*}
} where
{ \begin{align*}
A_1&=a_{0010} a_{0200} a_{1100}-a_{0100} a_{0110} a_{1100}+a_{0110} a_{0200} a_{1000},\\
A_2&=2 a_{0010} a_{0011} a_{0200}-a_{0011} a_{0100} a_{0110}-a_{0100} a_{0110} a_{1100}\\
&+2 a_{0110} a_{0200} a_{1000},\\
A_3&=2 a_{0010} a_{0011} a_{0200}-a_{0011} a_{0100} a_{0110}+a_{0100} a_{0110} a_{1100}\\
&-2 a_{0110} a_{0200} a_{1000}.
\end{align*}}

\vspace{-1cm}
{\small
\begin{flalign*}H_{25} &=
\frac{a_{0100} a_{1100} x}{2 a_{0200}}+\frac{\left(a_{0002} a_{0110}^{2}+a_{0200} a_{1100}^{2}\right) x^{2}}{4 a_{0200}^{2}}+a_{0100} y +a_{1100} x y &&\\
&+a_{0200} y^{2}+\frac{a_{0100} a_{0110} z}{2 a_{0200}}+\frac{a_{0110} \left(a_{0011}+a_{1100}\right) x z}{2 a_{0200}}+a_{0110} y z &&\\
&+\frac{\left(a_{0002} a_{0110}^{2}+a_{0011}^{2} a_{0200}\right) z^{2}}{4 a_{0002} a_{0200}}+\frac{a_{0002} a_{0110} x w}{a_{0200}}+a_{0011} z w +a_{0002} w^{2}.
\end{flalign*}

\vspace{-0.6cm}
{\small
\begin{flalign*}H_{26,27} &=
a_{1000} x +a_{0001} w +\frac{\left(a_{0001} a_{0011} a_{0200}+a_{0002} a_{0100} a_{0110}\pm\sqrt{A}\right) z}{2 a_{0002} a_{0200}} &&\\
&+a_{0100} y+a_{0200} y^{2}+\frac{a_{0110} \left(a_{0011}+a_{1100}\right) x z}{2 a_{0200}}+\frac{a_{0002} a_{0110} x w}{a_{0200}}+a_{1100} x y &&\\
&+a_{0110} y z +a_{0002} w^{2}+\frac{\left(a_{0002} a_{0110}^{2}+a_{0200} a_{1100}^{2}\right) x^{2}}{4 a_{0200}^{2}}&&\\
&+\frac{\left(a_{0002} a_{0110}^{2}+a_{0011}^{2} a_{0200}\right) z^{2}}{4 a_{0002} a_{0200}}+a_{0011} z w,
\end{flalign*}
}

where $A=-a_{0002} a_{0200} \left(a_{0001} a_{0110}+a_{0100} a_{1100}-2 a_{0200} a_{1000}\right)^{2}$.

{\small
\begin{flalign*}H_{28} &=
a_{1000} x +a_{2000} x^{2}+a_{3000} x^{3}+a_{0100} y +\frac{a_{0100} a_{1001} x y}{a_{0001}}+a_{0010} z +a_{1010} x z  &&\\
&+a_{2010} x^{2} z-\frac{a_{0100}^{2} a_{1001} y z}{a_{0001}^{2}}-\frac{\left(a_{0001} a_{1010}+a_{0100} a_{2000}\right) a_{0100} z^{2}}{a_{0001}^{2}}&&\\
&-\frac{a_{0100} \left(2 a_{0001} a_{2010}+3 a_{0100} a_{3000}\right) x z^{2}}{a_{0001}^{2}}
+a_{0001} w +a_{1001} x w &&\\
&
+\frac{a_{0100}^{2} \left(a_{0001} a_{2010}+2 a_{0100} a_{3000}\right) z^{3}}{a_{0001}^{3}}-\frac{a_{0100} a_{1001} z w}{a_{0001}}.
\end{flalign*}

\vspace{-0.5cm}

{\small
\begin{flalign*}H_{29} &=
a_{1000} x +a_{2000} x^{2}+a_{0100} y -a_{0011} x y -\frac{A z}{a_{1001}^{2}}+a_{1010} x z -\frac{a_{0011}^{2} y z}{a_{1001}}&&\\
&-\frac{a_{0011} \left(a_{0011} a_{2000}-a_{1001} a_{1010}\right) z^{2}}{a_{1001}^{2}}+a_{0001} w +a_{1001} x w +a_{0011} z w,
\end{flalign*}
}
where \begin{align*}
A=&a_{0001} a_{0011} a_{2000}-a_{0001} a_{1001} a_{1010}-a_{0011} a_{1000} a_{1001}\\
&-a_{0100} a_{1001} a_{2000}.
\end{align*}

\vspace{-0.5cm}

{\small
\begin{flalign*}H_{30} &=
\frac{\left(a_{0001} a_{0011}+a_{0100} a_{1001}\right) x}{a_{0101}}+\frac{a_{1001} \left(a_{0011} a_{0101}+a_{0200} a_{1001}\right) x^{2}}{a_{0101}^{2}}&&\\
&+a_{0100} y +\frac{\left(a_{0011} a_{0101}+2 a_{0200} a_{1001}\right) x y}{a_{0101}}+a_{0200} y^{2}+a_{0010} z &&\\
&+\frac{\left(a_{0011}^{2}+a_{0110} a_{1001}\right) x z}{a_{0101}}+\frac{a_{0210} a_{1001}^{2} x^{2} z}{a_{0101}^{2}}+a_{0110} y z +\frac{2 a_{0210} a_{1001} x y z}{a_{0101}}&&\\
&+a_{0210} y^{2} z -\frac{a_{0011} \left(a_{0011} a_{0200}-a_{0101} a_{0110}\right) z^{2}}{a_{0101}^{2}}+\frac{2 a_{0210} a_{0011} a_{1001} x z^{2}}{a_{0101}^{2}}&&\\
&+\frac{2 a_{0210} a_{0011} y z^{2}}{a_{0101}}+\frac{a_{0011}^{2} a_{0210} z^{3}}{a_{0101}^{2}}+a_{0001} w +a_{1001} x w +a_{0101} y w +a_{0011} z w.
\end{flalign*}
}

\vspace{-0.8cm}
{\small
\begin{flalign*}H_{31} &=
a_{1000} x +\frac{a_{1001} A_1 x^{2}}{a_{0101}^{2}}-\frac{A_2 y}{a_{1001}}+\frac{A_3 x y}{a_{0101}}+a_{0200} y^{2}-\frac{A_4 z}{a_{0101}^{2} a_{1001}}&&\\
&+\frac{\left(a_{0011}^{2}+a_{0110} a_{1001}\right) x z}{a_{0101}}+\frac{a_{0210} a_{1001}^{2} x^{2} z}{a_{0101}^{2}}+a_{0110} y z +\frac{2 a_{0210} a_{1001} x y z}{a_{0101}}&&\\
&+a_{0210} y^{2} z +\frac{2 a_{0210} a_{0011} a_{1001} x z^{2}}{a_{0101}^{2}}-\frac{a_{0011} \left(a_{0011} a_{0200}-a_{0101} a_{0110}\right) z^{2}}{a_{0101}^{2}}&&\\
&+\frac{2 a_{0210} a_{0011} y z^{2}}{a_{0101}}+\frac{a_{0011}^{2} a_{0210} z^{3}}{a_{0101}^{2}}+a_{0001} w +a_{1001} x w +a_{0101} y w +a_{0011} z w,
\end{flalign*}
} where \begin{align*}
A_1&=a_{0011} a_{0101}+a_{0200} a_{1001},\\
A_2&=a_{0001} a_{0011}-a_{0101} a_{1000},\\
A_3&=a_{0011} a_{0101}+2 a_{0200} a_{1001},\\
A_4&=a_{0001}^{2} a_{0210} a_{1001}+a_{0001} a_{0011}^{2} a_{0101}+2 a_{0001} a_{0011} a_{0200} a_{1001}\\
&-a_{0001} a_{0101} a_{0110} a_{1001}-a_{0011} a_{0101}^{2} a_{1000}.
\end{align*}

\vspace{-0.6cm}
{\small
\begin{flalign*}H_{32} &=
a_{1000} x -\frac{a_{1100} A x^{2}}{a_{0101}^{2}}+a_{0100} y +a_{1100} x y +\frac{a_{0100} a_{1100} z}{a_{0101}}+\frac{a_{1100}^{2} x z}{a_{0101}}&&\\
&+a_{0001} w +a_{1001} x w +a_{0101} y w +a_{1100} z w +a_{0002} w^{2},
\end{flalign*}
}where $A=a_{0002} a_{1100}-a_{0101} a_{1001}$.

\vspace{-0.4cm}
{\small
\begin{flalign*}H_{33} &=
\frac{A x}{a_{0101}^{2}}-\frac{a_{1100} \left(a_{0002} a_{1100}-a_{0101} a_{1001}\right) x^{2}}{a_{0101}^{2}}+a_{0100} y +a_{1100} x y +a_{0010} z &&\\
&+\frac{\left(a_{0110} a_{1001}+a_{1100}^{2}\right) x z}{a_{0101}}+a_{0110} y z +\frac{a_{0110} \left(a_{0002} a_{0110}+a_{0101} a_{1100}\right) z^{2}}{a_{0101}^{2}}&&\\
&+a_{0001} w +a_{1001} x w +a_{0101} y w +\frac{\left(2 a_{0002} a_{0110}+a_{0101} a_{1100}\right) z w}{a_{0101}}+a_{0002} w^{2},
\end{flalign*}
}where \begin{align*}A&=a_{0001} a_{0002} a_{0110}+a_{0001} a_{0101} a_{1100}-a_{0002} a_{0010} a_{0101}-a_{0002} a_{0100} a_{1100}\\
&+a_{0100} a_{0101} a_{1001}.\end{align*}

\vspace{-0.6cm}
{\small
\begin{flalign*}H_{34} &=
a_{1000} x -\frac{a_{1100} \left(a_{0002} a_{1100}-a_{0101} a_{1001}\right) x^{2}}{a_{0101}^{2}}+a_{0100} y +a_{1100} x y +a_{0010} z &&\\
&+\frac{\left(a_{0110} a_{1001}+a_{1100}^{2}\right) x z}{a_{0101}}+a_{0110} y z +\frac{a_{0110} \left(a_{0002} a_{0110}+a_{0101} a_{1100}\right) z^{2}}{a_{0101}^{2}}&&\\
&+\frac{\left(a_{0010} a_{0101}-a_{0100} a_{1100}\right) w}{a_{0110}}+a_{1001} x w +a_{0101} y w+a_{0002} w^{2} &&\\
&+\frac{\left(2 a_{0002} a_{0110}+a_{0101} a_{1100}\right) z w}{a_{0101}}.
\end{flalign*}
}

\vspace{-0.6cm}
{\small
\begin{flalign*}H_{35} &=
a_{1000} x +\frac{a_{1001} \left(a_{0011} a_{0101}+a_{0200} a_{1001}\right) x^{2}}{a_{0101}^{2}}+a_{0100} y+a_{0200} y^{2}-\frac{A z}{a_{0101}^{2}}&&\\
& +\frac{\left(a_{0011} a_{0101}+2 a_{0200} a_{1001}\right) x y}{a_{0101}}+\frac{\left(a_{0011}^{2}+a_{0110} a_{1001}\right) x z}{a_{0101}}+a_{0110} y z &&\\
&-\frac{a_{0011} \left(a_{0011} a_{0200}-a_{0101} a_{0110}\right) z^{2}}{a_{0101}^{2}}+a_{0001} w +a_{1001} x w +a_{0101} y w&&\\& +a_{0011} z w,
\end{flalign*}
}where \begin{align*}A&=a_{0001} a_{0011} a_{0200}-a_{0001} a_{0101} a_{0110}-a_{0011} a_{0100} a_{0101}-a_{0100} a_{0200} a_{1001}\\
&+a_{0101} a_{0200} a_{1000}.\end{align*}

\vspace{-0.6cm}
{\small
\begin{flalign*}H_{36} &=
a_{1000} x +a_{0100} y +a_{0200} y^{2}+a_{0010} z +a_{0001} w +a_{0101} y w&&\\
& -\frac{a_{1000} \left(a_{0010} a_{0101}+a_{0200} a_{1000}\right) w^{2}}{a_{0010}^{2}}.
\end{flalign*}
}

\vspace{-1cm}
{\small
\begin{flalign*}H_{37} &=
a_{0002} w^{2}+a_{0101} y w +a_{0200} y^{2}+a_{0001} w +a_{0100} y.&&\\\end{flalign*}
}

\vspace{-1.4cm}
{\small
\begin{flalign*}H_{38} &=
a_{1000} x +a_{2000} x^{2}-\frac{a_{0101} a_{2010} x^{3}}{6 a_{0200}}+\frac{2 a_{0001} a_{0200} y}{a_{0101}}+a_{0200} y^{2}&&\\
&-\frac{2 a_{0200} a_{1000} z}{a_{0101}}-\frac{4 a_{0200} a_{2000} x z}{a_{0101}}+a_{2010} x^{2} z +\frac{4 a_{0200}^{2} a_{2000} z^{2}}{a_{0101}^{2}}&&\\
&-\frac{2 a_{0200} a_{2010} x z^{2}}{a_{0101}}+\frac{4 a_{0200}^{2} a_{2010} z^{3}}{3 a_{0101}^{2}}+a_{0001} w +a_{0101} y w +\frac{a_{0101}^{2} w^{2}}{4 a_{0200}}.\end{flalign*}
}

\vspace{-0.8cm}
{\small
\begin{flalign*}H_{39} &=
a_{1000} x +\frac{a_{0101}^{2} a_{1020} x^{3}}{12 a_{0200}^{2}}+\frac{2 a_{0001} a_{0200} y}{a_{0101}}+a_{0200} y^{2}-\frac{2 a_{0200} a_{1000} z}{a_{0101}}&&\\
&-\frac{a_{0101} a_{1020} x^{2} z}{2 a_{0200}}+a_{1020} x z^{2}-\frac{2 a_{0200} a_{1020} z^{3}}{3 a_{0101}}+a_{0001} w +a_{0101} y w &&\\
&+\frac{a_{0101}^{2} w^{2}}{4 a_{0200}}.\end{flalign*}
}

\vspace{-1cm}
{\small
\begin{flalign*} H_{40} &=
a_{1000} x +a_{0010} z +a_{0100} y +a_{0101} y w +\frac{a_{0101}^{2} y^{2}}{4 a_{0002}}+\frac{a_{0101} A^{2} z x}{4 a_{0002}^{2} a_{0100}^{2}}+a_{0002} w^{2}&&\\
&-\frac{A^{2} x^{2}}{4 a_{0002} a_{0100}^{2}}-\frac{a_{0101}^{2} A^{2} z^{2}}{16 a_{0002}^{3} a_{0100}^{2}}.\end{flalign*}
}where $A=2 a_{0002} a_{0010}+a_{0101} a_{1000}$.

\vspace{-0.4cm}
{\small
\begin{flalign*} H_{41} &=
a_{1000} x +a_{0001} w +a_{0010} z +a_{0100} y +\frac{2 a_{0200} a_{1001} y w}{a_{1100}}+a_{0200} y^{2}+a_{1001} x w &&\\
&+a_{1100} x y +\frac{\left(a_{0100} a_{1100}-2 a_{0200} a_{1000}\right) A w^{2}}{4 a_{0010}^{2} a_{0200} a_{1100}}&&\\
&+\frac{a_{1100}^{2} x^{2}}{4 a_{0200}},\end{flalign*}
}where
$A=4 a_{0010} a_{0200} a_{1001}-a_{0100} a_{1100}^{2}+2 a_{0200} a_{1000} a_{1100}. $

\vspace{-0.4cm}
{\small
\begin{flalign*}
H_{42} &=
a_{1000} x +\frac{a_{1100}^{2} x^{2}}{4 a_{0200}}+\frac{2 a_{0200} a_{1000} y}{a_{1100}}+a_{1100} x y +a_{0200} y^{2}+a_{0001} w &&\\
&+a_{1001} x w +\frac{2 a_{0200} a_{1001} y w}{a_{1100}}+a_{0002} w^{2}.\end{flalign*}
}

\vspace{-1cm}
{\small
\begin{flalign*} H_{43} &=
a_{1000} x +a_{0001} w +a_{0010} z +a_{0100} y +a_{0101} y w +a_{0200} y^{2}+\frac{a_{0101} a_{1100} x w}{2 a_{0200}}&&\\
&+a_{1100} x y+\frac{a_{0101} \left(a_{0010} a_{0101}-a_{0100} a_{1100}+2 a_{0200} a_{1000}\right)^{2} z x}{\left(2 a_{0001} a_{0200}-a_{0100} a_{0101}\right)^{2}} +\frac{a_{0101}^{2} w^{2}}{4 a_{0200}}&&\\
&+\frac{A_1 \, A_2 x^{2}}{4 a_{0200} \left(2 a_{0001} a_{0200}-a_{0100} a_{0101}\right)^{2}}-\frac{a_{0200} A_3^{2} z^{2}}{\left(2 a_{0001} a_{0200}-a_{0100} a_{0101}\right)^{2}},\end{flalign*}
}where
\begin{align*}
A_1&=2 a_{0001} a_{0200} a_{1100}-a_{0010} a_{0101}^{2}-2 a_{0101} a_{0200} a_{1000},\\
A_2&=2 a_{0001} a_{0200} a_{1100}+a_{0010} a_{0101}^{2}-2 a_{0100} a_{0101} a_{1100}+2 a_{0101} a_{0200} a_{1000},\\
A_3&=a_{0010} a_{0101}-a_{0100} a_{1100}+2 a_{0200} a_{1000}.
\end{align*}

\vspace{-0.6cm}
{\small
\begin{flalign*} H_{44} &=
a_{1000} x -\frac{\left(a_{0101} a_{1010}-a_{1100}^{2}\right) x^{2}}{4 a_{0200}}+a_{0100} y +a_{1100} x y +a_{0200} y^{2}+a_{1010} x z &&\\
&-\frac{a_{0200} a_{1010} z^{2}}{a_{0101}}+\frac{\left(a_{0100} a_{1100}-2 a_{0200} a_{1000}\right) z}{a_{0101}}+\frac{a_{0100} a_{0101} w}{2 a_{0200}}&&\\
&+\frac{a_{0101} a_{1100} x w}{2 a_{0200}}+a_{0101} y w +\frac{a_{0101}^{2} w^{2}}{4 a_{0200}}.\end{flalign*}
}

\vspace{-0.6cm}
{\small
\begin{flalign*} H_{45} &=
\frac{a_{0100} a_{1100} x}{2 a_{0200}}+\frac{\left(2 a_{0200} a_{1010} a_{2010}+a_{1020} a_{1100}^{2}\right) x^{2}}{4 a_{0200} a_{1020}}+\frac{a_{2010}^{2} x^{3}}{3 a_{1020}}+a_{0100} y &&\\
&+a_{1100} x y +a_{0200} y^{2}+a_{1010} x z +a_{2010} x^{2} z +\frac{a_{1010} a_{1020} z^{2}}{2 a_{2010}}+a_{1020} x z^{2}&&\\
&+\frac{a_{1020}^{2} z^{3}}{3 a_{2010}}-\frac{a_{0100} a_{2010} w}{a_{1020}}-\frac{a_{2010} a_{1100} x w}{a_{1020}}-\frac{2 a_{0200} a_{2010} y w}{a_{1020}}&&\\
&+\frac{a_{0200} a_{2010}^{2} w^{2}}{a_{1020}^{2}}.\end{flalign*}
}

\vspace{-0.8cm}
{\small
\begin{flalign*} H_{46} &=
a_{1000} x +\frac{\left(2 a_{0200} a_{1010} a_{2010}+a_{1020} a_{1100}^{2}\right) x^{2}}{4 a_{0200} a_{1020}}+\frac{a_{2010}^{2} x^{3}}{3 a_{1020}}
+a_{1100} x y &&\\
&+a_{0200} y^{2}+a_{0010} z -\frac{2 a_{0200} \left(a_{0010} a_{2010}-a_{1000} a_{1020}\right) y}{a_{1020} a_{1100}}
+a_{1010} x z &&\\
&+a_{2010} x^{2} z +\frac{a_{1010} a_{1020} z^{2}}{2 a_{2010}}+a_{1020} x z^{2}+\frac{a_{1020}^{2} z^{3}}{3 a_{2010}}-\frac{a_{2010} a_{1100} x w}{a_{1020}}&&\\
&+\frac{2 a_{0200} \left(a_{0010} a_{2010}-a_{1000} a_{1020}\right) a_{2010} w}{a_{1020}^{2} a_{1100}}-\frac{2 a_{0200} a_{2010} y w}{a_{1020}}&&\\
&+\frac{a_{0200} a_{2010}^{2} w^{2}}{a_{1020}^{2}}.\end{flalign*}
}

\vspace{-0.6cm}
{\small
\begin{flalign*}H_{47} &=
a_{1000} x +\frac{a_{1100}^{2} x^{2}}{4 a_{0200}}+\frac{a_{0101}^{2} a_{1020} x^{3}}{12 a_{0200}^{2}}+a_{0100} y +a_{1100} x y +a_{0200} y^{2}&&\\
&-\frac{a_{0101} a_{1020} x^{2} z}{2 a_{0200}}+\frac{\left(a_{0100} a_{1100}-2 a_{0200} a_{1000}\right) z}{a_{0101}}+a_{1020} x z^{2}&&\\
&-\frac{2 a_{0200} a_{1020} z^{3}}{3 a_{0101}}+\frac{a_{0100} a_{0101} w}{2 a_{0200}}+\frac{a_{0101} a_{1100} x w}{2 a_{0200}}+a_{0101} y w +\frac{a_{0101}^{2} w^{2}}{4 a_{0200}}.\end{flalign*}
}

\vspace{-0.6cm}
{\small
\begin{flalign*}H_{48} &=
a_{1000} x +a_{0001} w +a_{0010} z +a_{0101} y w +\frac{a_{0101}^{2} y^{2}}{4 a_{0002}}+\frac{A_1^{2} z x}{a_{0101}^{3} a_{0001}^{2}}+a_{0110} y z &&\\
&+a_{0002} w^{2}-\frac{a_{0002} A_1^{2} x^{2}}{a_{0001}^{2} a_{0101}^{4}}+\frac{ A_2\, A_3 z^{2}}{4 a_{0101} a_{0001}^{2} a_{0002}}+\frac{2 a_{0002} a_{0110} z w}{a_{0101}},\end{flalign*}
}where
\begin{align*}
A_1&=2 a_{0001} a_{0002} a_{0110}-2 a_{0002} a_{0010} a_{0101}-a_{0101}^{2} a_{1000},\\
A_2&=2 a_{0002} a_{0010}+a_{0101} a_{1000},\\
A_3&=4 a_{0001} a_{0002} a_{0110}-2 a_{0002} a_{0010} a_{0101}-a_{0101}^{2} a_{1000}.
\end{align*}

\vspace{-0.7cm}
{\small
\begin{flalign*}H_{49} &=
a_{1000} x +a_{2000} x^{2}-\frac{a_{1010} a_{0101} y^{2}}{4 a_{2000}}+\frac{a_{1000} a_{1010} z}{2 a_{2000}}+a_{1010} x z +a_{0110} y z &&\\
&+\frac{\left(a_{0101} a_{1010}^{3}-4 a_{0110}^{2} a_{2000}^{2}\right) z^{2}}{4 a_{0101} a_{1010} a_{2000}}+a_{0101} y w -\frac{2 a_{2000} a_{0110} z w}{a_{1010}}&&\\
&-\frac{a_{0101} a_{2000} w^{2}}{a_{1010}}.\end{flalign*}
}

\vspace{-0.8cm}
{\small
\begin{flalign*}H_{50} &=
a_{1000} x +a_{2000} x^{2}+\frac{2 a_{2000} a_{2010} x^{3}}{3 a_{1010}}+\frac{a_{0002} a_{1010}^{2} y^{2}}{4 a_{2000}^{2}}+\frac{a_{1000} a_{1010} z}{2 a_{2000}}&&\\
&+a_{1010} x z +a_{2010} x^{2} z +a_{0110} y z +\frac{\left(a_{0002} a_{1010}^{4}+4 a_{0110}^{2} a_{2000}^{3}\right) z^{2}}{4 a_{0002} a_{2000} a_{1010}^{2}}&&\\
&+\frac{a_{1010} a_{2010} x \,z^{2}}{2 a_{2000}}+\frac{a_{1010}^{2} a_{2010} z^{3}}{12 a_{2000}^{2}}-\frac{a_{0002} a_{1010} y w}{a_{2000}}-\frac{2 a_{2000} a_{0110} z w}{a_{1010}}&&\\
&+a_{0002} w^{2}.\end{flalign*}
}

\vspace{-1.2cm}
{\small
\begin{flalign*}H_{51} &=
a_{0200} y^{2}+a_{0110} y z +\frac{a_{0002} a_{0110}^{2} z^{2}}{a_{0101}^{2}}+a_{0101} y w +\frac{2 a_{0002} a_{0110} z w}{a_{0101}}+a_{0002} w^{2}.&&\\\end{flalign*}
}

\vspace{-1.2cm}
{\small
\begin{flalign*}H_{52} &=
a_{1000} x +a_{0200} y^{2}+a_{0010} z +a_{0110} y z -\frac{a_{1000} A a_{0110}^{2} z^{2}}{a_{0010}^{2} a_{0101}^{2}}+a_{0101} y w &&\\
&-\frac{2 a_{1000} A a_{0110} z w}{a_{0010}^{2} a_{0101}}-\frac{a_{1000} A w^{2}}{a_{0010}^{2}},\end{flalign*}
} where $A=a_{0010} a_{0101}+a_{0200} a_{1000}$.

\vspace{-0.2cm}
{\small
\begin{flalign*}H_{53} &=
a_{1000} x -\frac{a_{0101} a_{2010} x^{3}}{6 a_{0200}}+a_{0200} y^{2}-\frac{2 a_{0200} a_{1000} z}{a_{0101}}+a_{2010} x^{2} z +a_{0110} y z &&\\
&+\frac{a_{0110}^{2} z^{2}}{4 a_{0200}}-\frac{2 a_{0200} a_{2010} x \,z^{2}}{a_{0101}}+\frac{4 a_{0200}^{2} a_{2010} z^{3}}{3 a_{0101}^{2}}+a_{0101} y w +\frac{a_{0101} a_{0110} z w}{2 a_{0200}}&&\\
&+\frac{a_{0101}^{2} w^{2}}{4 a_{0200}}.\end{flalign*}
}

\vspace{-0.6cm}
{\small
\begin{flalign*}H_{54} &=
a_{1000} x +a_{0001} w +a_{0101} y w +a_{0200} y^{2}+a_{1010} x z +\frac{a_{1010} A_1 x w}{a_{0110} A_5^{2}}&&\\
&+a_{0110} y z +\frac{A_2\, A_3 w^{2}}{a_{0001}^{2} a_{0110}^{2} A_5^{2}}-\frac{a_{0200} a_{0001}^{2} a_{1010}^{2} x^{2}}{A_5^{2}}+\frac{2 A_4 z w}{a_{0001}^{2} a_{0110}}&&\\
&+\frac{a_{1000} \left(a_{0001} a_{0110}-a_{0200} a_{1000}\right) z^{2}}{a_{0001}^{2}},\end{flalign*}
} where
\begin{align*}
A_1&=a_{0001}^{2} a_{0101} a_{0110}^{2}-4 a_{0001}^{2} a_{0200}^{2} a_{1010}-4 a_{0001} a_{0101} a_{0110} a_{0200} a_{1000}\\
&+4 a_{0101} a_{0200}^{2} a_{1000}^{2},\\
A_2&=a_{0001}^{2} a_{0101} a_{0110}^{2}-2 a_{0001}^{2} a_{0200}^{2} a_{1010}-3 a_{0001} a_{0101} a_{0110} a_{0200} a_{1000}\\
&+2 a_{0101} a_{0200}^{2} a_{1000}^{2},\\
A_3&=2 a_{0001}^{2} a_{0200} a_{1010}+a_{0001} a_{0101} a_{0110} a_{1000}-2 a_{0101} a_{0200} a_{1000}^{2},\\
A_4&=a_{0001}^{2} a_{0200} a_{1010}+a_{0001} a_{0101} a_{0110} a_{1000}-a_{0101} a_{0200} a_{1000}^{2}\\
A_5&=a_{0001} a_{0110}-2 a_{0200} a_{1000}.\\\end{align*}

\newpage

\section{Appendix}\label{s:appendixc}

We present here a $10$-parameter family of third-order Hamiltonian systems, for which their associated KHK map preserves the original Hamiltonian. Within this family there are cases where the associated vector field is not a Lie symmetry of the KHK map. 

\begin{proposition}\label{p:propoappendixc} Consider the Hamiltonian function
\begin{flalign}
H =&  b_{1} x^{3}+3  b_{1} x^{2} y + b_{2} x^{2} z 
+3  b_{1} x  y^{2}+2  b_{2} x y z 
+\left(b_{5} b_{6} -b_{3} b_{4}\right) x  z^{2} 
+ b_{6} x z s + b_{4} x z r + b_{1} y^{3}\nonumber&&\\
&+ b_{2} y^{2} z +\left(b_{5} b_{6}-b_{3} b_{4}\right) y  z^{2}+ b_{6} y z s + b_{4} y z r 
+ b_{7} x^{2}+2  b_{7} x y + b_{8} x z
 + x w +b_{3}  x s \label{e:Hpropo9}&&\\
& +b_{5}  x r
 + b_{7} y^{2}+ b_{8} y z + y w
 +b_{3}  y s +b_{5}  y r +b_{9} b_{5} z^{2}
+b_{9}  z s +b_{10}  x  +b_{10}  y.\nonumber
\end{flalign}
\noindent The corresponding Hamiltonian vector field $X$ of the form \eqref{e:ham6} is not a Lie Symmetry of its  associated
KHK maps if and only if $b_{4}b_{9}\neq 0$.
\end{proposition}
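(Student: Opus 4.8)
The plan is to reduce the whole statement to a single explicit evaluation of the Lie-symmetry defect. As a preliminary step I would record that the family \eqref{e:Hpropo9} does satisfy condition \eqref{e:restriction}: substituting \eqref{e:Hpropo9} into the polynomials $\Lambda_0,\Lambda_2,\Lambda_4$ of \eqref{e:lambdesr6} and checking that each vanishes identically, so that by Theorem \ref{t:teoham} the KHK map $\Phi_h$ preserves $H$. This is the hypothesis under which the Lie-symmetry question is posed, and it is a finite symbolic check.

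Next I would write out the Hamiltonian vector field $X=B\nabla H$, with $B$ as in \eqref{e:B}, in the coordinates $(x,y,z,w,r,s)$, form $\mathrm DX(\mathbf x)$, and compute a closed form for
\[
\Phi_h(\mathbf x)=\mathbf x+h\Bigl(I-\tfrac12 h\,\mathrm DX(\mathbf x)\Bigr)^{-1}X(\mathbf x).
\]
The structural point, already visible in the sub-case \eqref{e:novaphir6}, is that the Neumann-type expansion of $\bigl(I-\tfrac12 h\,\mathrm DX\bigr)^{-1}X$ terminates, so the components of $\Phi_h$ are polynomials in $h$ of low degree; I would verify this directly in order to keep all later expressions polynomial. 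Then I would form the defect
\[
\Delta(\mathbf x,h):=X_{|\Phi_h(\mathbf x)}-\mathrm D\Phi_h(\mathbf x)\,X(\mathbf x),
\]
which by the compatibility equation \eqref{e:Lie-sym-char} vanishes identically in $\mathbf x$ (for $h\neq 0$) precisely when $X$ is a Lie symmetry of $\Phi_h$.

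The heart of the argument is to simplify $\Delta$, and I expect it to collapse to a very compact expression, namely a fixed constant multiple of
\[
b_4\,b_9\,(x+y)^2\,h^3\,(1,-1,0,0,0,0)^t,
\]
which, specialized to $b_4=a$, $b_9=b$ and all remaining $b_i=0$, reproduces the defect computed in the proof of Proposition \ref{p:propor6} for the Hamiltonian \eqref{e:hamiltoniamesnou}. Granting this, both directions of the equivalence follow at once: since $(x+y)^2 h^3$ is not the zero polynomial, $\Delta\equiv 0$ if and only if $b_4 b_9=0$, i.e. $X$ fails to be a Lie symmetry of $\Phi_h$ exactly when $b_4 b_9\neq 0$. (Should the reduced $\Delta$ not be literally this clean for every value of the other eight parameters, the same conclusion is reached by isolating one component of $\Delta$ and exhibiting a monomial in $(x,y,z,w,r,s)$ whose coefficient equals $b_4 b_9$ times a polynomial in the $b_i$ that does not vanish on $\{b_4 b_9\neq 0\}$; the specialization to \eqref{e:hamiltoniamesnou} already shows such a coefficient is nonzero.)

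The main obstacle is purely computational: assembling $\Phi_h$ and $\mathrm D\Phi_h$ for the ten-parameter family in dimension six and reducing $\Delta$ to the compact form above. The cancellations producing this form rely essentially on the preservation identity from the preliminary step — a large number of terms of $\Delta$ vanish precisely because $H$ is an exact invariant of $\Phi_h$ — and on the termination of the $h$-expansion; in practice the bookkeeping is best organized, and in any event checked, with a computer algebra system, as done elsewhere in the paper.
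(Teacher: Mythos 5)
Your proposal is correct and follows essentially the same route as the paper: the paper's proof also writes out $\Phi_h$ explicitly and computes the defect $X_{|\Phi_h}-{\mathrm D}\Phi_h\,X=\bigl(\tfrac{1}{4}b_4b_9(x+y)^2h^3,\,-\tfrac{1}{4}b_4b_9(x+y)^2h^3,\,0,0,0,0\bigr)^t$, exactly the compact form you anticipated, from which both directions of the equivalence are immediate.
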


\begin{proof}
The KHK map associated to the vector field $X$, whose expression we omit but which is straightforward to obtain using \eqref{e:ham6}, is $\Phi_h=(\Phi _{1,h},\Phi _{2,h},\Phi _{3,h},\Phi _{4,h},\Phi _{5,h},\Phi _{6,h})$ with

\vspace{-0.5cm}

\begin{flalign*}
\Phi_{1,h}=&\left( h(2 h  b_{2}-12   b_{1}) x^{2}+h(4 h  b_{2}-24   b_{1}) x y -h(h^{2} b_{9}  b_{4}
+8   b_{2}) x z
+h(2 h  b_{2}-12   b_{1}) y^{2} \right.&&\\
&-h(h^{2} b_{9}  b_{4}+8   b_{2}) y z
 +h(2 h b_{9}  b_{4}+4  b_{3} b_{4}-4  b_{5} b_{6}) z^{2}
-4 h  b_{4} z r -4 h  b_{6} z s &&\\
&+(-8 h  b_{7}+4 ) x-8 h  b_{7} y -h(2 h b_{9} b_{5} +4   b_{8}) z -4 h  w -4 h b_{5}  r -h(2 h b_{9} +4  b_{3} ) s&&\\
&\left. -4 h b_{10} \right)/4,
\end{flalign*}

\vspace{-1cm}

\begin{flalign*}
\Phi_{2,h}=&\left(h(-2 h  b_{2}+12   b_{1}) x^{2}+h(-4 h  b_{2}+24   b_{1}) x y +h(h^{2} b_{9}  b_{4}+8   b_{2}) x z +h(-2 h  b_{2}+12   b_{1}) y^{2}\right.&&\\
&+h(h^{2} b_{9}  b_{4}+8   b_{2}) y z +h(-2 h b_{9}  b_{4}-4  b_{3} b_{4}+4  b_{5} b_{6}) z^{2}+4 h  b_{4} z r +4 h  b_{6} z s +8 h  b_{7} x&&\\
&\left.  +(8 h  b_{7}+4 ) y +h(2 h b_{9} b_{5} +4  b_{8}) z +4 h  w +4 h b_{5}  r +h(2 h b_{9} +4  b_{3} ) s +4 h b_{10} \right)/4,
\end{flalign*}

\vspace{-1cm}

\begin{flalign*}
\Phi_{3,h}&=-h  x -h  y +z,&&\\
\end{flalign*}

\vspace{-1.5cm}

\begin{flalign*}
\Phi_{4,h}=&\left(h( [b_{3}  b_{4}- b_{5}  b_{6}]h+2   b_{2}) x^{2}+h([2  b_{3}  b_{4}-2  b_{5}  b_{6}]h+4  b_{2}) x y  +h(4  b_{5} b_{6}-4  b_{3} b_{4}) x z \right. &&\\
&+2 h  b_{4} x r +2 h  b_{6} x s +h([b_{3}  b_{4}- b_{5}  b_{6}]h+2   b_{2}) y^{2}+h(4  b_{5} b_{6}-4  b_{3} b_{4}) y z +2 h  b_{4} y r &&\\
&\left.+2 h  b_{6} y s +h(-h b_{9} b_{5} +2   b_{8}) x +h(-h b_{9} b_{5} +2   b_{8}) y +4 h b_{9} b_{5} z +2  w +2 h b_{9}  s\right)/2,
\end{flalign*}

\vspace{-1cm}

\begin{flalign*}
\Phi_{5,h}=&\left(h^{2}  b_{6} x^{2}+2 h^{2}  b_{6} x y -2 h b_{6} x z +h^{2}  b_{6} y^{2}-2 h b_{6} y z + h(h b_{9} -2  b_{3}) x  + h(h b_{9} -2  b_{3}) y \right.&&\\
&\left.-2 h b_{9} z +2 r\right)/2,
\end{flalign*}

\vspace{-1cm}

\begin{flalign*}
\Phi_{6,h}=&\left(-h^{2}  b_{4} x^{2}-2 h^{2}  b_{4} x y +2 h b_{4} x z-h^{2}  b_{4} y^{2}+2 h b_{4} y z +2 h b_{5} x  +2 h b_{5} y +2 s\right)/2.&&\\
\end{flalign*}

\vspace{-0.5cm}

A straightforward computation shows that 
$$X{|\Phi_h}-{\mathrm D}\Phi_h\,X=
\left( \frac{b_{4} b_{9}  \left(x +y \right)^{2}h^{3} }{4}, -\frac{b_{4} b_{9}  \left(x +y \right)^{2}h^{3} }{4} , 0 , 0 , 0 , 0 \right)^t,
$$
and the result follows.
\end{proof}

\end{document}